\documentclass[11pt]{article}
\usepackage[T1]{fontenc}
\usepackage[utf8]{inputenc}

\usepackage[english]{babel}
\usepackage{geometry}
\usepackage{nicefrac}
\usepackage{amsmath, amssymb, amscd, amsthm, amsfonts}
\usepackage[hidelinks]{hyperref}
\usepackage{hyperref,natbib}
\usepackage{xcolor}
\usepackage{tcolorbox}
\usepackage{algorithm}
\usepackage{algorithmic}
\usepackage{float}
\usepackage{multirow}
\usepackage{caption}
\usepackage{booktabs, tabularx}
\usepackage{color}
\usepackage{tikz}
\usetikzlibrary{math}
\usepackage{pgfplots}
\usepackage{booktabs}
\usepackage{graphicx}
\usepackage{placeins}
\usepackage{booktabs,multirow,tabularx,array}
\hypersetup{
    hidelinks
}
\pgfplotsset{compat=1.17}

\renewcommand{\arraystretch}{1.1}
\newcolumntype{C}{>{\centering\arraybackslash}X}

\newtcolorbox{boxA}{
    fontupper = \bfseries,
    boxrule = 1.5pt,
    colframe = black,
    colback = white 
}
\hypersetup{
  pdftitle={},
  pdfauthor={},
  pdfsubject={},
  pdfkeywords={}
}
\makeatletter

\def\moverlay{\mathpalette\mov@rlay}
\def\mov@rlay#1#2{\leavevmode\vtop{%
   \baselineskip\z@skip \lineskiplimit-\maxdimen
   \ialign{\hfil$\m@th#1##$\hfil\cr#2\crcr}}}
\newcommand{\charfusion}[3][\mathord]{
    #1{\ifx#1\mathop\vphantom{#2}\fi
        \mathpalette\mov@rlay{#2\cr#3}
      }
    \ifx#1\mathop\expandafter\displaylimits\fi}
\makeatother

\theoremstyle{plain}
\newtheorem{thm}{\protect\theoremname}[section]
\theoremstyle{definition}

\theoremstyle{plain}
\newtheorem{lemma}[thm]{\protect\lemmaname}
\newtheorem{corollary}[thm]{\protect\corollaryname}

\newtheorem{example}[thm]{Example}
\newtheorem{proposition}[thm]{\protect\propositionname}
\newtheorem{remark}[thm]{\protect\remarkname}
\newtheorem{claim}[thm]{\protect\claimname}

\newcommand{\OPT}{{\operatorname{OPT}}}

\newcommand{\N}{{\mathcal{N}}}
\newcommand{\cov}{{\mathrm{cov}}}
\newcommand{\het}{{\mathrm{het}}}
\newcommand{\hhom}{{\mathrm{hom}}}
\newcommand{\cent}{{\mathrm{cent}}}

\makeatother

\usepackage{babel}
\usepackage{graphicx}
\providecommand{\definitionname}{Definition}
\providecommand{\lemmaname}{Lemma}
\providecommand{\theoremname}{Theorem}
\providecommand{\corollaryname}{Corollary}
\providecommand{\questionname}{Question}
\providecommand{\notename}{Note}
\providecommand{\propositionname}{Proposition}
\providecommand{\remarkname}{Remark}
\providecommand{\claimname}{Claim}

\begin{document}
\title{Stochastic Multi-Robot Monitoring on Graphs under Markovian Mobility}
\author{%
Walid Ben-Ameur \qquad Tijani Chahed \qquad Shamisa Nematollahi\thanks{Emails:
\texttt{\{walid.benameur,tijani.chahed,shamisa.nematollahi\}@telecom-sudparis.eu}}\\[0.5em]
\small Télécom SudParis, Institut Polytechnique de Paris, SAMOVAR,
Palaiseau, France%
}

\date{}


\maketitle

\begin{abstract}
We study a stochastic multi-robot monitoring problem on a connected graph $G=(V,E)$, where each robot moves according to a Markov chain on $G$ and monitors the closed neighborhood of its current vertex. The performance of $r$ robots is evaluated in steady state via two objectives: average-case coverage (the expected number of covered vertices) and worst-case coverage (the minimum coverage probability over all vertices). We consider three models: independent homogeneous strategies, where all robots share the same stationary distribution; independent heterogeneous strategies, where robots use different stationary distributions; and centralized strategies, allowing arbitrary correlations between robot locations.

For the heterogeneous model, we prove that maximizing average coverage is NP-hard even for two robots, and that replicating an easy-to-compute optimal homogeneous strategy yields a \(\left(1-\left(1-\frac{1}{r}\right)^r\right)\)-approximation for both objective functions in the heterogeneous setting; moreover, 
no polynomial-time algorithm can achieve a ratio better than \(1-\nicefrac{1}{e}\) unless \(\text{P}=\text{NP}\).

Centralized strategies can exploit correlations to reduce redundancy. We develop a hierarchy of approximation factors:
for any positive integer \(r'\le r\), writing \(r=hr'+b\) with \(0\le b<r'\), block coordination yields a \(1-\left(1-\frac{r'}{r}\right)^h\left(1-\frac{b}{r}\right)\)
approximation for both objectives. We also establish NP-hardness and a tight \(1-\nicefrac{1}{e}\) inapproximability bound. Moreover, we prove diminishing-returns properties with respect to the number of robots: a non-increasing-ratio property holds for the average-case objective in all settings, but not for the heterogeneous worst-case objective. These results provide a unified complexity and approximation landscape for stochastic graph monitoring and quantify the benefit of heterogeneity and coordination in steady-state multi-robot monitoring.
\end{abstract}



\section{Introduction}
\label{sec:intro}
Large facilities such as industrial plants, tunnels, warehouses, and transportation networks must often be monitored continuously for rare but critical events, such as a fire starting in a remote corridor, an intruder entering a restricted area, or a sensor anomaly indicating a local failure. Since a robot stationed at a fixed location can observe only a limited surrounding area, some regions may remain poorly monitored. To use a limited number of robots more efficiently, it is natural to let them move through the environment over time, so that they collectively provide broader and more reliable coverage. This naturally leads to a stochastic multi-robot monitoring problem on a graph, where vertices represent locations of interest, edges represent feasible robot movements, and a robot observes the neighborhood of its current position. Motivated by this setting, we study the following problem. Let \(G=(V,E)\) be a connected undirected graph with \(n\) vertices, representing the environment to be monitored by \(r\) robots.
We adopt a \emph{wait-action} mobility model: during each time step, a may either move to an adjacent vertex or remain at its current vertex, represented by a self-loop at every vertex.
Each edge traversal requires one time slot.
A robot  located at a vertex $u$ can observe all vertices in the closed neighborhood $\N[u]$. 
We assume that each robot evolves according to a Markov chain whose transition matrix is chosen subject to the mobility constraints, and we denote its stationary distribution by \(\mu_i\).
Thus, the coverage probability of a vertex $v$ by  the $i$-th robot is
$\cov(v,\mu_i) = \sum_{u \in \N[v]} \mu_i(u)$.
Assuming that the robots move independently, the coverage probability of a vertex $v$ is
$\cov(v, \mu_1,...,\mu_r) = 1 - \prod_{i=1}^r \bigl(1 - \cov(v,\mu_i)\bigr)$.
We study two objectives: \emph{average coverage}, defined as the expected number of covered vertices, \(\sum_{v\in V}\cov(v,\mu_1,\ldots,\mu_r)\), \emph{worst-case coverage}, defined as the minimum monitoring probability \(\min_{v\in V}\cov(v,\mu_1,\ldots,\mu_r)\).
When robots are allowed distinct stationary distributions, we call this the \emph{independent heterogeneous} setting. Let $F_{\mathrm{het}}(r,\mu_1,\ldots,\mu_r)$ denote the average coverage. \sloppy{ Our objective is $F^*_{\mathrm{het}}(r)=\max_{\mu_i\in\mathcal{P}} F_{\mathrm{het}}(r,\mu_1,\ldots,\mu_r)$, where $\mathcal{P}$ is the feasible polytope for each $\mu_i$ (defined later).} Similarly, let $W_{\mathrm{het}}(r,\mu_1,\ldots,\mu_r)$ denote worst-case coverage, with optimum $W^*_{\mathrm{het}}(r)$.

When the stationary distributions of the robots are required to be identical, say $\mu$, while their movements remain independent, we refer to this setting as the \emph{independent homogeneous} case. 
The average (resp.\ worst-case) objective function is then denoted by 
$F_{\mathrm{hom}}(r, \mu)$ (resp.\ $W_{\mathrm{hom}}(r, \mu)$), 
and the corresponding optimal values are denoted by 
$F^*_{\mathrm{hom}}(  r)$ and $W^*_{\mathrm{hom}}(  r)$.
Observe that 
$F_{\mathrm{hom}}(r, \mu) = F_{\mathrm{het}}(r, \mu_1 = \mu, \ldots, \mu_r = \mu)$, 
and an analogous relation holds for the worst-case objective function.

The heterogeneous strategy offers greater flexibility than the homogeneous one. 
A natural question is to quantify the benefit of allowing heterogeneity, or equivalently, the loss incurred when restricting to homogeneous strategies. 
Both strategies can be compared to a \emph {centralized} approach that removes the independence restriction and allows an arbitrary joint steady-state distribution over robot configurations. 
This permits correlations among robot locations and may reduce redundancy in monitoring. 
Similarly to the previous notation, the optimal objective values can be denoted by $F^*_{\mathrm{cent}}(  r)$ and $W^*_{\mathrm{cent}}(  r)$.


\subsection{Our Contribution and Technical Overview}
We now review the main results and sketch the techniques used to establish them.
Table~\ref{tab:results-landscape} summarizes the resulting complexity and approximation landscape across the three strategy classes and the two objectives.

\begin{table}[t]
\centering
\renewcommand{\arraystretch}{1.10}
\scalebox{0.9}{%
\begin{tabular}{l|cc|cc}
\hline
\multirow{2}{*}{Model}
&
\multicolumn{2}{c}{Average coverage}
&
\multicolumn{2}{c}{Worst-case coverage}
\\
\cline{2-5}
&
Upper bound
&
Lower bound
&
Upper bound
&
Lower bound
\\
Homogeneous
&
Convex optimization
&
--
&
Linear programming
&
--
\\
Heterogeneous
&
\(\alpha_r-\varepsilon\)
&
\((1-1/e)\)
&
\(\alpha_r\)
&
\((1-1/e)\)
\\
Centralized
&
\(\alpha_{r,r'}-\varepsilon\)
&
\((1-1/e)\)
&
\(\alpha_{r,r'}\)
&
\((1-1/e)\)
\\

\end{tabular}
}
\caption{Summary of the main algorithmic and hardness results. Here
\(\alpha_r=1-(1-1/r)^r\), and for \(r=hr'+b\), \(0\le b<r'\),
\(\alpha_{r,r'}=1-\left(1-\frac{r'}{r}\right)^h\left(1-\frac{b}{r}\right)\).}
\label{tab:results-landscape}
\end{table}

For the independent homogeneous model, Section~\ref{sec:hom} shows that the worst-case objective \(W^*_{\hom}(r)\) can be computed in polynomial time by linear programming and the average-case objective \(F^*_{\hom}(r)\) by convex optimization. 
We also prove that both \(F^*_{\hom}(r)\) and \(W^*_{\hom}(r)\) satisfy a decreasing-ratio property with respect to \(r\), thereby establishing a diminishing-returns phenomenon in the homogeneous setting; see Proposition~\ref{prop:hom_star_concavity}.
We then exploit graph symmetries to reduce the dimension of the optimization problem; see Proposition~\ref{thm:orbit-symmetric}. A connection with fractional domination, when the polytope $\mathcal{P}$ is the standard simplex, is established in Proposition \ref{pro:homfrac}.


For the independent heterogeneous model, Section~\ref{sec:het} reveals a sharp contrast between exact optimization and approximation. 
Theorem~\ref{th:r=2} proves that computing \(F^*_{\het}(r)\) is NP-hard even for \(r=2\), via a reduction from \textsc{Densest \(k\)-Subgraph}. 
On the positive side, Proposition~\ref{prop:exist-poly-fixed-r} shows that \(F^*_{\het}(r)\) is polynomial-time solvable whenever \(r\) is fixed and the feasible polytope \(\mathcal P\) has polynomially many extreme points; the proof uses the multilinearity of the objective and the fact that an optimum is attained at extreme points.
We now go beyond NP-hardness and prove approximation hardness for both objectives. 
Theorems~\ref{thm:inapp-f-het} and~\ref{thm:inapp-het-worst} show that \(F^*_{\het}(r)\) and \(W^*_{\het}(r)\) are APX-hard; no polynomial-time algorithm can achieve an approximation ratio better than \(1-\nicefrac{1}{e}\), unless \(\mathrm{P}=\mathrm{NP}\). 
We then compare heterogeneous and homogeneous monitoring.
Theorem~\ref{thm:het-avg-alpha} and Corollary~\ref{cor:het-worst-alpha} show that replicating an optimal homogeneous strategy yields \(\alpha_r\)-approximation for both heterogeneous objectives, for any polytope \(\mathcal P\), where \(\alpha_r := 1-\left(1-\frac{1}{r}\right)^r\). The proof is based on a pointwise comparison of coverage probabilities.
Finally, Proposition~\ref{prop:starconcavity-het-f} proves a diminishing return property for \(F^*_{\het}(r)\), whereas Example~\ref{exp:concavity-het-worst} shows that the analogous property fails for \(W^*_{\het}(r)\).

For the centralized model, Section~\ref{sec:centralized} allows arbitrary joint steady-state distributions over robot configurations and therefore captures the full benefit of correlation. 
By a direct reduction from \textsc{Dominating Set}, both centralized objectives are NP-hard.
On the algorithmic side, we introduce a hierarchy of approximation factors based on block coordination: 
for any positive integer \(r'\le r\), writing \(r=hr'+b\) with \(0\le b<r'\), we partition the robots into \(h\) groups of size \(r'\) and one remaining group of size \(b\), allow arbitrary coordination within each group of size \(r'\), and repeat the same \(r'\)-robot strategy independently across the full groups, together with the induced strategy for the remaining group. 
This yields an \(\alpha_{r,r'}\)-approximation for both centralized objectives, where \(\alpha_{r,r'}:=1-\left(1-\frac{r'}{r}\right)^h \left(1-\frac{b}{r}\right)\) (Theorem~\ref{thm:cent-block-approx}), and recovers the homogeneous \(\alpha_r\)-approximation as the special case \(r'=1\).
This approximation guarantee is again best possible up to \(1-\nicefrac{1}{e}\); Proposition~\ref{prop:cent-avg-inapprox} and Theorem~\ref{thm:cent-worst-inapprox}.
We also prove that both \(F^*_{\cent}(r)\) and \(W^*_{\cent}(r)\) satisfy a decreasing-ratio property; Proposition~\ref{prop:cent-star-concavity}.

Taken together, the results summarized in Table~\ref{tab:results-landscape} provide a complete picture of the tradeoff between tractability, approximation, and coordination.
From a technical standpoint, the analysis combines concavity arguments for the homogeneous model, multilinear optimization over polytopes for the heterogeneous model, and pointwise coverage comparisons that allow us to transfer homogeneous solutions to the heterogeneous. 

From an implementation standpoint, our results support the use of simple non-centralized homogeneous strategies: they achieve a performance guarantee within a factor of \(1-\nicefrac{1}{e}\) of the fully centralized optimum, while being substantially easier to deploy.

\subsection{Related Work}
Our problem is closely connected to several classical graph covering and domination problems in the deterministic limit. If each robot is assigned deterministically to a single vertex, equivalently if each $\mu_i$ is restricted to be an extreme point of the standard simplex, then the average-case objective reduces to maximizing the size of the dominated set $|\N[S]|$ with $|S|\le r$. This is exactly a maximum-coverage problem on the family of closed neighborhoods, and can also be viewed as a budgeted or partial dominating-set formulation on graphs. Likewise, asking whether the worst-case coverage is equal to~1 is exactly the dominating-set problem, since it asks whether the selected robot positions dominate every vertex. 
Maximum coverage (and maximum domination on closed neighborhoods) admits the classical greedy ratio \(1-\nicefrac{1}{e}\), and this factor is optimal unless \(\text{P}=\text{NP}\)~\cite{feige1998threshold,MIYANO}. Dominating set admits logarithmic-factor approximations and this logarithmic behavior is essentially unavoidable~\cite{feige1998threshold}. Connected dominating set admits approximation guarantees \(H(\Delta)+2\) and \(2H(\Delta)+2\), where $H(\Delta)$ denotes the $\Delta$-th harmonic number and \(\Delta\) is the maximum degree, while the partial and budgeted connected variants admit \(O(\log \Delta)\)- and \(\frac{1}{13}(1-\nicefrac{1}{e})\)-approximations, respectively~\cite{guha1998connected,khuller2020optimal}. Unlike these deterministic formulations, our paper studies the stochastic setting; robots move over time and coverage is quantified through stationary probabilities. Another related line of research considers barrier coverage with mobile sensors~\cite{bhattacharya2009optimal}, which studied the problem of optimally relocating a set of mobile sensors to the boundary of a planar region in order to achieve barrier coverage while minimizing either the maximum or the total movement of the sensors.

The art-gallery problem asks for a minimum number of guards whose visibility regions cover the whole environment. In the mobile variant, the watchman-route problem, one replaces static guards with a moving guard and asks for a route from which every point becomes visible~\cite{urrutia2000artgallery,chin1988optimum}. The point-guard art gallery problem in simple polygons admits an \(O(\log \OPT)\)-approximation under the standard integer-coordinate and general-position assumptions, while guarding polygons with holes is logarithmically inapproximable~\cite{bonnet2017artgallery,eidenbenz2001inapprox}. For the watchman route, simple polygons admit polynomial-time exact algorithms, whereas polygons with holes yield an NP-hard problem with an \(O(\log^2 n)\)-approximation and logarithmic-factor inapproximability~\cite{carlsson1999watchman,mitchell2013watchman}.

Patrolling asks a robot team to repeatedly visit a set of locations so as to keep the time between consecutive visits small. In the graph version, the environment is modeled by a graph and robots move along edges; the goal is to minimize measures such as idleness, refresh time, or latency, that is, the worst or average delay between two visits to the same vertex or region~\cite{chevaleyre2004theoretical,pasqualetti2012cooperative}.
 On the algorithmic side, cooperative patrolling admits polynomial-time algorithms on paths, on trees when the number of robots is fixed, and becomes NP-hard on cyclic graphs, where an \(8\)-approximation is known for minimum refresh time~\cite{pasqualetti2012cooperative}. 
Other graph-patrolling variants include frequency-constrained patrolling, where vertices must be revisited at prescribed rates, and patrolling with distributed or route-based policies~\cite{elmaliach2009multirobot,portugal2014finding}. In persistent monitoring, the objective is to control staleness or weighted latency of observations~\cite{smith2010currency,smith2012persistent,alamdari2014persistent}, and in stochastic surveillance, robot motion is randomized and optimized through Markov-chain criteria such as hitting-time or entropy-based objectives~\cite{grace2005stochastic,srivastava2009stochastic,patel2015robotic,george2019markov}.

\paragraph{Relation to submodular maximization.}
Under simplex constraints, the deterministic average-case special case of our model reduces exactly to Maximum Domination, equivalently, coverage by closed neighborhoods, and hence falls within the classical monotone submodular framework. In that regime, the standard greedy algorithm yields the optimal $(1-\nicefrac{1}{e})$ approximation ratio~\cite{nemhauser1978submodular}.
In the stochastic setting considered here, in the homogeneous model, the relevant structure is concavity in the stationary distribution; in the heterogeneous average-case model, the objective is multilinear and exhibits a continuous diminishing-returns property~\cite{bian2017continuous}; and in the centralized model, the approximation analysis relies on pointwise coverage comparisons via averaged marginals. By contrast, the worst-case objectives are max--min in nature, and the heterogeneous worst-case objective does not even satisfy the non-increasing-ratio property in general.
Thus, while the special case for the deterministic simplex-constrained average-case falls within the classical submodular maximization framework, our analysis also treats any feasible polytopes, worst-case objectives, and centralized strategies. The approximation guarantees proved here, therefore, do not follow exclusively from the greedy analysis for submodular maximization, but require additional arguments tailored to the stochastic monitoring framework.

\newcommand{\bidir}[1]{\ensuremath{\tilde{#1}}}

\section{Preliminaries}
\label{sec:prelim}

Let $\Delta_V$ be the standard simplex defined as 
$\Delta_V := \left\{ \mu \in \mathbb{R}^V_+ : \sum_{v \in V} \mu(v) = 1 \right\}$. The stationary distribution $\mu_i$ of the $i$-th robot  clearly lies in $\Delta_V$. 
{Let $\mathcal{P}$ be a polyhedral set to which $\mu_i$ belongs.  Note that $\mathcal{P} \subseteq \Delta_V$.}
Depending on the application, one may impose lower and upper bounds on $\mu_i(v)$, i.e., ${\mathrm{lb}}_v \le \mu_i(v) \le \mathrm{ub}_v$, resulting in a more restricted polytope $\mathcal{P}$.

Under the heterogeneous model, one should, in principle, define for each robot $i$ a transition matrix $T_i$ governing its motion, such that $\mu_i$ is a stationary distribution of $T_i$. In other words, the following conditions must hold:
$\mu_i^\top T_i = \mu_i^\top$,
$\sum_{v \in V} T_i(u,v) = 1 \;\; \forall u \in V$, and 
$T_i(u,v) = 0 \;\; \text{if } (u,v) \notin E$. 
Nevertheless, we can avoid incorporating the transition matrix into the optimization problems under study, thanks to the following lemma. 

\begin{lemma}[Stationary realizability]
\label{lem:realizability-positive}
Let $G=(V,E)$ be a connected undirected graph and assume $(v,v)\in E$ for all $v\in V$ (self-loops are allowed). Fix any probability distribution $\mu$ on $V$ such that $\mu(v)>0$ for all $v\in V$. Then there exists a transition matrix $T$ supported on $E$ (i.e., $T(u,v)=0$ if $(u,v)\notin E$) satisfying:
\begin{enumerate}
    \item $\mu^\top T = \mu^\top$ (i.e., $\mu$ is stationary for $T$),
    \item The Markov chain with transition matrix $T$ is irreducible and aperiodic,
    \item Consequently, $\mu$ is the unique stationary distribution of $T$; and regardless of the initial distribution, the distribution at time $t$ will converge to $\mu$ as $t \rightarrow \infty$.
\end{enumerate}
\label{lem:markov_convergence}
\end{lemma}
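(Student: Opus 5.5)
We construct $T$ explicitly as a Metropolis--Hastings chain on $G$ whose proposal is a simple random walk and whose target is $\mu$. For $u\neq v$ with $\{u,v\}\in E$, set
\[
T(u,v)=\frac{1}{2d_{\max}}\min\Bigl\{1,\frac{\mu(v)}{\mu(u)}\Bigr\},
\]
where $d_{\max}$ is the maximum number of non-loop neighbours of any vertex. Set $T(u,v)=0$ if $(u,v)\notin E$. The remaining mass goes to the self-loop: $T(u,u)=1-\sum_{v\neq u}T(u,v)$.

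Each off-diagonal entry is at most $1/(2d_{\max})$, so the off-diagonal row sum is at most $1/2$ and hence $T(u,u)\ge 1/2$. Thus $T$ is a stochastic matrix, supported on $E$ because $(u,u)\in E$ by assumption.

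\textbf{Stationarity.} We verify the detailed balance equations $\mu(u)T(u,v)=\mu(v)T(v,u)$. For an edge $u\neq v$, both sides equal $\min\{\mu(u),\mu(v)\}/(2d_{\max})$. For all other pairs, both sides are zero or the pair is diagonal. Summing detailed balance over $u$ gives $\sum_u \mu(u)T(u,v)=\mu(v)\sum_u T(v,u)=\mu(v)$, i.e. $\mu^\top T=\mu^\top$. This proves item~1.

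\textbf{Irreducibility and aperiodicity.} Positivity of $\mu$ is essential here, and this is the only delicate step. Because $\mu(v)>0$ for all $v$, every ratio $\mu(v)/\mu(u)$ is strictly positive, so $T(u,v)>0$ for every edge $\{u,v\}$. Hence the transition graph of $T$ contains all edges of $G$. Since $G$ is connected, every vertex reaches every other, and $T$ is irreducible. Since $T(u,u)\ge 1/2>0$, every state has period $1$, so $T$ is aperiodic. This proves item~2. Without $\mu>0$, a zero-mass vertex could not be entered, which is why the hypothesis is needed.

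\textbf{Uniqueness and convergence.} Item~3 follows from the standard theory of finite Markov chains (Perron--Frobenius, or the convergence theorem for finite chains). A finite irreducible chain has a unique stationary distribution, which must therefore be $\mu$. A finite irreducible aperiodic chain is ergodic, so $\nu^\top T^t\to\mu^\top$ for every initial distribution $\nu$. We cite this classical result rather than reprove it.
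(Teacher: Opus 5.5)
Your proof is correct and follows essentially the same Metropolis--Hastings route as the paper. Stationarity comes from detailed balance, irreducibility from $\mu>0$ making every edge transition positive, aperiodicity from positive self-loop probabilities, and uniqueness and convergence from the standard finite-chain theorem. The only difference is the proposal kernel: yours is a symmetric, lazy max-degree walk, so no Hastings ratio $Q_{ji}/Q_{ij}$ is needed and $T(u,u)\ge 1/2$ is immediate, whereas the paper proposes uniformly over $\mathcal{N}[u]$ and carries the Hastings correction.
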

\begin{proof}
The construction follows the classical Metropolis--Hastings method \cite{metropolis1953equation,hastings1970monte}, adapted to finite graphs.

\noindent\textbf{Step 1: Proposal kernel.} Define a transition matrix $Q$ supported on $E$ by choosing uniformly among the closed neighborhood:
\[
Q_{ij} =
\begin{cases}
\frac{1}{|\mathcal{N}[i]|} & \text{if } j \in \mathcal{N}[i],\\
0 & \text{otherwise.}
\end{cases}
\]
Since $(i,i)\in E$, we have $Q_{ii}>0$ for all $i$.
Because $G$ is connected, the chain induced by $Q$ is irreducible, and due to self-loops it is aperiodic.

\noindent\textbf{Step 2: Metropolis correction.}
We modify $Q$ to obtain a new transition matrix $T$ whose stationary distribution is $\mu$.
For $i\neq j$ with $Q_{ij}>0$, define
\(
T_{ij}
=
Q_{ij}
\cdot
\min\!\left\{1,\,
\frac{\mu_j Q_{ji}}{\mu_i Q_{ij}}
\right\},
\)
and set
\(
T_{ii} = 1 - \sum_{j\neq i} T_{ij}.
\)
By construction, $T_{ij}\ge 0$ and each row sums to one, hence $T$ is a valid transition matrix supported on $E$.

\noindent\textbf{Step 3: Stationarity via detailed balance.} For every pair $i\neq j$ with $Q_{ij}>0$, the above definition ensures the detailed balance relations
\(
\mu(i) T_{ij} = \mu(j) T_{ji}.
\)
Summing over $i$ yields
\(
\sum_{i\in V} \mu(i) T_{ij} = \mu(j),
\)
and therefore $\mu^\top T = \mu^\top$. Thus $\mu$ is stationary for $T$.

\noindent\textbf{Step 4: Irreducibility and aperiodicity.} Since $G$ is connected and $\mu(v)>0$ for all $v$, every edge in $E$ receives strictly positive transition probability under $T$. 
Hence every vertex can be reached from every other vertex with positive probability, so the chain is irreducible.
Moreover, $T_{ii} \ge Q_{ii} > 0$ for all $i$, implying that every state has period one. 
Thus the chain is aperiodic.

\noindent\textbf{Step 5: Uniqueness.} For a finite irreducible Markov chain, the stationary distribution is unique (\cite[Theorem 1.7]{levin2017markov}). 
Therefore $\mu$ is the unique stationary distribution of $T$.
\end{proof}

In some cases, the mobility pattern of a robot may be further constrained. For instance, one can impose linear constraints on the transition probabilities, 
such as lower and upper bounds of the form ${\mathrm{lb}}_{u,v} \le T_i(u,v) \le \mathrm{ub}_{u,v}$. Nevertheless, we can again bypass the matrix $T_i$ and avoid the nonlinearities arising from the constraint $\mu_i^\top T_i = \mu_i^\top$ by considering the flow variables $f_i(u,v) := \mu_i(u) T_i(u,v)$, and 
adding flow conservation  constraints. 
This leads to the following polyhedral set involving $\mu_i$ and  $f_i$ variables:
\[
\mathcal{P}(\mu_i,f_i)
=
\left\{
(\mu_i,f_i):
\begin{aligned}
&\sum_{v\in V} f_i(u,v) = \mu_i(u) && \forall u\in V,\\
&\sum_{u\in V} f_i(u,v) = \mu_i(v) && \forall v\in V,\\
&f_i(u,v)=0 && \forall (u,v)\notin E,\\
&f_i(u,v)\ge 0 && \forall (u,v)\in E,\\
&\mathrm{lb}_{u,v}\mu_i(u)\le f_i(u,v)\le \mathrm{ub}_{u,v}\mu_i(u)
&& \forall (u,v)\in E,\\
&\mu_i\in\Delta_V,\\
&\mathrm{lb}_v\le \mu_i(v)\le \mathrm{ub}_v && \forall v\in V.
\end{aligned}
\right\}
\]
The polytope $\mathcal{P}$ can be viewed as the projection of $\mathcal{P}(\mu_i, f_i)$ onto the space of variables $\mu_i$. Observe that the transition matrix $T_i$ can be readily reconstructed from $(\mu_i, f_i)$ by defining $T_i(u,v) = \frac{f_i(u,v)}{\mu_i(u)}$ whenever $\mu_i(u) \neq 0$. Consequently, in the remainder of the paper, we omit explicit reference to the transition matrices and focus exclusively on stationary distributions $\mu_i$ that lie within a given polytope $\mathcal{P}$.


\section{Independent Homogeneous Monitoring}
\label{sec:hom}
In this section we study the {independent homogeneous} setting.
As introduced in Section \ref{sec:intro}, all $r$ robots execute the same homogeneous Markov chain on $G$ and, consequently, share a common stationary distribution $\mu \in \mathcal{P}$.
Robots move independently of one another, and coverage is evaluated in steady state.  
The coverage of a vertex $v$ is then simply given by $\cov(v, \mu,...,\mu) = 1-\left(1 - \sum_{u \in \mathcal{N}[v]} \mu(u)\right)^r = 1 - (1-\cov(v,\mu))^r  = \phi_r(\cov(v,\mu) )$, we define \(\phi_r(p):=1-(1-p)^r\).

First, we consider the average-case objective function, where the goal is to maximize the function
$F_{\mathrm{hom}}(r,\mu) := \sum_{v \in V} \phi_r\bigl(\cov(v, \mu)\bigr)$.
Observe that $F_{\mathrm{hom}}$ is concave in $\mu$, as it is a sum of concave functions. Consequently, maximizing $F_{\mathrm{hom}}$ amounts to solving a concave optimization problem, which can therefore be carried out efficiently. The optimal objective value will be denoted by $F^*_{\mathrm{hom}}(r)$.

Under the worst-case coverage function, defined for any $\mu \in \mathcal{P}$ as:
$W_{\mathrm{hom}}(r, \mu) := \min_{v \in V} \phi_r(\cov(v, \mu))$.
The function $W_{\mathrm{hom}}$ is also concave in $\mu$, as it is defined as the pointwise minimum of concave functions and can, therefore, be efficiently maximized.  
In fact, 
maximizing $W_{\mathrm{hom}}$ is equivalent to minimizing $\max_{v \in V} \left(1 - \sum_{u \in \mathcal{N}[v]} \mu(u)\right)^r$, which reduces to the minimization of $\max_{v \in V} \left(1 - \sum_{u \in \mathcal{N}[v]} \mu(u)\right)$. This, in turn, is equivalent to maximizing the minimum coverage, leading to the following linear program:
\begin{equation}
\label{eq:lp-homogeneous-prelim}
\begin{aligned}
\max_{\mu,\alpha} \qquad & \alpha\\
\text{s.t.}\qquad
& \sum_{u\in N[v]}\mu(u) \geq \alpha 
\qquad \forall v\in V,\\
& \mu\in \mathcal{P}.
\end{aligned}
\end{equation}
The optimal value $W^*_{\mathrm{hom}}( r)$ can then be efficiently computed by solving \eqref{eq:lp-homogeneous-prelim}.

\noindent We now show that $W^*_{\mathrm{hom}}(r)$ and $F^*_{\mathrm{hom}}(r)$ are star-concave in $r$, so $W^*_{\mathrm{hom}}(r)/r$ and $F^*_{\mathrm{hom}}(r)/r$ are non-increasing in $r$, implying diminishing marginal gains from additional robots.
\begin{proposition}
\label{prop:hom_star_concavity}
For every homogeneous monitoring instance and integer \(r\geq 1\),
the optimal average- and worst-case values satisfy the decreasing-ratio
property:
\[
\frac{F_{\hom}^*(r+1)}{r+1}
\leq
\frac{F_{\hom}^*(r)}{r},
\qquad
\frac{W_{\hom}^*(r+1)}{r+1}
\leq
\frac{W_{\hom}^*(r)}{r}.
\]
\end{proposition}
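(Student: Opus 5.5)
The plan is to prove a pointwise inequality for the scalar function \(\phi_r(p)=1-(1-p)^r\) and then lift it to both objectives by choosing the right feasible \(\mu\).

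\textbf{Step 1: the scalar inequality.} For every fixed \(p\in[0,1]\), the map \(r\mapsto \phi_r(p)/r\) is non-increasing on the positive integers. Since
\[
\phi_r(p)=p\sum_{k=0}^{r-1}(1-p)^k ,
\]
\(\phi_r(p)/r\) is \(p\) times the arithmetic mean of the first \(r\) terms of the non-increasing sequence \((1-p)^k\). Appending a term no larger than all previous ones cannot raise the mean. Hence
\[
\frac{\phi_{r+1}(p)}{r+1}\le\frac{\phi_r(p)}{r}.
\]
Equivalently, one can cite concavity of \(x\mapsto 1-(1-p)^x\) with value \(0\) at \(x=0\), which gives star-concavity.

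\textbf{Step 2: the average-case objective.} Let \(\mu^*\in\mathcal P\) be optimal for \(r+1\) robots; it exists by compactness of \(\mathcal P\) and continuity. Applying Step 1 with \(p=\cov(v,\mu^*)\) and summing over \(v\) gives
\[
\frac{F^*_{\hom}(r+1)}{r+1}=\sum_v\frac{\phi_{r+1}(\cov(v,\mu^*))}{r+1}\le \sum_v\frac{\phi_r(\cov(v,\mu^*))}{r}=\frac{F_{\hom}(r,\mu^*)}{r}\le\frac{F^*_{\hom}(r)}{r}.
\]
The last inequality holds because \(\mu^*\) is also feasible for the \(r\)-robot problem: the polytope \(\mathcal P\) does not depend on \(r\).

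\textbf{Step 3: the worst-case objective.} Take the same \(\mu^*\), now optimal for \(W_{\hom}(r+1,\cdot)\). Because \(\phi_{r+1}\) is non-decreasing in \(p\), the minimizing vertex is the one with smallest \(\cov(v,\mu^*)\), and this vertex, call it \(v_0\), does not depend on the exponent. Then
\[
\frac{W^*_{\hom}(r+1)}{r+1}=\frac{\phi_{r+1}(\cov(v_0,\mu^*))}{r+1}\le\frac{\phi_r(\cov(v_0,\mu^*))}{r}=\frac{W_{\hom}(r,\mu^*)}{r}\le\frac{W^*_{\hom}(r)}{r}.
\]
Alternatively, one can take the minimum over \(v\) of the pointwise inequalities from Step 1.

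\textbf{Main obstacle.} The only real content is the scalar inequality in Step 1, and the geometric-sum identity settles it. The rest is feasibility transfer, which uses only that \(\mathcal P\) is independent of \(r\), together with monotonicity of \(\phi_r\).
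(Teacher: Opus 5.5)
Your proof is correct and follows the paper's argument. You establish the pointwise inequality $\phi_{r+1}(p)/(r+1)\le\phi_r(p)/r$ on $[0,1]$, apply it at an optimizer $\mu^*$ for $r+1$ robots (which is feasible for $r$ robots because $\mathcal P$ does not depend on $r$), and then sum or minimize over $v$. The only difference is cosmetic: you derive the scalar inequality by the geometric-sum averaging argument, whereas the paper shows that $g(p)=(r+1)\phi_r(p)-r\phi_{r+1}(p)$ is nondecreasing with $g(0)=0$, using $g'(p)=r(r+1)p(1-p)^{r-1}\ge 0$.
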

\begin{proof}
Let us first prove that $\frac{\phi_r(p)}{r} \geq \frac{\phi_{r+1}(p)}{r+1}$ for $p \in [0,1]$. This is equivalent to showing that
$g(p) = (r+1)(1 - (1-p)^r) - r(1 - (1-p)^{r+1})$
is non-negative for $p \in [0,1]$. We note that $g(p)$ is a non-decreasing function since its derivative is given by
$g'(p) = r(r+1)p(1-p)^{r-1}$
which is non-negative on the interval $[0,1]$. Thus, $g(p) \geq g(0) = 0$. \\
Let $\mu^*_{r+1}$ be a distribution such that $F^*_{\mathrm{hom}}( r+1) = F_{\mathrm{hom}}(r+1, \mu^*_{r+1})$. Using the previously established inequality, we deduce that for each $v \in V$,
$\phi_r\bigl({\cov}(v, \mu^*_{r+1})\bigr) \geq \frac{r}{r+1} \, \phi_{r+1}\bigl({\cov}(v, \mu^*_{r+1})\bigr)$. Summing up over all vertices leads to $F_{\mathrm{hom}}(r, \mu^*_{r+1}) \geq  \frac{r}{r+1} F^*_{\mathrm{hom}}( r+1) $. Hence, $F^*_{\mathrm{hom}}( r) \geq  \frac{r}{r+1} F^*_{\mathrm{hom}}( r+1) $. \\
Similarly, one can show that
$\frac{W^*_{\mathrm{hom}}( r+1)}{r+1} \leq \frac{W^*_{\mathrm{hom}}( r)}{r}$
by considering a distribution $\mu^*_{r+1}$ such that
$W^*_{\mathrm{hom}}( r+1) = W_{\mathrm{hom}}(r+1, \mu^*_{r+1})$.
Using the same inequality as before, we obtain for all $v \in V$,
$\phi_r\bigl({\cov}(v, \mu^*_{r+1})\bigr)
\geq \frac{r}{r+1}\, \phi_{r+1}\bigl({\cov}(v, \mu^*_{r+1})\bigr)$.
Taking the minimum over $v$ yields
$W_{\mathrm{hom}}(r, \mu^*_{r+1}) \geq \frac{r}{r+1} W^*_{\mathrm{hom}}( r+1)$,
which implies
$W^*_{\mathrm{hom}}( r) \geq \frac{r}{r+1} W^*_{\mathrm{hom}}( r+1)$.
\end{proof}
The decreasing-ratio property in Proposition~\ref{prop:hom_star_concavity} does not imply concavity of the optimal value as a function of $r$. 
For every fixed feasible distribution $\mu \in \mathcal P$, the functions
\(
r \mapsto F_{\mathrm{hom}}(r,\mu)\) and  \(
r \mapsto W_{\mathrm{hom}}(r,\mu)
\)
are concave in $r$. However, this does not imply concavity of the optimal values
\(
F_{\mathrm{hom}}^*(r)\)
and
\(
W_{\mathrm{hom}}^*(r)
\). See Example~\ref{ex:f-hom-is-not-concave}.

Although the homogeneous average- and worst-case problems are solvable in polynomial time (via convex optimization and linear programming), it is useful to identify graph classes with explicit optimizers. The key observation is that the homogeneous objectives depend only on the orbit structure under $\mathrm{Aut}(G)$, allowing a reduction from one variable per vertex to one per orbit. Throughout the section, we assume $\mathcal{P}=\Delta_V$ and, by slight abuse of notation, write $F_{\mathrm{hom}}(x)$ and $W_{\mathrm{hom}}(x)$ even when $x$ is not a probability distribution.

\begin{proposition}
\label{thm:orbit-symmetric}
Let $O_1,\dots,O_k$ be the vertex-orbits of $G$ under $\mathrm{Aut}(G)$. Then both homogeneous objectives admit an optimal $\mu^*\in \mathcal{P}= \Delta_V$ constant on orbits, i.e., $\mu^*(u)=x_i$ for $u\in O_i$, with $\sum_{i=1}^k |O_i|x_i=1$. Let $m_{ij}=|\mathcal{N}[v]\cap O_j|$ for $v\in O_i$ and $p_i(x)=\sum_{j=1}^k m_{ij}x_j$. Then $F_{\hom}(x)=\sum_{i=1}^k |O_i|\,\phi_r\bigl(p_i(x)\bigr)$ and $W_{\hom}(x)=\min_{1\le i\le k}\phi_r\bigl(p_i(x)\bigr)$, subject to $x_i\ge 0$ and $\sum_{i=1}^k |O_i|x_i=1$.\\
We defer the proof to Appendix~\ref{appendix:section:hom}.
\end{proposition}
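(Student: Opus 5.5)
The plan is to symmetrize an optimal distribution by averaging it over the automorphism group, and then to use concavity to show that averaging cannot decrease either objective.

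\textbf{Invariance.} Let $\Gamma=\mathrm{Aut}(G)$, a finite group acting on $V$, with $\Gamma$ acting on distributions by $(\sigma\mu)(u)=\mu(\sigma^{-1}u)$. Since $\sigma$ maps $\mathcal{N}[v]$ bijectively onto $\mathcal{N}[\sigma v]$, we have $\cov(\sigma v,\sigma\mu)=\cov(v,\mu)$. Summing $\phi_r$ over $v$, or taking the minimum over $v$, is invariant under the reindexing $v\mapsto\sigma v$. Hence $F_{\hom}(r,\sigma\mu)=F_{\hom}(r,\mu)$ and $W_{\hom}(r,\sigma\mu)=W_{\hom}(r,\mu)$. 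Moreover $\Delta_V$ is $\Gamma$-invariant. This is the point where the assumption $\mathcal P=\Delta_V$ is used; bounds $\mathrm{lb}_v,\mathrm{ub}_v$ would break it.

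\textbf{Averaging.} Take any optimizer $\mu$; one exists by compactness and continuity. Define
\[
\bar\mu=\frac{1}{|\Gamma|}\sum_{\sigma\in\Gamma}\sigma\mu .
\]
This lies in $\Delta_V$ by convexity. Both $F_{\hom}(r,\cdot)$ and $W_{\hom}(r,\cdot)$ are concave, as observed earlier in the section. Jensen's inequality together with invariance gives
\[
F_{\hom}(r,\bar\mu)\ge\frac{1}{|\Gamma|}\sum_{\sigma\in\Gamma}F_{\hom}(r,\sigma\mu)=F^*_{\hom}(r),
\]
and likewise for $W_{\hom}$. So $\bar\mu$ is optimal. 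It is $\Gamma$-invariant because $\tau\bar\mu$ merely reorders the sum. An invariant function is constant on each orbit, so $\bar\mu(u)=x_i$ for $u\in O_i$, and $\sum_u\bar\mu(u)=1$ becomes $\sum_i|O_i|x_i=1$ with $x_i\ge 0$.

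\textbf{Reduced formulas.} It remains to show that $m_{ij}$ is well defined. If $v,v'\in O_i$ with $v'=\sigma v$, then $\sigma$ maps $\mathcal{N}[v]\cap O_j$ bijectively onto $\mathcal{N}[v']\cap O_j$, because $\sigma$ fixes each orbit setwise. For orbit-constant $\mu$ and $v\in O_i$ this gives
\[
\cov(v,\mu)=\sum_j|\mathcal{N}[v]\cap O_j|\,x_j=p_i(x),
\]
which depends only on $i$. Grouping the vertices of $V$ by orbit then yields $F_{\hom}(x)=\sum_i|O_i|\phi_r(p_i(x))$ and $W_{\hom}(x)=\min_i\phi_r(p_i(x))$.

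The only step needing care is the invariance check with the correct group action. I expect no real obstacle: the concavity and Jensen step is standard, and it needs only concavity, not strict concavity.
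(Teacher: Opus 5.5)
Your proposal is correct and follows essentially the same route as the paper: average an optimizer over $\mathrm{Aut}(G)$, use invariance of both objectives plus concavity (Jensen) to show the symmetrized distribution is still optimal and constant on orbits, and check that $m_{ij}$ is well defined via the bijection $\mathcal{N}[v]\cap O_j\to\mathcal{N}[\sigma v]\cap O_j$. The only cosmetic difference is that you apply Jensen directly to the concave function $W_{\hom}(r,\cdot)$, whereas the paper applies it vertex by vertex and then takes the minimum.
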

Let $\gamma_f$ denote the fractional domination number of the graph, defined as follows. 
\begin{equation}
\label{eq:frac_dom}
\begin{aligned}
\gamma_f = \min_{x} \qquad & \sum_{u \in V} x(u) \\
\text{s.t.}\qquad
 \sum_{u\in N[v]} x(u) \geq 1,  
\quad  & x(v) \geq 0, \quad \forall v\in V.
\end{aligned}
\end{equation}
\begin{proposition}
\label{pro:homfrac}
Suppose that $\mathcal{P}=\Delta_V$. Then \( W^*_{\hhom}(r)=\phi_r(1/\gamma_f). \)
\end{proposition}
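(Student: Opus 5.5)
The plan is to reduce the statement to the linear program \eqref{eq:lp-homogeneous-prelim} with $\mathcal{P}=\Delta_V$, and then identify that LP with the fractional domination LP \eqref{eq:frac_dom} by a rescaling. As observed before Proposition~\ref{prop:hom_star_concavity}, $\phi_r$ is strictly increasing on $[0,1]$. Hence
\[
W^*_{\hhom}(r)=\max_{\mu\in\Delta_V}\min_{v}\phi_r(\cov(v,\mu))=\phi_r(\alpha^*),
\]
where $\alpha^*$ is the optimal value of \eqref{eq:lp-homogeneous-prelim} over $\Delta_V$. It therefore suffices to show $\alpha^*=1/\gamma_f$.

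\textbf{Showing $\alpha^*\ge 1/\gamma_f$.} Take an optimal $x^*$ for \eqref{eq:frac_dom}. Note that $\gamma_f>0$: the constraint $\sum_{u\in N[v]}x(u)\ge 1$ forces $x\neq 0$, and $V$ is nonempty. Define $\mu=x^*/\gamma_f$. Then $\mu\ge 0$ and $\sum_v\mu(v)=1$, so $\mu\in\Delta_V$. Moreover, for every $v$,
\[
\sum_{u\in N[v]}\mu(u)\ge 1/\gamma_f .
\]
Thus $(\mu,1/\gamma_f)$ is feasible for \eqref{eq:lp-homogeneous-prelim}.

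\textbf{Showing $\alpha^*\le 1/\gamma_f$.} Take an optimal $(\mu^*,\alpha^*)$. First note $\alpha^*>0$, since the uniform distribution gives every vertex coverage at least $1/n$. Set $x=\mu^*/\alpha^*$. Then $x\ge 0$ and $\sum_{u\in N[v]}x(u)\ge 1$ for all $v$, so $x$ is feasible for \eqref{eq:frac_dom}. Its objective value is $\sum_u x(u)=1/\alpha^*$, which gives $\gamma_f\le 1/\alpha^*$.

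Combining the two directions yields $\alpha^*=1/\gamma_f$, and hence $W^*_{\hhom}(r)=\phi_r(1/\gamma_f)$. Alternatively, one could invoke LP duality, but the direct scaling argument is simpler.

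The only delicate points are the positivity facts ($\gamma_f>0$ and $\alpha^*>0$) and the monotonicity of $\phi_r$, which justifies passing the maximization through $\phi_r$. I expect neither to be a real obstacle.
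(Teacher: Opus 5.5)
Your proposal is correct and follows essentially the same route as the paper: monotonicity of $\phi_r$ reduces the claim to $\alpha^*=1/\gamma_f$, which is then shown by rescaling feasible solutions between the LP \eqref{eq:lp-homogeneous-prelim} and the fractional domination LP \eqref{eq:frac_dom} in both directions. Your explicit checks that $\alpha^*>0$ and $\gamma_f>0$ supply justifications the paper leaves implicit when it divides by these quantities.
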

\begin{proof}
Let $(\mu^*,\alpha^*)$ be an optimal solution of~\eqref{eq:lp-homogeneous-prelim} with $\mathcal P=\Delta_V$. Setting \( x(u)=\frac{\mu^*(u)}{\alpha^*} \) gives a feasible solution to~\eqref{eq:frac_dom}, so $\gamma_f\le 1/\alpha^*$. Conversely, if $x^*$ is an optimal solution of~\eqref{eq:frac_dom}, then \( \mu(u)=\frac{x^*(u)}{\gamma_f}\) defines a feasible solution to~\eqref{eq:lp-homogeneous-prelim}, hence $\alpha^*\ge 1/\gamma_f$. Therefore, $\alpha^*=1/\gamma_f$, and the formula for $W^*_{\hom}$ follows.
\end{proof}
Using Proposition \ref{pro:homfrac} and results on fractional domination, one obtains
$W^*_{\mathrm{hom}}(r)=\phi_r\!\left(\frac{d+1}{|V|}\right)$ for vertex-transitive graphs with $d=\frac{2|E|}{|V|}$,  
$W^*_{\mathrm{hom}}(r)=\phi_r\!\left(\frac{1}{\left\lfloor (2^{l+1}+3)/7 \right\rfloor}\right)$ for the perfect binary tree  with $l$ levels, and $W^*_{\mathrm{hom}}(r)=\phi_r\!\left(\frac{ab-1}{2ab-a-b}\right)$ for the complete bipartite graph $K_{a,b}$ with $ab>1$. For $G = K_{a,b}$, we show in Appendix~\ref{appendix:section:hom} that $
    F_{\hhom}^*(r) = 1+\max(a,b)$ if $r=1$ and  $
    F_{\hhom}^*(r)=
     a+b -  \frac{(a-1)^r (b-1)^r}{\left( a (b-1)^{\frac{r}{r-1}} + b (a-1)^{\frac{r}{r-1}}  \right)^{r-1}}$ if $r >1$.

\section{Independent Heterogeneous Monitoring}
\label{sec:het}

We now consider the case where the $r$ robots may follow heterogeneous mobility patterns. As introduced in Section~\ref{sec:intro}, the coverage of a vertex $v \in V$ under mobility patterns $\mu_1, \dots, \mu_r$ is defined as
$\cov(v,\mu_1,\dots,\mu_r) = 1 - \prod_{i=1}^r \bigl(1 - {\cov}(v,\mu_i)\bigr)$.
In the average-case setting, the objective is to maximize
$F_{\mathrm{het}}(r,\mu_1,\dots,\mu_r) = \sum_{v \in V} {\cov}(v,\mu_1,\dots,\mu_r)$,
subject to $\mu_i \in \mathcal{P}$ for all $i = 1,\dots,r$.
In the worst-case setting, the objective is instead to maximize
$W_{\mathrm{het}}(r,\mu_1,\dots,\mu_r) = \min_{v \in V} {\cov}(v,\mu_1,\dots,\mu_r)$.
The corresponding optimal objective values are denoted by $F^*_{\mathrm{het}}(r)$ and $W^*_{\mathrm{het}}(r)$, respectively.

\subsection{Hardness of the Heterogeneous Model}
The first result states that it is NP-hard to maximize $F_{\mathrm{het}}(r,\mu_1,\dots,\mu_r)$ even when $r=2$. 

\begin{thm}
\label{th:r=2}
Computing \(F_{\mathrm{het}}^*(r)\) is NP-hard, even when the number of robots is \(r=2\), and the feasible set \(\mathcal P\) is specified by box constraints with strictly positive lower bounds together with the normalization condition \(\sum_{u\in V}\mu(u)=1\).
\end{thm}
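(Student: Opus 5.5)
We reduce from \textsc{Independent Set} restricted to $d$-regular graphs, which remains NP-complete (e.g.\ for $d=3$). The paper announces a reduction from \textsc{Densest $k$-Subgraph}; this is the same idea applied to the complement graph, since a sparse $k$-subgraph is an independent set in the extreme case. The starting point is an algebraic rewriting of the objective for $r=2$. Since $\sum_{v}\cov(v,\mu)=\sum_u |\N[u]|\,\mu(u)$, we have
\[
F_{\het}(2,\mu_1,\mu_2)=a^\top\mu_1+a^\top\mu_2-\mu_1^\top M\mu_2 ,
\]
where $a(u)=|\N[u]|$ and $M(u,w)=|\N[u]\cap\N[w]|$ is the common closed-neighbourhood matrix of $G$. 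Maximizing $F_{\het}$ therefore amounts to trading a linear gain against a bilinear ``overlap'' penalty. The plan is to make the linear part constant on the relevant region, so that only the overlap term matters.

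\textbf{Construction.} Let $H$ be $d$-regular on vertex set $U$ with edge set $E_H$, and let $t:=2k$ be the target independent set size. Build $G$ by subdividing every edge of $H$ once, with subdivision vertex $s_e$, and add self-loops everywhere. The original vertices then form an independent set of $G$, each with $|\N[u]|=d+1$. For $u\neq w$ in $U$, $M(u,w)=1$ if $uw\in E_H$ (the common neighbour $s_{uw}$) and $0$ otherwise, while $M(u,u)=d+1$.

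Define $\mathcal P$ by the following box constraints:
\begin{itemize}
\item every subdivision vertex gets $\mathrm{lb}=\mathrm{ub}=\varepsilon$;
\item every original vertex gets $\mathrm{lb}=\varepsilon$ and $\mathrm{ub}=\varepsilon+\beta$, where $\beta$ is chosen so that normalization forces exactly $k$ original vertices to sit at their upper bound at an extreme point.
\end{itemize}
Concretely, $k\beta+|U|\varepsilon+|E_H|\varepsilon=1$. All lower bounds are strictly positive, as the statement requires.

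\textbf{Key steps.} First, by multilinearity, an optimum is attained at a pair of extreme points of $\mathcal P$. Because $\beta$ is chosen so that the budget is exhausted exactly, each extreme point is $\mu_S=\varepsilon\mathbf 1+\beta\mathbf 1_S$ for some $S\subseteq U$ with $|S|=k$, with no fractional coordinate. Second, for such points the linear part $a^\top\mu_S$ is the same for all $S$, by regularity of $H$ and the fixed mass on subdivision vertices. So maximizing $F_{\het}$ is equivalent to minimizing $\mu_{S_1}^\top M\mu_{S_2}$. Third, expanding,
\[
\mu_{S_1}^\top M\mu_{S_2}=\beta^2\bigl((d+1)|S_1\cap S_2|+e_H(S_1,S_2)\bigr)+\beta\varepsilon\,L(S_1,S_2)+\varepsilon^2 C,
\]
where $L(S_1,S_2)=\mathbf 1^\top M(\mathbf 1_{S_1}+\mathbf 1_{S_2})$ depends only on degrees and is therefore constant, and $C=\mathbf 1^\top M\mathbf 1$ is a constant. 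Hence the optimum equals a known constant if and only if there exist disjoint $S_1,S_2$ with no $H$-edges inside $S_1\cup S_2$. This holds if and only if $H$ has an independent set of size $2k$. Otherwise the penalty is larger by at least $\beta^2$, which gives a polynomially-encodable threshold. Computing $F^*_{\het}(2)$ thus decides \textsc{Independent Set}.

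\textbf{Main obstacle.} The delicate step is the bookkeeping introduced by the positive lower bounds. I must check carefully that the $\varepsilon$-terms are genuinely independent of $(S_1,S_2)$, i.e., that every vertex of $U$ contributes the same amount to $L$. This uses regularity of $H$ again, since $\mathbf 1^\top M e_u=\sum_{w}|\N[u]\cap\N[w]|$ must be the same for all $u\in U$. If some residual dependence remains, the fallback is to choose $\varepsilon$ small enough (polynomially bounded in bit length) that these terms are strictly dominated by the gap $\beta^2$. I also need a careful extreme-point argument for the box-plus-simplex polytope, confirming that with this $\beta$ no extreme point has a fractional coordinate.
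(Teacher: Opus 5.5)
Most of your algebra is correct. The identity $F_{\het}(2,\mu_1,\mu_2)=a^\top\mu_1+a^\top\mu_2-\mu_1^\top M\mu_2$ holds. The normalized polytope is a hypersimplex, so its vertices are exactly the $\mu_S$ with $|S|=k$. For $d$-regular $H$, both $a^\top\mu_S$ and $\mathbf 1^\top M e_u=\sum_{x\in\N[u]}|\N[x]|=4d+1$ are constant, so the step you flagged as the main obstacle is fine.

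The reduction breaks at the combinatorial conclusion that the optimum equals the known constant if and only if there are disjoint $S_1,S_2$ with no $H$-edges inside $S_1\cup S_2$. The penalty $(d+1)|S_1\cap S_2|+e_H(S_1,S_2)$ only counts pairs $(u,w)\in S_1\times S_2$. An edge with both endpoints in $S_1$, or both in $S_2$, costs nothing. So the penalty vanishes if and only if $S_1\cap S_2=\varnothing$ and there is no edge \emph{between} $S_1$ and $S_2$, which is much weaker than independence of $S_1\cup S_2$.

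Here is a concrete false positive. Take the cubic prism over $C_{10}$, with vertices $(i,j)$, $i\in\mathbb Z_{10}$, $j\in\{0,1\}$. Let $k=8$, $S_1=\{(i,j):1\le i\le 4\}$ and $S_2=\{(i,j):6\le i\le 9\}$. The penalty is $0$. Yet the graph is bipartite and $3$-regular, so its independence number is $10<16=2k$. Your map therefore sends NO instances of \textsc{Independent Set} to instances attaining the YES value.

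This is structural, not bookkeeping. For $r=2$ the objective contains only the cross term $\mu_1^\top M\mu_2$ and no $\mu_i^\top M\mu_i$ term. A symmetric construction in which both robots choose from the same pool can therefore only encode a bipartite condition.

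What your construction actually decides is whether the complement $\bar H$ contains $K_{k,k}$ with sides $S_1,S_2$. That is \textsc{Balanced Complete Bipartite Subgraph}, which is NP-complete already when $\bar H$ is bipartite. To use it you must give up regularity and then restore constancy of the linear term. One way is to attach $D-\deg_H(u)$ private pendant vertices of fixed mass $\varepsilon$ to each $u\in U$. This makes $|\N[u]|=D+1$ and creates no new common neighbours between distinct vertices of $U$. You would then take $\varepsilon\ll\beta/(kD)$, so that the residual dependence $\mathbf 1^\top Me_u=3D+1+\deg_H(u)$ in the $\varepsilon\beta$ term is dominated by the $\beta^2$ gap.

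The paper avoids the issue with an asymmetric design. The auxiliary sets $X$ and $X'$, each of size $(tk)^2$, force one robot to put mass $1/k$ on $k$ vertices of $V(H)$ and the other to put mass $1/t$ on $t$ edge copies in $E'(H)$. The cross term at $x_{u,v}$ then measures vertex--edge incidence. The optimum thus detects a $k$-set inducing at least $t$ edges, giving the reduction from \textsc{Densest $k$-Subgraph}.
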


\begin{proof}
We establish hardness by providing a reduction from the \textsc{Densest $k$-Subgraph} problem. In its decision variant, given a graph $H = (V(H), E(H))$ and two integers $k$ and $t$, the objective is to determine whether there exists a subset $S \subseteq V(H)$ such that
$|S| = k \quad \text{and} \quad |E(S)| \geq t$
where $E(S)$ denotes the set of edges in the subgraph induced by $S$. The \textsc{Densest $k$-Subgraph} problem generalizes the \textsc{Clique} problem. \\
Given an instance of the \textsc{Densest $k$-Subgraph} problem, we build an instance of our problem as follows. Consider the graph $G = (V(G), E(G))$ where the vertex set is defined as the union of five disjoint stable sets: $V(G) = V(H) \cup E(H) \cup E'(H) \cup X \cup X'$. The set $V(H)$ corresponds to the vertices of the original graph $H$, while $E(H)$ and its disjoint copy $E'(H)$ contain vertices representing the edges of $H$. Specifically, for each edge $\{u, v\} \in E(H)$, we denote its corresponding vertex in $E(H)$ as $x_{u,v}$ and its corresponding vertex in $E'(H)$ as $x'_{u,v}$. The remaining sets $X$ and $X'$ are auxiliary vertex sets of cardinality $|X| = |X'| = (tk)^2$ (see Figure~\ref{fig:hardness-heterogeneous-construction}). By construction, each of these five constituent sets forms an independent set in $G$. The edge set $E(G)$ is defined as follows. For each edge $\{u,v\} \in E(H)$, the graph $G$ contains the edges $\{x_{u,v}, u\}$ and $\{x_{u,v}, v\}$. Moreover, each vertex $x_{u,v}$ is adjacent to all vertices in  $E'(H) \setminus \{x'_{u,v}\}$. 
We also include all possible edges between $V(H)$ and $E'(H)$.
Finally, every vertex in $X$ is adjacent to every vertex in $V(H)$, and every vertex in $X'$ is adjacent to every vertex in $E'(H)$. \\
Let \(n:=|V(G)|\) and \(\gamma:=\frac{1}{kt}.\) Choose a rational number \(\eta>0\) such that
\(4n^2\eta<\gamma,\) equivalently \(\eta<\frac{1}{4n^2kt}.\) The polytope \(\mathcal P\) is defined by assigning a small background mass \(\eta\) to every vertex. 
The remaining probability mass \(1-n\eta\) can be distributed only over \(V(H)\cup E'(H)\), with each vertex of \(V(H)\) receiving at most \(\frac{1-n\eta}{k}\) additional mass and each vertex of \(E'(H)\) receiving at most \(\frac{1-n\eta}{t}\) additional mass.
 More formally:
\[
\resizebox{\textwidth}{!}{$
\mathcal P
=
\left\{
\mu \in \mathbb{R}^{|V(G)|}
\;:\;
\begin{alignedat}{4}
&\sum_{u\in V(G)} \mu(u)=1,
&\quad
&\eta \le \mu(x'_{u,v}) \le \eta+\frac{1-n\eta}{t}
&\quad
&\forall x'_{u,v}\in E'(H),\\[0.3em]
&\eta \le \mu(v) \le \eta+\frac{1-n\eta}{k}
&\quad
&\forall v\in V(H),
&\quad
&\mu(w)=\eta
&\quad
&\forall w\in X,\\[0.3em]
&\mu(w')=\eta
&\quad
&\forall w'\in X',
&\quad
&\mu(x_{u,v})=\eta
&\quad
&\forall x_{u,v}\in E(H).
\end{alignedat}
\right\}
$}
\]
Let
\(
T_0:=|V(G)|-\left(1-\frac{2}{k}\right)\)
and
\(
T_\eta:=T_0-\frac{\gamma}{2}.
\)
We show that, if the \textsc{Densest \(k\)-Subgraph} instance is a YES instance, then \(F_{\mathrm{het}}^*(r=2)>T_\eta\), while if it is a NO instance, then \(F_{\mathrm{het}}^*(r=2)<T_\eta\).
\subparagraph{$(\mathbf{\Leftarrow})$}
Assume that there exists a set \(S\subseteq V(H)\) such that \(|S|=k\) and \(|E_H(S)|\ge t\). We first define two normalized distributions \(\bar{\mu}_1\) and \(\bar{\mu}_2\), and then add the background mass required by the definition of \(\mathcal P\).
Define \(\bar{\mu}_1\) by setting
\(
\bar{\mu}_1(u)=\frac{1}{k}\quad \forall u\in S,
\)
and
\(
\bar{\mu}_1(u)=0
\)
for
\(
u\in V(G)\setminus S.
\)
Since \(|E_H(S)|\ge t\), choose \(t\) edges of \(H\) induced by \(S\). For each selected edge \(\{u,v\}\), put mass \(1/t\) on the corresponding vertex \(x'_{u,v}\in E'(H)\), and put zero mass elsewhere. This defines \(\bar{\mu}_2\).
Now define the actual feasible distributions
\(
\mu_i(u):=\eta+(1-n\eta)\bar{\mu}_i(u)
\)
for
\(
u\in V(G),
\)
\(
i=1,2.
\)
By the definition of \(\mathcal P\), both \(\mu_1\) and \(\mu_2\) belong to \(\mathcal P\). In particular, both distributions assign positive mass to every vertex.
Let us first compute the value of the normalized pair \((\bar{\mu}_1,\bar{\mu}_2)\). The same calculation yields
\[
F_{\mathrm{het}}(2,\bar{\mu}_1,\bar{\mu}_2)
=
|V(G)|-\left(1-\frac{2}{k}\right)
=
T_0.
\]
Indeed, all vertices in \(V(H)\cup E'(H)\cup X\cup X'\) are fully monitored. For a vertex \(x_{u,v}\in E(H)\), we have
\[
\operatorname{cov}(x_{u,v},\bar{\mu}_1)
=
\frac{|S\cap\{u,v\}|}{k},
\qquad
\operatorname{cov}(x_{u,v},\bar{\mu}_2)
=
1-\bar{\mu}_2(x'_{u,v}).
\]
The only vertices of \(E(H)\) that are not fully monitored are the \(t\) vertices corresponding to the selected induced edges of \(S\). For each such vertex, \(|S\cap\{u,v\}|=2\) and \(\bar{\mu}_2(x'_{u,v})=1/t\), so its uncovered probability is \(\left(1-\frac{2}{k}\right)\frac{1}{t}.\)
Thus the total uncovered probability is \(t\cdot (1-2/k)/t=1-2/k\), proving the displayed equality.

It remains to compare the normalized pair with the actual feasible pair. For any vertex \(v\in V(G)\) and each \(i\in\{1,2\}\),
\(
\left|
\operatorname{cov}(v,\mu_i)
-
\operatorname{cov}(v,\bar{\mu}_i)
\right|
\le n\eta,
\)
because \(\mu_i-\bar{\mu}_i=\eta\mathbf{1}-n\eta\,\bar{\mu}_i\), and the closed neighborhood of \(v\) has size at most \(n\). Since the function
\(
(p_1,p_2)\mapsto 1-(1-p_1)(1-p_2)
\)
is \(1\)-Lipschitz in each coordinate on \([0,1]^2\), the monitoring probability of each vertex changes by at most \(2n\eta\). Summing over all \(n\) vertices gives
\(
\left|
F_{\mathrm{het}}(2,\mu_1,\mu_2)
-
F_{\mathrm{het}}(2,\bar{\mu}_1,\bar{\mu}_2)
\right|
\le 2n^2\eta.
\)
Therefore,
\(
F_{\mathrm{het}}(2,\mu_1,\mu_2)
\ge
T_0-2n^2\eta.
\)
By our choice of \(\eta\), we have \(4n^2\eta<\gamma\), and hence \(2n^2\eta<\gamma/2\). Thus
\(
F_{\mathrm{het}}(2,\mu_1,\mu_2)
>
T_0-\frac{\gamma}{2}
=
T_\eta.
\)
Consequently, \(F_{\mathrm{het}}^*(2)>T_\eta\) in the YES case.
\subparagraph{$(\mathbf{\Rightarrow})$}
Assume that \(F_{\mathrm{het}}^*(2)\ge T_\eta\). Since \(F_{\mathrm{het}}\) is bilinear, its maximum over \(\mathcal P^2\) is attained at a pair of extreme points. Hence, there exist extreme points \(\mu_1,\mu_2\in\mathcal P\) such that \(F_{\mathrm{het}}(2,\mu_1,\mu_2)\ge T_\eta\).
For each \(i\in\{1,2\}\), define the normalized distribution \(\bar{\mu}_i\) by
\(\bar{\mu}_i(u):=\frac{\mu_i(u)-\eta}{1-n\eta}\) for each  \(u\in V(G)\).
By the definition of \(\mathcal P\), \(\bar{\mu}_i\) is supported on \(V(H)\cup E'(H)\), satisfies \(\bar{\mu}_i(v)\le 1/k\) for \(v\in V(H)\) and \(\bar{\mu}_i(x'_{u,v})\le 1/t\) for \(x'_{u,v}\in E'(H)\), and has total mass one. Moreover, since the map \(\bar{\mu}\mapsto \eta\mathbf 1+(1-n\eta)\bar{\mu}\) is an affine bijection, the distributions \(\bar{\mu}_1,\bar{\mu}_2\) are extreme points of the corresponding normalized polytope.

As in the previous direction, for each \(i\in\{1,2\}\) and each vertex \(v\), we have \(\left|\operatorname{cov}(v,\mu_i)-\operatorname{cov}(v,\bar{\mu}_i)\right|\le n\eta\). Therefore,
\(
\left|
F_{\mathrm{het}}(2,\mu_1,\mu_2)
-
F_{\mathrm{het}}(2,\bar{\mu}_1,\bar{\mu}_2)
\right|
\le 2n^2\eta.
\)
Since \(4n^2\eta<\gamma\), we have \(2n^2\eta<\gamma/2\). Hence
\begin{equation}
\label{eq:perturbed-lower-bound}
F_{\mathrm{het}}(2,\bar{\mu}_1,\bar{\mu}_2)
\ge
F_{\mathrm{het}}(2,\mu_1,\mu_2)-2n^2\eta
>
T_\eta-\frac{\gamma}{2}
=
T_0-\gamma.
\end{equation}
We now show that this implies the existence of a \(k\)-vertex set inducing at least \(t\) edges in \(H\).

Let \(\bar{\mu}\) be an extreme point of the normalized polytope. Write \(\bar{\mu}(E'(H))=a/t\) and \(\bar{\mu}(V(H))=b/k\). Since \(\bar{\mu}\) is supported on \(V(H)\cup E'(H)\), we have \(a/t+b/k=1\). We first observe that at least one of \(a\) and \(b\) must be an integer. Indeed, if both were fractional, then there would be a vertex \(w\in V(H)\) with \(0<\bar{\mu}(w)<1/k\), and a vertex \(x'_{u,v}\in E'(H)\) with \(0<\bar{\mu}(x'_{u,v})<1/t\). Shifting a sufficiently small amount of mass between these two non-tight coordinates would express \(\bar{\mu}\) as a convex combination of two distinct feasible points, contradicting extremality.

For \(i=1,2\), write \(\bar{\mu}_i(E'(H))=a_i/t\) and \(\bar{\mu}_i(V(H))=b_i/k\). Consider the vertices in \(X\). For every \(w\in X\), its normalized coverage by robot \(i\) is \(\operatorname{cov}(w,\bar{\mu}_i)=b_i/k\). Similarly, for every \(w'\in X'\), we have \(\operatorname{cov}(w',\bar{\mu}_i)=a_i/t\). Therefore,
\begin{align}
|V(G)|-F_{\mathrm{het}}(2,\bar{\mu}_1,\bar{\mu}_2)
&\ge
|X|\left(1-\frac{b_1}{k}\right)\left(1-\frac{b_2}{k}\right)
+
|X'|\left(1-\frac{a_1}{t}\right)\left(1-\frac{a_2}{t}\right) \nonumber\\
&=
(tk)^2\left(1-\frac{b_1}{k}\right)\left(1-\frac{b_2}{k}\right)
+
(tk)^2\left(1-\frac{a_1}{t}\right)\left(1-\frac{a_2}{t}\right).
\label{eq:positive-bound}
\end{align}

We next note that if \(1-b_i/k\neq 0\), then \(1-b_i/k\ge \min\{1/k,1/t\}\). Indeed, \(1-b_i/k=a_i/t\). If \(a_i\) is a positive integer, then \(a_i/t\ge 1/t\); otherwise \(b_i\) is an integer, and \(1-b_i/k\neq 0\) implies \(k-b_i\ge 1\), hence \(1-b_i/k\ge 1/k\). The same argument gives the analogous statement for \(1-a_i/t\).

Suppose now that \(\left(1-b_1/k\right)\left(1-b_2/k\right)\neq 0\). Then, by the preceding paragraph and \eqref{eq:positive-bound},
\[
|V(G)|-F_{\mathrm{het}}(2,\bar{\mu}_1,\bar{\mu}_2)
\ge
(tk)^2\min\{1/k,1/t\}^2
\ge 1.
\]
Thus \(F_{\mathrm{het}}(2,\bar{\mu}_1,\bar{\mu}_2)\le |V(G)|-1<T_0-\gamma\), since \(T_0-\gamma=|V(G)|-1+2/k-1/(kt)\). This contradicts \(F_{\mathrm{het}}(2,\bar{\mu}_1,\bar{\mu}_2)>T_0-\gamma\). Hence \(\left(1-b_1/k\right)\left(1-b_2/k\right)=0\). By the same argument applied to \(X'\), we also have \(\left(1-a_1/t\right)\left(1-a_2/t\right)=0\).

Since \(a_i/t+b_i/k=1\) for \(i=1,2\), the two equalities above imply, up to swapping the two robots, that \(b_1=k\), \(a_1=0\), \(b_2=0\), and \(a_2=t\). Therefore \(\bar{\mu}_1\) puts mass \(1/k\) on exactly \(k\) vertices of \(V(H)\), and \(\bar{\mu}_2\) puts mass \(1/t\) on exactly \(t\) vertices of \(E'(H)\). Define \(S:=\{u\in V(H):\bar{\mu}_1(u)=1/k\}\) where \(|S|=k\).

It remains to show that \(S\) induces at least \(t\) edges in \(H\). Suppose, toward a contradiction, that \(|E_H(S)|<t\). Since \(\bar{\mu}_2\) assigns mass \(1/t\) to exactly \(t\) vertices of \(E'(H)\), at least one of these vertices, say \(x'_{u,v}\), corresponds to an edge \(\{u,v\}\) with \(|S\cap\{u,v\}|\le 1\). For a selected vertex \(x'_{u,v}\), the corresponding vertex \(x_{u,v}\in E(H)\) has normalized uncovered probability
\(
\left(1-\frac{|S\cap\{u,v\}|}{k}\right)\frac1t.
\)
For selected edges contained in \(S\), this uncovered probability is \((1-2/k)/t\), while for at least one selected edge not contained in \(S\), it is at least \((1-1/k)/t\). Hence the total uncovered probability over the selected vertices of \(E(H)\) is at least
\[
(t-1)\frac{1-2/k}{t}+\frac{1-1/k}{t}
=
1-\frac{2}{k}+\frac{1}{kt}.
\]
Consequently,
$F_{\mathrm{het}}(2,\bar{\mu}_1,\bar{\mu}_2)
\le
|V(G)|-\left(1-\frac{2}{k}+\frac{1}{kt}\right)
=
T_0-\gamma$,
contradicting~\eqref{eq:perturbed-lower-bound}. 
Thus, from any solution with value at least \(T_\eta\), we can extract a \(k\)-vertex subset of \(H\) inducing at least \(t\) edges. This completes the reduction.
\end{proof}

\begin{figure}[t]
\centering
\begin{tikzpicture}[
    scale=0.85,
    x=1cm,y=1cm,
    every node/.style={font=\small},
    sample/.style={circle, draw, fill=white, inner sep=1.5pt},
    comp/.style={gray!35, thick},
    special/.style={black, very thick},
    missing/.style={red, dash dot, very thick}
]

\draw (-0.2,0.4) ellipse [x radius=0.90, y radius=1.70];
\draw (2.8,0.4) ellipse [x radius=0.90, y radius=1.70];
\draw (5.7,3.45) ellipse [x radius=1.70, y radius=0.90];
\draw (8.6,0.4) ellipse [x radius=0.90, y radius=1.70];
\draw (11.6,0.4) ellipse [x radius=0.90, y radius=1.70];

\node at (-0.2,2.45) {$X$};
\node at (2.8,2.45) {$V(H)$};
\node at (5.7,4.80) {$E(H)$};
\node at (8.6,2.45) {$E'(H)$};
\node at (11.6,2.45) {$X'$};

\node (x1) at (-0.2,1.55) {};
\node (x2) at (-0.2,0.95) {};
\node[text=gray!45] at (-0.2,0.35) {$\vdots$};
\node (x4) at (-0.2,-0.25) {};
\node (x5) at (-0.2,-0.85) {};

\node[sample,label=above:$u$] (v1) at (2.8,1.55) {};
\node                        (v2) at (2.8,0.95) {};
\node[text=gray!45] at (2.8,0.5) {$\vdots$};
\node[sample,label=above:$v$] (v4) at (2.8,-0.5) {};
\node                        (v5) at (2.8,-0.85) {};

\node (eh1) at (4.85,3.45) {};
\node[text=gray!45] at (5.25,3.45) {};
\node[sample,label=above:$x_{u,v}$] (xuv) at (5.70,3.45) {};
\node (eh3) at (6.35,3.45) {};

\node (e1) at (8.6,1.55) {};
\node (e2) at (8.6,0.95) {};
\node[text=gray!45] at (8.6,0.35) {$\vdots$};
\node[sample,label=below:$x'_{u,v}$] (xuvp) at (8.6,-0.15) {};
\node (e5) at (8.6,-0.85) {};

\node (xp1) at (11.6,1.55) {};
\node (xp2) at (11.6,0.95) {};
\node[text=gray!45] at (11.6,0.35) {$\vdots$};
\node (xp4) at (11.6,-0.25) {};

\draw[comp] (x1) -- (v1);
\draw[comp] (x1) -- (v2);

\draw[comp] (v1) -- (e1);
\draw[comp] (v1) -- (e2);
\draw[comp] (v2) -- (e1);

\draw[comp] (e1) -- (xp1);
\draw[comp] (e1) -- (xp2);
\draw[comp] (e2) -- (xp1);

\node[text=gray!45] at (10.1,0.35) {$\vdots$};
\node[text=gray!45] at (1.3,0.35) {$\vdots$};
\node[text=gray!45] at (5.7,0.35) {$\vdots$};

\draw[special] (xuv) -- (v1);
\draw[special] (xuv) -- (v4);

\draw[missing] (xuv) -- (xuvp);

\end{tikzpicture}
\caption{Graph \(G\) for the reduction from \textsc{Densest \(k\)-Subgraph}; red dash-dot: missing edges.}
\label{fig:hardness-heterogeneous-construction}
\end{figure}
Although the hardness result of Theorem~\ref{th:r=2} holds even for \(r=2\), the value \(F^*_{\het}(r)\) can be computed efficiently when \(r\) is fixed and the polytope \(\mathcal P\) has polynomially many extreme points.

\begin{proposition}
\label{prop:exist-poly-fixed-r} 
\sloppy{If $r$ is bounded by a constant and the number of extreme points of $\mathcal{P}$ is polynomially bounded, then $F^*_{\mathrm{het}}(r)$ can be computed in polynomial time.}
\end{proposition}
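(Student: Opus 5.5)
The approach is to exploit multilinearity of $F_{\mathrm{het}}$ in $(\mu_1,\dots,\mu_r)$ and then enumerate extreme points. For each vertex $v$, the coverage
\[
1-\prod_{i=1}^r\bigl(1-\cov(v,\mu_i)\bigr)
\]
is affine in each $\mu_i$ when the other $\mu_j$ are fixed, because $\cov(v,\mu_i)=\sum_{u\in\N[v]}\mu_i(u)$ is linear in $\mu_i$. Summing over $v$, $F_{\mathrm{het}}(r,\mu_1,\ldots,\mu_r)$ is affine in each block $\mu_i$ separately. This is the same property already used for $r=2$ in the proof of Theorem~\ref{th:r=2}.

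Next we show that an optimum is attained at a tuple of extreme points of $\mathcal P$. $\mathcal P\subseteq\Delta_V$ is a nonempty bounded polyhedron, hence a polytope, so $\mathcal P^r$ is compact and the continuous function $F_{\mathrm{het}}$ attains its maximum at some $(\mu_1^*,\dots,\mu_r^*)$. We then replace the coordinates one at a time. Fixing all but $\mu_1^*$, the function $\mu_1\mapsto F_{\mathrm{het}}(r,\mu_1,\mu_2^*,\dots,\mu_r^*)$ is affine on the polytope $\mathcal P$, so its maximum over $\mathcal P$ is attained at an extreme point $\hat\mu_1$, and the value does not decrease. Repeating for $i=2,\dots,r$, keeping the already replaced coordinates fixed, yields a tuple of extreme points $(\hat\mu_1,\dots,\hat\mu_r)$ whose value is at least that of the optimum, hence optimal. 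This is a finite induction on $r$, and each step only requires that a linear function over a polytope attain its maximum at a vertex.

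The algorithm follows. Let $\mathrm{ext}(\mathcal P)$ have size $N\le \mathrm{poly}$. We assume, as the statement implicitly requires, that these extreme points can be listed in polynomial time; for instance, they are given explicitly, or they are enumerable by standard vertex enumeration, whose output-sensitive running time is polynomial when $N$ is polynomial. We enumerate all $N^r$ tuples, which is polynomial since $r$ is constant; by symmetry of $F_{\mathrm{het}}$ one may restrict to multisets, but this is unnecessary. For each tuple, we evaluate $F_{\mathrm{het}}$ in $O(rn^2)$ arithmetic operations by computing each $\cov(v,\mu_i)$ and the product. We return the maximum. The numbers stay polynomially sized: extreme points of a rational polytope have polynomial bit-length, and a product of $r$ such rationals remains polynomial for constant $r$.

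The main obstacle is not mathematical but a matter of the computational model: justifying that the extreme points are actually accessible in polynomial time when only a polynomial bound on their number is known. I would handle this by invoking output-polynomial vertex enumeration for polytopes given by inequalities, or by stating the assumption that $\mathrm{ext}(\mathcal P)$ is given. The extreme-point reduction itself is routine.
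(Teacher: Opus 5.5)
Your proof follows the paper's argument: multilinearity of $F_{\mathrm{het}}$ yields an optimal tuple of extreme points (your coordinate-by-coordinate replacement supplies the detail the paper leaves implicit), and enumerating the $\sigma^r$ tuples then gives a polynomial algorithm for constant $r$. One caution about your aside: no output-polynomial vertex-enumeration algorithm is known for general, possibly degenerate, polytopes given by inequalities; this is a well-known open problem, and reverse search is output-polynomial only under nondegeneracy. The sound reading is therefore the other one you offer, which is also how the paper uses the proposition for $\Delta_V^\eta$: the extreme points are assumed to be listable in polynomial time.
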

\begin{proof}
Let $\sigma$ denote the number of extreme points of $\mathcal{P}$. 
Observe that the average coverage function
$F_{\mathrm{het}}(r,\mu_1,\dots,\mu_r) = \sum_{v \in V} \left( 1 - \prod_{i=1}^r \Bigl(1 - \sum_{u \in \mathcal{N}[v]} \mu_i(u) \Bigr) \right)$ 
is a multilinear function in $(\mu_1,\dots,\mu_r)$. Hence, its maximum over $\mathcal{P}^r$ is attained at extreme points of $\mathcal{P}$, i.e., each $\mu_i$ can be chosen as an extreme point of $\mathcal{P}$.
It follows that an optimal solution can be found by enumerating all $\sigma^r$ possible $r$-tuples of extreme points.
\end{proof}

We next show that the heterogeneous objectives remain hard to approximate. To this end, we restrict attention to a full-support polytope whose extreme points can be enumerated in polynomial time. For a rational number \(0<\eta<1/n\), define
\[
\Delta_V^\eta
:=
\left\{
\mu\in\mathbb R^{|V|}:
\sum_{v\in V}\mu(v)=1,\;
\mu(v)\geq \eta \text{ for every }v\in V
\right\}.
\]
Every \(\mu\in\Delta_V^\eta\) can be written uniquely as
\(
\mu=\eta\mathbf 1+(1-n\eta)\bar\mu,\)
\(
\bar\mu\in\Delta_V,
\)
where \(\mathbf 1\) denotes the all-ones vector. Hence \(\Delta_V^\eta\) is an affine copy of the standard simplex, and its extreme points are precisely
\(
\eta\mathbf 1+(1-n\eta)e_v,
\) for
\(
v\in V.
\)
In particular, \(\Delta_V^\eta\) has exactly \(n\) extreme points, which can be explicitly enumerated in polynomial time. It therefore satisfies the polynomial-extreme-point assumption of Proposition~\ref{prop:exist-poly-fixed-r}.

Moreover, every distribution in \(\Delta_V^\eta\) has full support. Thus, by Lemma~\ref{lem:realizability-positive}, every \(\mu\in\Delta_V^\eta\) is the unique stationary distribution of an irreducible and aperiodic Markov chain supported on \(G\).

As in the proof of Theorem \ref{th:r=2}, several of our hardness proofs utilize a perturbation argument. We briefly recall the main idea below. 
 Suppose that \(\mu_i=\eta\mathbf 1+(1-n\eta)\bar\mu_i\) for every \(i\in[r]\). Then, for every \(v\in V\) and \(i\in[r]\),
\(
\left|
\operatorname{cov}(v,\mu_i)
-
\operatorname{cov}(v,\bar\mu_i)
\right|
\leq n\eta.
\)
The function
\(
(p_1,\ldots,p_r)\longmapsto
1-\prod_{i=1}^r(1-p_i)
\)
is \(1\)-Lipschitz in each coordinate on \([0,1]^r\). Therefore, by changing the coordinates one at a time, replacing \((\bar\mu_1,\ldots,\bar\mu_r)\) with \((\mu_1,\ldots,\mu_r)\) changes the monitoring probability of each vertex by at most \(rn\eta\). Consequently, the average-case
objective changes by at most \(rn^2\eta\), whereas the worst-case objective changes by at most \(rn\eta\).
\begin{thm}
\label{thm:inapp-f-het}
For every \(\delta>0\), the heterogeneous average-case monitoring
problem admits no polynomial-time
\(\left(1-\frac{1}{e}+\delta\right)\)-approximation, unless
\(\mathrm{P}=\mathrm{NP}\), even when
\(\mathcal P=\Delta_V^\eta\) for a sufficiently small rational \(\eta>0\).
\end{thm}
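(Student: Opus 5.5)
We reduce from \textsc{Maximum Domination} (maximum coverage by closed neighborhoods): given \(G\) and \(r\), maximize \(|\N[S]|\) over \(S\subseteq V\) with \(|S|\le r\). By Feige's result~\cite{feige1998threshold}, this problem is hard to approximate within \(1-\frac1e+\delta'\) for every \(\delta'>0\) unless \(\mathrm{P}=\mathrm{NP}\). We use the form of the hardness that applies to Set Cover instances encoded as domination instances (e.g., via the bipartite incidence graph with a clique or a hub added on the set side, as in~\cite{MIYANO}). The first step is to show that, for the exact simplex \(\mathcal P=\Delta_V\), the equality \(F^*_{\het}(r)=\max_{|S|\le r}|\N[S]|\) holds. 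By multilinearity, as in the proof of Proposition~\ref{prop:exist-poly-fixed-r}, an optimum over \(\Delta_V^r\) is attained at extreme points \(e_{s_1},\dots,e_{s_r}\). For such a tuple the coverage of a vertex \(v\) is \(1\) if \(v\in\N[\{s_1,\dots,s_r\}]\) and \(0\) otherwise, so \(F_{\het}\) equals \(|\N[S]|\). Conversely, any set \(S\) of size at most \(r\) is realized by placing robots at its vertices, with repeats allowed.

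The second step passes to \(\Delta_V^\eta\) by the perturbation argument recalled above. Every \(\mu\in\Delta_V^\eta\) corresponds to a unique \(\bar\mu\in\Delta_V\), and the objective values differ by at most \(rn^2\eta\). Hence \(|F^*_{\het}(r;\Delta_V^\eta)-\OPT_{\mathrm{dom}}|\le rn^2\eta\). Moreover, any \(\eta\)-feasible solution of value \(A\) maps to a simplex solution of value at least \(A-rn^2\eta\). This simplex solution can be rounded to a set of size at most \(r\) without loss, because the objective is multilinear: we fix all robots but one and move that robot to its best vertex, which does not decrease the value. Choose \(\eta\le \frac{\delta}{4rn^2}\), which is a rational number of polynomial size, so that the additive error is at most \(\delta/4\).

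The third step is the approximation transfer. Suppose an algorithm achieves ratio \(1-\frac1e+\delta\) on \(\Delta_V^\eta\). It returns \(\mu_1,\dots,\mu_r\) with value at least \((1-\frac1e+\delta)(\OPT_{\mathrm{dom}}-\delta/4)\). After mapping back and rounding, we obtain \(S\) with \(|\N[S]|\ge(1-\frac1e+\delta)\OPT_{\mathrm{dom}}-\delta/2\). Since \(\OPT_{\mathrm{dom}}\ge 1\), this is at least \((1-\frac1e+\delta/2)\OPT_{\mathrm{dom}}\), which contradicts Feige's bound with \(\delta'=\delta/2\).

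The main obstacle is the first part of this transfer. It requires the additive perturbation to be negligible relative to \(\OPT\) and the rounding to be constructive in polynomial time. Both hold here because \(\OPT\ge1\), the error is \(O(rn^2\eta)\) with \(\eta\) chosen polynomially small, and the rounding uses only \(rn\) evaluations of the objective. A secondary concern is confirming that the \((1-1/e)\) threshold for maximum domination holds on general graphs. We handle this by citing~\cite{feige1998threshold,MIYANO}, where the closed-neighborhood coverage formulation is explicitly shown to have the same inapproximability as max \(k\)-coverage.
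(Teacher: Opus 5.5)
Your proposal is correct and follows essentially the same route as the paper. Both use the same reduction from \textsc{Maximum Domination} on the same graph and the same $r$, the $rn^2\eta\le\delta/4$ perturbation between $\Delta_V^\eta$ and $\Delta_V$, marginal-by-marginal multilinear rounding to a deterministic placement, and $\OPT\ge1$ to absorb the additive loss. One point you leave implicit: bounding the total loss $\rho\delta/4+\delta/4$ by $\delta/2$ requires $\rho\le 1$, i.e.\ $\delta\le 1/e$, which the paper states explicitly as a harmless without-loss-of-generality restriction.
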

\begin{proof}
It suffices to consider \(0<\delta<1/e\). Let \(\rho:=1-\frac{1}{e}+\delta,\) and suppose, for contradiction, that there exists a polynomial-time \(\rho\)-approximation algorithm for the heterogeneous average-case monitoring problem with \(\mathcal P=\Delta_V^\eta\). We show that this would yield a polynomial-time \(\left(1-\frac{1}{e}+\frac{\delta}{2}\right)\)-approximation for \textsc{Maximum Domination}, contradicting its known inapproximability~\cite{MIYANO}.

Consider an instance \((G,r)\) of \textsc{Maximum Domination}, let \(n:=|V(G)|\), and choose a rational number \(\eta>0\) satisfying
\(
\eta<\frac{1}{n}
\)
and
\(
rn^2\eta\leq\frac{\delta}{4}.
\)
Set \(\mathcal P:=\Delta_V^\eta\). Every
\(\mu_i\in\Delta_V^\eta\) can be written uniquely as
\(
\mu_i=\eta\mathbf 1+(1-n\eta)\bar\mu_i,
\)
where
\(
\bar\mu_i\in\Delta_V.
\)
By the perturbation bound established above, for every feasible tuple
\((\mu_1,\ldots,\mu_r)\in(\Delta_V^\eta)^r\) and its corresponding
normalized tuple \((\bar\mu_1,\ldots,\bar\mu_r)\), we have
\[
\left|
F_{\het}(r,\mu_1,\ldots,\mu_r)
-
F_{\het}(r,\bar\mu_1,\ldots,\bar\mu_r)
\right|
\leq rn^2\eta.
\]
Let \(\mathrm{OPT}_0\) denote the optimum over \((\Delta_V)^r\), and let \(\mathrm{OPT}_\eta\) denote the optimum over \((\Delta_V^\eta)^r\). The preceding bound and the affine correspondence between \(\Delta_V\) and \(\Delta_V^\eta\) imply that
\(
\left|\mathrm{OPT}_\eta-\mathrm{OPT}_0\right|
\leq rn^2\eta.
\)
The objective \(F_{\het}\) is affine in each marginal distribution when the remaining marginals are fixed. Hence a maximizer over \((\Delta_V)^r\) can be chosen so that every marginal is an extreme point of \(\Delta_V\), that is, a deterministic distribution \(e_v\). It follows that
\(
\mathrm{OPT}_0
=
\max_{\substack{S\subseteq V\\ |S|\leq r}}
|\N[S]|,
\)
which is precisely the optimum value of the given \textsc{Maximum Domination} instance.
Run the assumed approximation algorithm on the instance with \(\mathcal P=\Delta_V^\eta\), and let \((\mu_1,\ldots,\mu_r)\) be the returned solution. Then
\(
F_{\het}(r,\mu_1,\ldots,\mu_r)
\geq \rho\,\mathrm{OPT}_\eta.
\)
After normalizing the returned marginals as above, we obtain
\begin{align*}
F_{\het}(r,\bar\mu_1,\ldots,\bar\mu_r)
&\geq
F_{\het}(r,\mu_1,\ldots,\mu_r)-rn^2\eta\\
&\geq
\rho\,\mathrm{OPT}_\eta-rn^2\eta\\
&\geq
\rho\,\mathrm{OPT}_0-(\rho+1)rn^2\eta.
\end{align*}
Since \(\rho\leq 1\), \(rn^2\eta\leq\delta/4\), and
\(\mathrm{OPT}_0\geq 1\), it follows that
\[
F_{\het}(r,\bar\mu_1,\ldots,\bar\mu_r)
\geq
\left(\rho-\frac{\delta}{2}\right)\mathrm{OPT}_0
=
\left(1-\frac{1}{e}+\frac{\delta}{2}\right)\mathrm{OPT}_0.
\]
It remains to convert the normalized marginals into a deterministic
placement without decreasing the objective. Starting from
\((\bar\mu_1,\ldots,\bar\mu_r)\), process the marginals one at a time.
For a current tuple \((\nu_1,\ldots,\nu_r)\), the objective is affine
in \(\nu_i\) when all other marginals are fixed. Since
\(\nu_i=\sum_{v\in V}\nu_i(v)e_v\), we have
\(
F_{\het}(r,\nu_1,\ldots,\nu_r)
=
\sum_{v\in V}\nu_i(v)
F_{\het}(r,\nu_1,\ldots,\nu_{i-1},e_v,\nu_{i+1},\ldots,\nu_r).
\)
Hence, for some \(v_i\in V\), replacing \(\nu_i\) by \(e_{v_i}\)
does not decrease the objective. Such a vertex can be found by
evaluating the objective for all \(v\in V\). Repeating this procedure
for \(i=1,\ldots,r\) takes polynomial time and produces a
deterministic tuple \((e_{v_1},\ldots,e_{v_r})\).
Let \(S:=\{v_1,\ldots,v_r\}\). Then \(|S|\leq r\) and
\(
|\N[S]|
=
F_{\het}(r,e_{v_1},\ldots,e_{v_r})
\geq
\left(1-\frac{1}{e}+\frac{\delta}{2}\right)\mathrm{OPT}_0.
\)
This gives a polynomial-time \(\left(1-\frac{1}{e}+\frac{\delta}{2}\right)\)-approximation for \textsc{Maximum Domination}, a contradiction.
\end{proof}
We next establish the same inapproximability threshold for the heterogeneous worst-case monitoring problem.
\begin{thm}
\label{thm:inapp-het-worst}
For every \(\delta>0\), \(W^*_{\het}(r)\) cannot be approximated in polynomial time within a \(\left(1-\frac{1}{e}+\delta\right)\)-ratio, unless \(\mathrm{P}=\mathrm{NP}\), even when \(\mathcal P=\Delta_V^\eta\) for a sufficiently small \(\eta>0\).
\end{thm}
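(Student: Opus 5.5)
The plan is a gap-preserving reduction from the standard hardness of \textsc{Maximum Domination}, in its ``gap'' form. I will use the form that follows from Feige's set-cover hardness~\cite{feige1998threshold}, transferred to closed neighborhoods as in~\cite{MIYANO}. For every $\delta>0$ it is NP-hard to distinguish graphs $G$ with a dominating set of size $r$ (YES) from graphs in which every set $S$ with $|S|\le r$ satisfies $|\N[S]|\le (1-\frac1e+\frac{\delta}{2})\,n$ (NO).

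If this form is not directly available, I would derive it from Feige's set-cover gap as follows. Take a universe $U$ and a family of $m$ sets. Build a graph with one vertex per set, the set vertices forming a clique. For each element, add $M$ independent copies, each adjacent to the set vertices containing that element. Choose $M$ polynomially large, so that the $m$ set vertices are a negligible fraction of $n=m+M|U|$. A robot placed on an element copy can be moved to a set vertex containing that element without losing coverage. Hence the YES case gives a dominating set of size $r$, and the NO case gives at most $m+(1-\frac1e+\frac{\delta}{4})M|U|\le(1-\frac1e+\frac{\delta}{2})n$ dominated vertices.

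Given such $G$, set $\mathcal P=\Delta_V^\eta$ with $\eta$ rational and small enough that $rn\eta\le\delta/8$ and $rn^2\eta\le \delta n/8$.

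\textbf{YES case.} Place the normalized marginals $\bar\mu_i=e_{v_i}$ on a dominating set $\{v_1,\dots,v_r\}$. The normalized worst-case coverage is then $1$. By the perturbation bound stated before Theorem~\ref{thm:inapp-f-het}, the feasible marginals $\mu_i=\eta\mathbf 1+(1-n\eta)e_{v_i}$ satisfy $W_{\het}\ge 1-rn\eta$. Hence $W^*_{\het}(r)\ge 1-\delta/8$.

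\textbf{NO case.} The key observation is that the minimum is at most the average:
\[
W_{\het}(r,\mu_1,\dots,\mu_r)\le \frac1n F_{\het}(r,\mu_1,\dots,\mu_r).
\]
By the argument in the proof of Theorem~\ref{thm:inapp-f-het}, $F_{\het}$ over $(\Delta_V^\eta)^r$ is at most $\mathrm{OPT}_0+rn^2\eta$. This holds because of multilinearity, which lets the optimum over $(\Delta_V)^r$ be taken at deterministic placements. Here $\mathrm{OPT}_0=\max_{|S|\le r}|\N[S]|\le(1-\frac1e+\frac\delta2)n$. Dividing by $n$ gives $W^*_{\het}(r)\le 1-\frac1e+\frac{\delta}{2}+\frac{\delta}{8}$.

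\textbf{Conclusion.} A polynomial-time $(1-\frac1e+\delta)$-approximation would output a value of at least $(1-\frac1e+\delta)(1-\delta/8)$ in the YES case. This exceeds the NO-case upper bound $1-\frac1e+\frac{5\delta}{8}$ for small $\delta$, a quick arithmetic check. The approximation would therefore decide the gap problem, so P$=$NP.

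\textbf{Main obstacle.} I expect the main obstacle to be securing the gap version of the domination hardness, with ``fraction of all vertices'' as the measure in the NO case. Inapproximability of the optimization problem alone is not enough, since the min-to-average step needs the YES instances to attain full coverage. The blow-up construction with $M$ element copies and a clique on set vertices is what I would check carefully. In particular, I need to check that robots on element vertices never do better than robots on set vertices, which I expect to follow by the swap argument above.
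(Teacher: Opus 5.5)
Your argument is correct, but it is organized differently from the paper's.

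\textbf{What is shared.} Both proofs start from Feige's gap version of \textsc{Max $k$-Cover} and make the set-vertices a clique. Both use the same swap step: if $e\in S_j$, then every element-vertex of $e$ has closed neighborhood contained in $\N[a_j]$, so robots may be assumed to sit on set-vertices.

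\textbf{Where you differ.} The paper does not replicate element-vertices. In the NO case it views $\mu_1\otimes\cdots\otimes\mu_r$ as a mixture of deterministic configurations, each monitoring at most $\beta|U|$ element-vertices. It then averages only over the element-vertices to exhibit some $b_e$ with coverage at most $\beta$. The set-vertices never enter the average, so no blow-up is needed.

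You instead average over all of $V$ via $W_{\het}\le F_{\het}/n$. This forces the set-vertices to be a negligible fraction, hence the $M$-fold replication; $M\ge 4m/(\delta|U|)$ suffices and is polynomial for fixed $\delta$. In exchange, your NO case becomes an immediate corollary of the average-case analysis. The same two lines also give the centralized worst-case bound through $F_{\cent}\le\OPT_0$.

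\textbf{Two small remarks.} First, the additive $rn^2\eta$ in your NO bound is unnecessary: since $\Delta_V^\eta\subseteq\Delta_V$, multilinearity already gives $F_{\het}\le\OPT_0$ on $(\Delta_V^\eta)^r$. Second, you should state the standard assumption that every element lies in some set. Both your swap step and the connectivity of $G$ rely on it.
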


\begin{proof}
It suffices to consider \(0<\delta<1/e\). Let \(\rho:=1-\frac1e+\delta\), and set \(\beta:=1-\frac1e+\frac{\delta}{2}\). We reduce from the gap version of \textsc{Max \(k\)-Cover}~\cite{feige1998threshold}: it is NP-hard to distinguish between instances \((U,\mathcal S,k)\) in which some \(k\) sets cover all elements of \(U\), and instances in which every collection of \(k\) sets covers at most a \(\beta\)-fraction of \(U\). We assume, as in the standard formulation, that every element of \(U\) belongs to at least one set in \(\mathcal S\).

Given such an instance, construct a graph \(G=(V,E)\) as follows. For each set \(S_j\in\mathcal S\), create a set-vertex \(a_j\), and for each element \(e\in U\), create an element-vertex \(b_e\). Make all set-vertices pairwise adjacent, and connect \(a_j\) to \(b_e\) whenever \(e\in S_j\). There are no edges between element-vertices. Set the number of robots to \(r:=k\). Let \(n:=|V|\), and choose a rational number \(\eta>0\) such that \(0<\eta<1/n\) and \(\rho(1-rn\eta)>\beta\). Finally, set \(\mathcal P:=\Delta_V^\eta\).

We first present a simple property of this construction. For every deterministic configuration of the \(k\) robots, there is another deterministic configuration using only set-vertices that monitors at least the same set of element-vertices. Indeed, if a robot is placed on an element-vertex \(b_e\), then replacing it by any neighboring set-vertex \(a_j\) with \(e\in S_j\) preserves the monitoring of \(b_e\) and can only increase the set of monitored element-vertices. Consequently, in the NO case, every deterministic configuration monitors at most \(\beta|U|\) element-vertices.

Consider first the YES case. Let \(a_{j_1},\ldots,a_{j_k}\) be set-vertices corresponding to \(k\) sets whose union covers \(U\). In the standard simplex \(\Delta_V\), placing robot \(i\) deterministically on \(a_{j_i}\) monitors every element-vertex. It also monitors every set-vertex, since the set-vertices form a clique and at least one robot is placed in this clique. Thus the worst-case value over \(\Delta_V\) is \(1\). Replacing these deterministic distributions by their full-support perturbations in \(\Delta_V^\eta\) changes each vertex monitoring probability by at most \(rn\eta\). Therefore, in the YES case, \(W^*_{\het}(r)\ge 1-rn\eta\).

Now consider the NO case. Let \((\mu_1,\ldots,\mu_r)\in(\Delta_V^\eta)^r\) be any feasible tuple, and let \(\pi:=\mu_1\otimes\cdots\otimes\mu_r\) be the product distribution that it induces over deterministic robot configurations. For each \(e\in U\), define
\(
q_e:=\Pr_{s\sim\pi}\bigl[b_e\text{ is monitored in }s\bigr].
\)
Since every deterministic configuration in the support of \(\pi\) monitors at most \(\beta|U|\) element-vertices, we have
\[
\sum_{e\in U}q_e
=
\mathbb E_{s\sim\pi}
\bigl[
  \#\{\text{element-vertices monitored in }s\}
\bigr]
\le
\beta|U|.
\]
Hence some \(e\in U\) satisfies \(q_e\le\beta\). Since \(q_e\) is the monitoring probability of \(b_e\) under the tuple \((\mu_1,\ldots,\mu_r)\), it follows that
\(
W_{\het}(r,\mu_1,\ldots,\mu_r)\le q_e\le\beta.
\)
Since the tuple was arbitrary, \(W_{\het}^*(r)\le\beta\) in the NO case.

Suppose now that a polynomial-time \(\rho\)-approximation algorithm existed for the heterogeneous worst-case problem with \(\mathcal P=\Delta_V^\eta\). In the YES case, it would return a solution of value at least \(\rho(1-rn\eta)>\beta\), whereas in the NO case every feasible solution has value at most \(\beta\). Thus the algorithm would distinguish the YES and NO cases of the gap \textsc{Max \(k\)-Cover} instance, which is impossible unless \(\mathrm{P}=\mathrm{NP}\).
\end{proof}

\subsection{Approximation via the homogeneous problem}
\label{subsec:het-approx}

The heterogeneous model is strictly more expressive than the homogeneous one. A natural question is therefore to quantify the loss incurred by restricting attention to homogeneous strategies. The next result shows that solving the homogeneous problem and replicating the resulting distribution across all robots yields a \(\alpha_r =  1-\left(1-\frac{1}{r}\right)^r \)-approximation for the heterogeneous optimum.
We begin by stating the guarantee for the average-case objective. 
\begin{thm}
\label{thm:het-avg-alpha}
Let \((\mu_1^*,\dots,\mu_r^*)\in \mathcal{P}^r\) be an optimal heterogeneous solution for the average-case objective. If \(\widehat{\mu}\in\mathcal{P}\) is an optimal solution of the homogeneous problem and we replicate \(\widehat{\mu}\) across all \(r\) robots, then
\[
F^*_{\hom}(r)=F_{\hom}(r,\widehat{\mu})=F_{\het}(r,\widehat{\mu},\dots,\widehat{\mu})
\;\ge\;
\alpha_r F^*_{\het}(r)\ge \alpha_r F^*_{\hom}(r).
\]
Moreover, if \(\mathcal P\) has a polynomial-size rational linear description, then, for every \(\varepsilon>0\), convex optimization computes, in time polynomial in the input size and \(\log(1/\varepsilon)\), a homogeneous distribution whose replication across all \(r\) robots achieves a value of at least \((\alpha_r-\varepsilon)F_{\het}^*(r)\).
\end{thm}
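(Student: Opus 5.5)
The plan is to compare against the averaged distribution \(\bar\mu := \frac1r\sum_{i=1}^r \mu_i^*\). Since \(\mathcal P\) is convex, \(\bar\mu\in\mathcal P\), so \(\bar\mu\) is a feasible homogeneous strategy and \(F_{\hom}(r,\widehat\mu)\ge F_{\hom}(r,\bar\mu)\). It therefore suffices to prove the pointwise inequality
\[
\phi_r\bigl(\cov(v,\bar\mu)\bigr)\;\ge\;\alpha_r\Bigl(1-\prod_{i=1}^r\bigl(1-\cov(v,\mu_i^*)\bigr)\Bigr)\qquad\forall v\in V,
\]
and then sum over \(v\). The equalities \(F^*_{\hom}(r)=F_{\hom}(r,\widehat\mu)=F_{\het}(r,\widehat\mu,\dots,\widehat\mu)\) hold by definition. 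The final inequality \(\alpha_r F^*_{\het}(r)\ge\alpha_r F^*_{\hom}(r)\) is immediate, because homogeneous tuples are feasible heterogeneous tuples.

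For the pointwise inequality, fix \(v\) and write \(p_i:=\cov(v,\mu_i^*)\in[0,1]\). Linearity of \(\cov(v,\cdot)\) gives \(\cov(v,\bar\mu)=\bar p:=\frac1r\sum_i p_i\). By AM--GM, \(\prod_i(1-p_i)\le(1-\bar p)^r\), so \(1-\prod_i(1-p_i)\ge\phi_r(\bar p)\). That bound points the wrong way, however. We instead need an upper bound on the heterogeneous coverage in terms of \(\bar p\). The right bound is \(1-\prod_i(1-p_i)\le\min\{1,\sum_ip_i\}=\min\{1,r\bar p\}\), which follows from the union bound. It then remains to show the scalar inequality \(\phi_r(p)\ge\alpha_r\min\{1,rp\}\) for \(p\in[0,1]\). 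This follows from concavity of \(\phi_r\). On \([0,1/r]\), \(\phi_r\) lies above its chord from \((0,0)\) to \((1/r,\alpha_r)\), which is exactly \(\alpha_r rp\). On \([1/r,1]\), \(\phi_r\) is nondecreasing, so \(\phi_r(p)\ge\phi_r(1/r)=\alpha_r\). I expect this scalar step to be the only real content, and the concavity/chord argument settles it. It also shows that \(\alpha_r\) is exactly the factor at the kink \(p=1/r\).

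For the algorithmic part, \(F_{\hom}(r,\cdot)\) is concave and Lipschitz on \(\mathcal P\), with Lipschitz constant polynomial in \(n\) and \(r\). Given a polynomial rational description of \(\mathcal P\), the ellipsoid method or an interior-point method returns \(\tilde\mu\in\mathcal P\) with \(F_{\hom}(r,\tilde\mu)\ge F^*_{\hom}(r)-\varepsilon'\) in time polynomial in the input size and \(\log(1/\varepsilon')\). To convert this additive error into a multiplicative one, I would use the lower bound \(F^*_{\het}(r)\ge 1\). This holds because any \(\mu\in\mathcal P\) covers itself fully in expectation: \(\sum_v\cov(v,\mu)=\sum_u\mu(u)|\N[u]|\ge1\), and with one robot this already gives average coverage at least \(1\), so \(F^*_{\het}(r)\ge F_{\het}(1,\mu)\ge1\). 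Choosing \(\varepsilon'=\varepsilon\) then gives \(F_{\hom}(r,\tilde\mu)\ge\alpha_rF^*_{\het}(r)-\varepsilon\ge(\alpha_r-\varepsilon)F^*_{\het}(r)\). The subtle point here is justifying the monotonicity \(F^*_{\het}(r)\ge F^*_{\het}(1)\), which holds because adding robots only increases each vertex's coverage.
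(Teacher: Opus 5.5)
Your proposal is correct and follows the paper's proof essentially step for step. Like the paper, you average the optimal heterogeneous marginals into $\bar\mu$, bound the heterogeneous coverage by $\min\{1,\sum_i p_i\}$ via the union bound, and prove the scalar inequality by concavity below the kink and monotonicity above it; the paper writes the same inequality in the variable $s=r\bar p$, and your algorithmic part likewise matches, with your explicit justification of $F^*_{\het}(r)\ge 1$ filling in a detail the paper only asserts.
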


\begin{proof}
Let
\(
\bar{\mu}:=\frac{1}{r}\sum_{i=1}^r \mu_i^*.
\)
Since $\mathcal{P}$ is convex, we have $\bar{\mu}\in \mathcal{P}$. By optimality of $\widehat{\mu}$ for the homogeneous problem,
\(
F_{\hhom}(r,\widehat{\mu})\ge F_{\hhom}(r,\bar{\mu}).
\)
It therefore suffices to compare the homogeneous value of $\bar{\mu}$ with the heterogeneous optimum. Fix a vertex $v\in V$, and write
$
x_i(v):=\sum_{u\in \N[v]}\mu_i^*(u)$ and 
$s_v:=\sum_{i=1}^r x_i(v)$.
Then
\(
\sum_{u\in \N[v]}\bar{\mu}(u)=\frac{s_v}{r}.
\)
Hence the homogeneous monitoring probability at $v$ under $\bar{\mu}$ is
\(
1-\left(1-\frac{s_v}{r}\right)^r.
\) \\
On the other hand, by considering the monitoring probability under the heterogeneous model and applying the union bound, we obtain
\begin{equation}
    1 - \prod_{i=1}^r \bigl(1 - x_i(v)\bigr) \le \min\{1, s_v\}\,.
    \label{eq:prod}
\end{equation}
The left-hand side represents the probability that $v$ is monitored, while $s_v$ denotes the sum, over all robots, of the probabilities of covering~$v$, which yields the bound by the union bound.

\begin{claim}
 $1-\left(1-\frac{s}{r}\right)^r \ge \alpha_r \min\{1,s\}$ for $s\in[0,r]$.  
 \label{claim:het-avg-alpha}
 \end{claim}
\begin{proof}
Consider $f(s)=1-(1-s/r)^r$ on $[0,r]$. If $0\le s\le 1$, then $f$ is concave, $f(0)=0$, and $f(1)=\alpha_r$, so
$f(s)\ge s\,f(1)=\alpha_r s=\alpha_r\min\{1,s\}$.
If $1\le s\le r$, then $f$ is increasing, and therefore
$f(s)\ge f(1)=\alpha_r=\alpha_r\min\{1,s\}$. Thus
$1-\left(1-\frac{s}{r}\right)^r\ge \alpha_r\min\{1,s\}$
$\forall s\in[0,r]$.
\end{proof}
Applying Claim~\ref{claim:het-avg-alpha} for  \(s=s_v\) and using \eqref{eq:prod} gives
\[
1-\left(1-\sum_{u\in \N[v]}\bar{\mu}(u)\right)^r
=
1-\left(1-\frac{s_v}{r}\right)^r
\ge
\alpha_r \left(1-\prod_{i=1}^r(1-x_i(v))\right)\,.
\]
Summing over all \(v\in V\) yields 
$F_{\hhom}(r,\bar{\mu})\ge \alpha_r \,F_{\het}(r,\mu_1^*,\dots,\mu_r^*)
=
\alpha_r \,F^*_{het}(r)$.
Combining this with the optimality of \(\widehat{\mu}\) proves the first statement of the theorem.

For the computational statement, let \(\widehat{\mu}_\varepsilon\in\mathcal P\) be a homogeneous solution whose value is within additive error \(\varepsilon\) of the homogeneous optimum. Such a distribution can be computed by convex optimization in time polynomial in the input size and \(\log(1/\varepsilon)\). Replicating it across all \(r\) robots gives
\(
F_{\het}(r,\widehat{\mu}_\varepsilon,\dots,\widehat{\mu}_\varepsilon)
=
F_{\hom}(r,\widehat{\mu}_\varepsilon)
\ge
F_{\hom}^*(r)-\varepsilon
\ge
\alpha_r F_{\het}^*(r)-\varepsilon.
\)
Since \(F_{\het}^*(r)\ge1\) for every nonempty instance, we have
\(
\alpha_r F_{\het}^*(r)-\varepsilon
\ge
(\alpha_r-\varepsilon)F_{\het}^*(r).
\)
\end{proof}
The same argument applies to the worst-case objective, since the underlying comparison is pointwise in the vertex and therefore survives replacing the summation over \(v\) by a minimum.

\begin{corollary}
\label{cor:het-worst-alpha}
Let $\widehat{\mu} \in  \arg\max_{\mu\in\mathcal{P}} W_{\hhom}(r,\mu)
$. Then
\[
W^*_{\hhom}(r) = 
W_{\hhom}(r,\widehat{\mu}) = W_{\het}(r,\widehat{\mu},...,\widehat{\mu})
\;\ge\;
\alpha_r  \,W^*_{\het}(r) \ge \alpha_r  \,W^*_{\hhom}(r).
\] 
\end{corollary}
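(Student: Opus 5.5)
The plan is to reuse the pointwise comparison from the proof of Theorem~\ref{thm:het-avg-alpha}, replacing the sum over vertices by a minimum at the end. Let \((\mu_1^*,\dots,\mu_r^*)\in\mathcal P^r\) attain \(W^*_{\het}(r)\); this maximum exists since \(W_{\het}\) is continuous on the compact set \(\mathcal P^r\). Form the average \(\bar\mu:=\frac1r\sum_{i=1}^r\mu_i^*\), which lies in \(\mathcal P\) by convexity. The two equalities \(W^*_{\hhom}(r)=W_{\hhom}(r,\widehat\mu)=W_{\het}(r,\widehat\mu,\dots,\widehat\mu)\) follow directly from the definitions: \(\widehat\mu\) is a maximizer, and replicating \(\widehat\mu\) turns the product \(\prod_i(1-\cov(v,\widehat\mu))\) into \((1-\cov(v,\widehat\mu))^r\).

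For the main inequality, fix \(v\in V\) and set \(x_i(v)=\cov(v,\mu_i^*)\) and \(s_v=\sum_i x_i(v)\). As in \eqref{eq:prod}, the union bound gives \(1-\prod_i(1-x_i(v))\le\min\{1,s_v\}\). Claim~\ref{claim:het-avg-alpha} then gives
\[
\phi_r\bigl(\cov(v,\bar\mu)\bigr)=1-\Bigl(1-\frac{s_v}{r}\Bigr)^r\ge\alpha_r\min\{1,s_v\}\ge\alpha_r\,\cov(v,\mu_1^*,\dots,\mu_r^*).
\]
This inequality holds for each \(v\) separately. Taking the minimum over \(v\) on both sides preserves it, because \(\min_v a_v\ge\min_v \alpha_r b_v\) whenever \(a_v\ge\alpha_r b_v\) for all \(v\), and \(\alpha_r>0\). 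Hence \(W_{\hhom}(r,\bar\mu)\ge\alpha_r W^*_{\het}(r)\). Since \(\widehat\mu\) maximizes \(W_{\hhom}\) over \(\mathcal P\), we get \(W_{\hhom}(r,\widehat\mu)\ge W_{\hhom}(r,\bar\mu)\), which yields the middle inequality.

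For the last inequality, \(W^*_{\het}(r)\ge W^*_{\hhom}(r)\) because any homogeneous choice \((\mu,\dots,\mu)\) is feasible for the heterogeneous problem. Multiplying by \(\alpha_r\) completes the chain. I do not expect any real obstacle here. The only point to check is that nothing in the averaging argument depends on summation: the comparison is pointwise, and \(\min\) is monotone, so it goes through unchanged.
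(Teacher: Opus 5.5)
Your proposal is correct and follows essentially the same route as the paper: average the optimal heterogeneous marginals, apply the vertexwise union-bound comparison together with Claim~\ref{claim:het-avg-alpha}, and take the minimum over $v$ instead of summing. Your remarks that a heterogeneous maximizer exists by compactness and that $\alpha_r\ge 0$ lets the pointwise inequality pass through the minimum are details the paper leaves implicit.
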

The factor \( \left( 1-\left(1-\frac{1}{r}\right)^r\right)\) is tight under both objectives, as shown in Examples~\ref{ex:tight-connected} and~\ref{ex:tight-worst}.

\subsection{Star-concavity} 
As in the homogeneous case, $F^{*}_{\het}(r)$ is star-concave. In contrast, $W^{*}_{\het}(r)$ is not star-concave in general, as shown in Example~\ref{exp:concavity-het-worst}.
\begin{proposition}
\label{prop:starconcavity-het-f}
For every \(r\ge1\),
\(
\frac{F_{\het}^*(r+1)}{r+1}
\le
\frac{F_{\het}^*(r)}{r}.
\)
\end{proposition}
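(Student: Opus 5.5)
The plan is a leave-one-out averaging argument. Start from an optimal heterogeneous solution \((\mu_1^*,\dots,\mu_{r+1}^*)\in\mathcal P^{r+1}\) for \(r+1\) robots. For each index \(j\in\{1,\dots,r+1\}\), delete robot \(j\). The remaining \(r\) robots form a feasible \(r\)-robot heterogeneous solution, so its value is at most \(F^*_{\het}(r)\). It therefore suffices to show that the average over \(j\) of these \(r+1\) leave-one-out values is at least \(\frac{r}{r+1}F_{\het}(r+1,\mu_1^*,\dots,\mu_{r+1}^*)\).

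The comparison is made vertex by vertex, as in the proof of Proposition~\ref{prop:hom_star_concavity}. Fix \(v\), write \(x_i:=\cov(v,\mu_i^*)\in[0,1]\), and set \(q_i:=1-x_i\) and \(Q:=\prod_{i=1}^{r+1}q_i\). Removing robot \(j\) gives the monitoring probability \(1-\prod_{i\ne j}q_i\). The key inequality to prove is
\[
\sum_{j=1}^{r+1}\Bigl(1-\prod_{i\neq j}q_i\Bigr)\;\ge\; r\,(1-Q).
\]
This is a submodularity (diminishing-returns) statement. Order the robots and write \(1-Q=\sum_{k}g_k\), where \(g_k\) is the marginal gain of adding robot \(k\) after robots \(1,\dots,k-1\). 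Removing robot \(j\) loses exactly the marginal gain of \(j\) given all other robots. For set functions this loss is at most \(g_j\), because the marginal gain of \(j\) given all the others is at most its marginal gain given only its predecessors.

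Concretely, the loss from removing \(j\) equals \(x_j\prod_{i\ne j}q_i\), while \(g_j=x_j\prod_{i<j}q_i\), so the loss is at most \(g_j\). Summing over \(j\),
\[
\sum_j\Bigl(1-\prod_{i\ne j}q_i\Bigr)=(r+1)(1-Q)-\sum_j x_j\prod_{i\ne j}q_i\ \ge\ (r+1)(1-Q)-(1-Q)=r(1-Q).
\]

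Summing this inequality over \(v\in V\) gives
\[
\sum_j F_{\het}(r,\mu^*_{-j})\ \ge\ r\,F^*_{\het}(r+1),
\]
where \(\mu^*_{-j}\) denotes the tuple with robot \(j\) removed. Each term on the left is at most \(F^*_{\het}(r)\), so \((r+1)F^*_{\het}(r)\ge rF^*_{\het}(r+1)\), which is the claim.

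The main step to verify carefully is the telescoping bound \(\sum_j x_j\prod_{i\ne j}q_i\le 1-Q\). It follows directly from \(\prod_{i\ne j}q_i\le\prod_{i<j}q_i\), which holds since every \(q_i\in[0,1]\). This inequality is the only place where the probabilistic structure of the product coverage is used.

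The argument does not extend to the worst-case objective \(W_{\het}\). Averaging over \(j\) commutes with summation over vertices but not with the minimum over vertices, and this is consistent with Example~\ref{exp:concavity-het-worst}.
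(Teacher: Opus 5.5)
Your proof is correct and follows the same leave-one-out averaging argument as the paper: delete each robot in turn, establish the pointwise inequality \(\sum_{j} \cov(v,\mu^*_{-j}) \ge r\,\cov(v,\mu^*_1,\ldots,\mu^*_{r+1})\), sum over \(v\), and conclude that the best deletion is a feasible \(r\)-robot strategy. The only difference is in how the pointwise bound is justified. The paper uses a random-deletion coupling: deleting a uniformly random robot can destroy coverage only when exactly one robot covers \(v\). Your telescoping bound \(\sum_j x_j\prod_{i\ne j}q_i\le 1-Q\) is the algebraic form of that same fact, since its left-hand side is exactly the probability that exactly one robot covers \(v\).
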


\begin{proof}
Let \(\mu_1^*,\ldots,\mu_{r+1}^*\in\mathcal P\) be an optimal heterogeneous solution for \(r+1\) robots, so that
\(
F_{\het}^*(r+1)
=
F_{\het}(r+1,\mu_1^*,\ldots,\mu_{r+1}^*).
\)
For each \(i\in[r+1]\), consider the feasible \(r\)-robot strategy obtained by removing robot \(i\), namely
\(
\bigl(\mu_j^*\bigr)_{j\in[r+1]\setminus\{i\}}.
\)
We show that the average value of these \(r+1\) strategies is at least
\(\frac{r}{r+1}F_{\het}^*(r+1)\).
Fix a vertex \(v\in V\). Independently sample a location \(S_j\sim\mu_j^*\) for each \(j\in[r+1]\), and let
\(
C_v(S)
:=
\{j\in[r+1]:S_j\in\N[v]\}
\)
be the set of robots covering \(v\) in the resulting configuration. Now choose an index \(D\) uniformly from \([r+1]\) and delete robot \(D\).

If \(C_v(S)=\varnothing\), then \(v\) is uncovered both before and after the deletion. If \(C_v(S)\neq\varnothing\), deleting one robot preserves the coverage of \(v\) with probability at least \(r/(r+1)\): when exactly one robot covers \(v\), coverage is lost only if that robot is deleted, while if at least two robots cover \(v\), deleting a single robot cannot destroy coverage. Therefore,
\(
\Pr_D\!\left[
C_v(S)\setminus\{D\}\neq\varnothing
\,\middle|\, S
\right]
\ge
\frac{r}{r+1}\,
\mathbf 1_{\{C_v(S)\neq\varnothing\}}.
\)
Taking expectation over \(S\) and then averaging over the deleted index gives
\[
\frac{1}{r+1}
\sum_{i=1}^{r+1}
\cov\!\left(
v,
(\mu_j^*)_{j\in[r+1]\setminus\{i\}}
\right)
\ge
\frac{r}{r+1}
\cov(v,\mu_1^*,\ldots,\mu_{r+1}^*).
\]
Summing this pointwise inequality over all \(v\in V\), we obtain
\[
\frac{1}{r+1}
\sum_{i=1}^{r+1}
F_{\het}\!\left(
r,
(\mu_j^*)_{j\in[r+1]\setminus\{i\}}
\right)
\ge
\frac{r}{r+1}F_{\het}^*(r+1).
\]
Consequently, there exists an index \(i^*\in[r+1]\) such that
\(
F_{\het}\!\left(
r,
(\mu_j^*)_{j\in[r+1]\setminus\{i^*\}}
\right)
\ge
\frac{r}{r+1}F_{\het}^*(r+1).
\)
Since the remaining \(r\) distributions constitute a feasible heterogeneous strategy,
\(
F_{\het}^*(r)
\ge
\frac{r}{r+1}F_{\het}^*(r+1).
\)
Dividing by \(r\) proves the claimed inequality.
\end{proof}
Note that the proof above cannot be replicated for the worst-case objective function: the summation over all vertices is a crucial component of the argument.  

\section{Centralized Monitoring}
\label{sec:centralized}

We now turn to the centralized model, in which robot locations need not be independent. At a steady-state observation time, the team configuration is represented by a vector \(s=(s_1,\ldots,s_r)\in V^r\), where \(s_i\) is the location of robot \(i\). A centralized strategy is a joint distribution \(\pi\in\Delta_{V^r}\) over such configurations. The marginal distribution of robot \(i\) is given by
\(
\mu_i^\pi(u)
:=
\sum_{\substack{s\in V^r\\s_i=u}}\pi(s),\)
for \( u\in V,
\)
and feasibility requires \(\mu_i^\pi\in\mathcal P\) for every \(i\in[r]\). Thus, the centralized model preserves the individual feasibility constraints imposed in the independent models while allowing correlations between robot locations. Such correlations can reduce redundant coverage and improve the overall monitoring performance.
For a vertex \(v\in V\), its monitoring probability under \(\pi\) is
\[
\cov_{\cent}(v,\pi)
:=
\Pr_{s\sim\pi}\!\bigl[\exists i\in[r]:s_i\in\N[v]\bigr]
=
1-\sum_{s\in(V\setminus\N[v])^r}\pi(s).
\]
Accordingly, the centralized average- and worst-case objectives are
\[
F_{\cent}(r,\pi)
:=
\sum_{v\in V}\cov_{\cent}(v,\pi),
\qquad
W_{\cent}(r,\pi)
:=
\min_{v\in V}\cov_{\cent}(v,\pi),
\]
with optimal values
\[
F_{\cent}^*(r)
:=
\max_{\substack{\pi\in\Delta_{V^r}\\
                 \mu_i^\pi\in\mathcal P\;\forall i\in[r]}}
F_{\cent}(r,\pi),
\qquad
W_{\cent}^*(r)
:=
\max_{\substack{\pi\in\Delta_{V^r}\\
                 \mu_i^\pi\in\mathcal P\;\forall i\in[r]}}
W_{\cent}(r,\pi).
\]
Both objectives admit linear programming formulations with \(n^r\) joint-probability variables: the average-case objective is linear in \(\pi\), while the worst-case objective can be linearized by introducing a variable \(t\) to be maximized and imposing \(\cov_{\cent}(v,\pi)\ge t\) for every \(v\in V\). Consequently, both problems can be solved in time polynomial in \(n^r\) and in the description size of \(\mathcal P\), and hence in polynomial time whenever \(r\) is fixed. When \(r\) is part of the input, however, both problems are NP-hard and, as shown in Section~\ref{subsec:cent-hardness}, cannot be approximated within \(1-\nicefrac{1}{e}+\delta\) for any constant \(\delta>0\), unless \(\mathrm{P}=\mathrm{NP}\).

We first develop in Section~\ref{subsec:cent-approx} a hierarchy of block-coordinated approximations. For a block size \(r'\in[r]\), write \(r=hr'+b\), where \(h=\lfloor r/r'\rfloor\) and \(0\le b<r'\). The construction coordinates the robots within blocks of size \(r'\), repeats the same block strategy independently across the \(h\) full blocks, and obtains the strategy of the remaining \(b\) robots by uniform subsampling. It achieves the approximation factor
\begin{equation}
\label{eq:block-factor}
\alpha_{r,r'}
:=
1-\left(1-\frac{r'}{r}\right)^h
\left(1-\frac{b}{r}\right).
\end{equation}
For \(r'=1\), the construction recovers independent homogeneous monitoring and the factor \(\alpha_r=1-(1-1/r)^r\); for \(r'=r\), it coincides with the exact centralized problem. Section~\ref{subsec:cent-hardness} establishes the matching asymptotic inapproximability threshold, and the final subsection proves that both \(F_{\cent}^*(r)/r\) and \(W_{\cent}^*(r)/r\) are nonincreasing in the number of robots.

\subsection{A hierarchy of block-coordinated approximations}
\label{subsec:cent-approx}

We now give a family of approximations that interpolates between independent
homogeneous monitoring and fully centralized monitoring. Fix an integer
\(r'\in[r]\), and write \(r=hr'+b\), where \(h=\lfloor r/r'\rfloor\) and
\(0\le b<r'\). The algorithm coordinates the robots within blocks of size
\(r'\), uses independent copies of the same strategy for the \(h\) full
blocks, and derives the strategy of the remaining \(b\) robots by uniformly
subsampling one full block.
For \(m\ge 1\), let
\(
\mathcal Q_m
:=
\bigl\{\pi\in\Delta_{V^m}:\mu_i^\pi\in\mathcal P
\text{ for every }i\in[m]\bigr\}
\)
denote the set of feasible \(m\)-robot joint strategies. We first record the
subsampling property used throughout the construction.

\begin{lemma}
\label{lem:cent-uniform-subsampling}
Let \(\pi\in\mathcal Q_m\), let \(0\le q\le m\), and define
\(\operatorname{Sub}_q(\pi)\) as follows. Sample
\(S=(S_1,\ldots,S_m)\sim\pi\), choose
\(I=\{i_1<\cdots<i_q\}\) uniformly among the \(q\)-element subsets of
\([m]\), and return
\((S_{i_1},\ldots,S_{i_q})\). Then \(\operatorname{Sub}_q(\pi)\) is a
feasible \(q\)-robot strategy. Moreover, for every \(v\in V\),
\(
\cov_q\bigl(v,\operatorname{Sub}_q(\pi)\bigr)
\ge \frac{q}{m}\,\cov_m(v,\pi).
\)
For \(q=0\), the marginal condition is vacuous and the coverage probability
is defined to be zero.
\end{lemma}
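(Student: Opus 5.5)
The proof has two parts: feasibility of the marginals, and the coverage bound. We argue each by conditioning on the sampled configuration \(S\), mirroring the deletion argument in Proposition~\ref{prop:starconcavity-het-f}.

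\textbf{Feasibility.} Fix \(j\in[q]\) and compute the marginal of the \(j\)-th output coordinate. Conditioned on \(I\), this coordinate is \(S_{i_j}\), so its law is \(\mu^\pi_{i_j}\). Averaging over the uniform choice of \(I\), the marginal of coordinate \(j\) is
\[
\sum_{i=1}^m \Pr[i_j=i]\,\mu_i^\pi .
\]
This is a convex combination of the \(\mu_i^\pi\), since the weights \(\Pr[i_j=i]\) are nonnegative and sum to one. Each \(\mu_i^\pi\) lies in \(\mathcal P\) and \(\mathcal P\) is convex, so the marginal lies in \(\mathcal P\). Hence \(\operatorname{Sub}_q(\pi)\in\mathcal Q_q\). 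For \(q=0\) there is nothing to check.

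\textbf{Coverage.} Fix \(v\) and condition on \(S\). Let
\[
C:=\{i\in[m]:S_i\in\N[v]\}.
\]
If \(C=\varnothing\), then \(v\) is uncovered both under \(S\) and under the subsample, and the inequality holds trivially (both sides are \(0\)). If \(C\neq\varnothing\), pick any \(c\in C\). Then \(v\) is covered by the subsample whenever \(c\in I\), and \(\Pr[c\in I]=q/m\) because \(I\) is a uniform \(q\)-subset of \([m]\). Therefore
\[
\Pr_I\bigl[I\cap C\neq\varnothing \,\big|\, S\bigr]\ \ge\ \frac qm\,\mathbf 1_{\{C\neq\varnothing\}}.
\]
Taking expectation over \(S\sim\pi\) gives \(\cov_q(v,\operatorname{Sub}_q(\pi))\ge \frac qm\cov_m(v,\pi)\). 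The case \(q=0\) is consistent with the convention that the coverage is zero.

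No step is a real obstacle. The only points needing care are that the marginal of a fixed output coordinate is a mixture of the original marginals, which holds because \(I\) is independent of \(S\), and the \(q=0\) convention.
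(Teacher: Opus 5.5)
Your proof is correct and follows the paper's argument. Each output marginal is a convex combination of the original marginals; the paper writes the weights $\Pr[i_\ell=j]$ explicitly as $\binom{j-1}{\ell-1}\binom{m-j}{q-\ell}/\binom{m}{q}$. The coverage bound is obtained by conditioning on the configuration, where the paper computes the exact retention probability $1-\binom{m-c_v(s)}{q}/\binom{m}{q}$ and bounds it by the case $c_v(s)=1$, while you lower-bound it directly by the probability $q/m$ that one fixed covering robot is kept, which is the same estimate.
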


\begin{proof}
Suppose first that \(q\ge 1\). For every \(\ell\in[q]\), the \(\ell\)-th
marginal of \(\operatorname{Sub}_q(\pi)\) is
\(
\sum_{j=\ell}^{m-q+\ell}
\frac{\binom{j-1}{\ell-1}\binom{m-j}{q-\ell}}{\binom{m}{q}}
\,\mu_j^\pi.
\)
The coefficients are nonnegative and sum to one: they are precisely the
probabilities that \(j\) is the \(\ell\)-th smallest index in the uniformly
chosen set \(I\). Hence this marginal belongs to \(\mathcal P\), because
\(\mathcal P\) is convex.
Fix \(v\in V\), and condition on a configuration
\(s=(s_1,\ldots,s_m)\). Let
\(c_v(s):=|\{i\in[m]:s_i\in\N[v]\}|\). If \(c_v(s)\ge 1\), the probability
that the sampled \(q\)-subconfiguration still covers \(v\) is
\(
1-\frac{\binom{m-c_v(s)}{q}}{\binom{m}{q}}
\ge
1-\frac{\binom{m-1}{q}}{\binom{m}{q}}
=\frac qm.
\)
Averaging over \(s\sim\pi\) proves the coverage inequality. The assertion for
\(q=0\) is immediate.
\end{proof}
Let \(\pi\in\mathcal Q_{r'}\), set \(\pi^{(b)}:=\operatorname{Sub}_b(\pi)\), and define the repeated-block strategy \(\mathcal R_{r,r'}(\pi):=\pi^{\otimes h}\otimes\pi^{(b)},\)
where the last factor is omitted when \(b=0\). Lemma~\ref{lem:cent-uniform-subsampling}
implies that \(\mathcal R_{r,r'}(\pi)\) is feasible. For a fixed vertex \(v\), put \(x_v:=\cov_{r'}(v,\pi)\) and \(y_v:=\cov_b(v,\pi^{(b)})\). Since the blocks are independent,
\(
\cov_{\cent}\bigl(v,\mathcal R_{r,r'}(\pi)\bigr)
=1-(1-x_v)^h(1-y_v).
\)
Let \(\lambda:=b/r'\). The subsampling lemma gives \(y_v\ge\lambda x_v\), and therefore
\begin{equation}
\label{eq:block-pointwise-surrogate}
\cov_{\cent}\bigl(v,\mathcal R_{r,r'}(\pi)\bigr)
\ge g_{h,\lambda}(x_v),
\qquad
g_{h,\lambda}(x):=1-(1-x)^h(1-\lambda x).
\end{equation}
A direct differentiation shows that \(g_{h,\lambda}\) is nondecreasing and concave on \([0,1]\). We consequently optimize the surrogate objectives
\begin{equation}
\label{eq:block-surrogates}
\begin{aligned}
\widetilde F^{\mathrm{block}}_{r'}(r,\pi)
:=\sum_{v\in V}g_{h,\lambda}\bigl(\cov_{r'}(v,\pi)\bigr) \quad \text{and} \quad
\widetilde W^{\mathrm{block}}_{r'}(r,\pi)
:=\min_{v\in V}g_{h,\lambda}\bigl(\cov_{r'}(v,\pi)\bigr).
\end{aligned}
\end{equation}
By~\eqref{eq:block-pointwise-surrogate}, the value attained by the resulting repeated-block strategy is at least the corresponding surrogate value.

Both programs in~\eqref{eq:block-surrogates} are convex-optimization problems in the standard sense of maximizing a concave objective over a polytope. In fact, because \(g_{h,\lambda}\) is nondecreasing, maximizing the worst-case surrogate is equivalent to the linear program
\begin{equation}
\label{eq:block-worst-lp}
\begin{aligned}
\max_{\pi,t}\quad &t\\
\text{s.t.}\quad
&\cov_{r'}(v,\pi)\ge t &&\forall v\in V,\\
&\pi\in\mathcal Q_{r'}.
\end{aligned}
\end{equation}
The block formulation uses \(n^{r'}\) joint-probability variables, compared with \(n^r\) variables in the exact centralized formulation.

\begin{thm}
\label{thm:cent-block-approx}
Let \(r,r'\in\mathbb Z_{>0}\) satisfy \(1\le r'\le r\). Define the integers \(h:=\left\lfloor\frac{r}{r'}\right\rfloor\) and \(b:=r-hr',\) so that \(r=hr'+b\) and \(0\le b<r'\), and set \(\lambda:=b/r'\). Let \(\hat\pi_F\) and \(\hat\pi_W\) maximize,
respectively, \(\widetilde F^{\mathrm{block}}_{r'}(r,\pi)\) and \(\widetilde W^{\mathrm{block}}_{r'}(r,\pi)\) over \(\pi\in\mathcal Q_{r'}\). Then
\[
F_{\cent}\bigl(r,\mathcal R_{r,r'}(\hat\pi_F)\bigr)
\ge
\alpha_{r,r'}F_{\cent}^*(r)
\quad\text{and}\quad
W_{\cent}\bigl(r,\mathcal R_{r,r'}(\hat\pi_W)\bigr)
\ge
\alpha_{r,r'}W_{\cent}^*(r).
\]
If \(r'\) is constant and \(\mathcal P\) has a polynomial-size rational linear description, then the worst-case strategy is computable in polynomial time by~\eqref{eq:block-worst-lp}.
Moreover, for every \(\varepsilon>0\), convex optimization computes, in time polynomial in the input size and \(\log(1/\varepsilon)\), an average-case strategy with value at least \((\alpha_{r,r'}-\varepsilon)F_{\cent}^*(r).
\)
\end{thm}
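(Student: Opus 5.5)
Our plan is to mirror the proof of Theorem~\ref{thm:het-avg-alpha}: start from an optimal centralized strategy \(\pi^*\in\mathcal Q_r\), compress it to a feasible \(r'\)-robot strategy by uniform subsampling, and compare coverage pointwise in \(v\). The surrogate lower bound \eqref{eq:block-pointwise-surrogate} then transfers the comparison to the repeated-block strategy.

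\textbf{Step 1 (compression).} Set \(\pi':=\operatorname{Sub}_{r'}(\pi^*)\in\mathcal Q_{r'}\). By Lemma~\ref{lem:cent-uniform-subsampling}, \(\pi'\) is feasible. We will use a sharper form of the lemma's calculation: conditioned on a configuration \(s\) with \(c_v(s)=c\ge1\), the subsample covers \(v\) with probability at least \(\min\{1,cr'/r\}\cdot\)(something). It is simpler to use only the lemma's bound \(\cov_{r'}(v,\pi')\ge \frac{r'}{r}\cov_{\cent}(v,\pi^*)=:x^*_v\). Writing \(p_v:=\cov_{\cent}(v,\pi^*)\le1\), we get \(\cov_{r'}(v,\pi')\ge \frac{r'}{r}p_v\).

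\textbf{Step 2 (pointwise inequality).} Since \(g_{h,\lambda}\) is nondecreasing,
\[
g_{h,\lambda}\bigl(\cov_{r'}(v,\pi')\bigr)\ge g_{h,\lambda}\Bigl(\tfrac{r'}{r}p_v\Bigr).
\]
By concavity of \(g_{h,\lambda}\) with \(g_{h,\lambda}(0)=0\) and \(p_v\in[0,1]\), the right-hand side is at least \(p_v\,g_{h,\lambda}(r'/r)\). Now
\[
g_{h,\lambda}(r'/r)=1-\Bigl(1-\tfrac{r'}{r}\Bigr)^h\Bigl(1-\tfrac{b}{r'}\cdot\tfrac{r'}{r}\Bigr)=\alpha_{r,r'} .
\]
Hence \(g_{h,\lambda}(\cov_{r'}(v,\pi'))\ge\alpha_{r,r'}\,p_v\) for every \(v\). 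Summing over \(v\) gives \(\widetilde F^{\mathrm{block}}_{r'}(r,\pi')\ge\alpha_{r,r'}F^*_{\cent}(r)\). Taking the minimum over \(v\), and using \(\pi^*\) optimal for \(W\), gives \(\widetilde W^{\mathrm{block}}_{r'}(r,\pi')\ge\alpha_{r,r'}W^*_{\cent}(r)\). By the optimality of \(\hat\pi_F\) and \(\hat\pi_W\) for the surrogates, together with \eqref{eq:block-pointwise-surrogate}, the repeated-block strategies attain at least these surrogate values. This proves both inequalities.

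\textbf{Step 3 (computation).} For constant \(r'\), \eqref{eq:block-worst-lp} is an LP with \(n^{r'}\) variables plus the polynomially described marginal constraints, so it is solvable in polynomial time. The average-case surrogate is a concave maximization over the same polytope. An \(\varepsilon\)-additive solution gives value at least \(\alpha_{r,r'}F^*_{\cent}-\varepsilon\ge(\alpha_{r,r'}-\varepsilon)F^*_{\cent}\), since \(F^*_{\cent}\ge1\); this is the same step as in Theorem~\ref{thm:het-avg-alpha}. We also need to construct \(\mathcal R_{r,r'}\) explicitly: \(\operatorname{Sub}_b\) of an explicit \(n^{r'}\)-support distribution has explicit marginals, and \(\mathcal R_{r,r'}\) is then a product distribution.

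\textbf{Main obstacle.} The delicate point is verifying the concavity and monotonicity of \(g_{h,\lambda}\) on \([0,1]\), which the paper asserts by differentiation. Writing \(g=1-(1-x)^h(1-\lambda x)\), we have
\[
g'=(1-x)^{h-1}\bigl(h(1-\lambda x)+\lambda(1-x)\bigr)\ge0 .
\]
For the second derivative, \(-g''\) equals \((1-x)^{h-2}\) times \(h(h-1)(1-\lambda x)+2h\lambda(1-x)\ge0\). This holds when \(h\ge1\), which is guaranteed since \(r'\le r\). The edge cases \(b=0\) and \(r'=r\) (\(h=1\), \(b=0\), \(g(x)=x\)) must also be checked.
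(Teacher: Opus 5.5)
Your proof is correct and is essentially the paper's argument. You subsample an optimal centralized strategy to $\operatorname{Sub}_{r'}(\pi^*)$ and combine monotonicity and concavity of $g_{h,\lambda}$ with $g_{h,\lambda}(r'/r)=\alpha_{r,r'}$; applying monotonicity first to land in $[0,r'/r]$ is just a repackaging of the two-case Claim~\ref{claim:block-scalar-bound}. One cosmetic fix: drop the unfinished ``sharper form'' sentence in Step~1, which you never use and which would be false without a correction factor (e.g.\ $r=4$, $r'=c=2$ gives coverage probability $5/6<1$).
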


\begin{proof}
We first establish the comparison with an arbitrary feasible
centralized strategy \(\Pi\in\Delta_{V^r}\). Define
\(\bar\pi:=\operatorname{Sub}_{r'}(\Pi)\). By
Lemma~\ref{lem:cent-uniform-subsampling},
\(\bar\pi\in\mathcal Q_{r'}\) and, for every \(v\in V\),
\begin{equation}
\label{eq:block-sampled-coverage}
x_v:=\cov_{r'}(v,\bar\pi)
\ge
\frac{r'}{r}\cov_{\cent}(v,\Pi).
\end{equation}

\begin{claim}
\label{claim:block-scalar-bound}
For every \(x\in[0,1]\),
\(
g_{h,\lambda}(x)
\ge
\alpha_{r,r'}\min\{1,(h+\lambda)x\}.
\)
\end{claim}

\begin{proof}
Since \(h+\lambda=r/r'\), we have
\(
g_{h,\lambda}\left(\frac{1}{h+\lambda}\right)
=
1-\left(1-\frac{r'}{r}\right)^h
  \left(1-\frac{b}{r}\right)
=
\alpha_{r,r'}.
\)
For \(0\le x\le 1/(h+\lambda)\), the concavity of
\(g_{h,\lambda}\), together with \(g_{h,\lambda}(0)=0\), gives
\[
g_{h,\lambda}(x)
\ge
(h+\lambda)x\,
g_{h,\lambda}\left(\frac{1}{h+\lambda}\right)
=
\alpha_{r,r'}(h+\lambda)x.
\]
For \(x\ge 1/(h+\lambda)\), the monotonicity of \(g_{h,\lambda}\)
gives
\(
g_{h,\lambda}(x)
\ge
g_{h,\lambda}\left(\frac{1}{h+\lambda}\right)
=
\alpha_{r,r'}.
\)
The two cases prove the claim.
\end{proof}
By~\eqref{eq:block-sampled-coverage} and \((h+\lambda)r'/r=1\), we have
\(
(h+\lambda)x_v
\ge
\cov_{\cent}(v,\Pi).
\)
Since \(\cov_{\cent}(v,\Pi)\le1\), applying
Claim~\ref{claim:block-scalar-bound} yields the pointwise comparison
\begin{equation}
\label{eq:block-pointwise-comparison}
g_{h,\lambda}\bigl(\cov_{r'}(v,\bar\pi)\bigr)
\ge
\alpha_{r,r'}\cov_{\cent}(v,\Pi)
\qquad
\forall v\in V.
\end{equation}
Apply this construction first to an optimal average-case centralized
strategy \(\Pi_F^*\), and set
\(
\bar\pi_F:=\operatorname{Sub}_{r'}(\Pi_F^*).
\)
Thus, \(\bar\pi_F\) is the corresponding subsampled \(r'\)-robot strategy. By the optimality of \(\hat\pi_F\), together with \eqref{eq:block-pointwise-surrogate} and \eqref{eq:block-pointwise-comparison}, we obtain
\begin{align*}
F_{\cent}\bigl(r,\mathcal R_{r,r'}(\hat\pi_F)\bigr)
&\ge
\widetilde F^{\mathrm{block}}_{r'}(r,\hat\pi_F)\\
&\ge
\widetilde F^{\mathrm{block}}_{r'}(r,\bar\pi_F)\\
&=
\sum_{v\in V}
g_{h,\lambda}\bigl(\cov_{r'}(v,\bar\pi_F)\bigr)\\
&\ge
\alpha_{r,r'}
\sum_{v\in V}\cov_{\cent}(v,\Pi_F^*)\\
&=
\alpha_{r,r'}F_{\cent}^*(r).
\end{align*}
Similarly, let \(\Pi_W^*\) be an optimal worst-case centralized strategy and set
\(\bar\pi_W:=\operatorname{Sub}_{r'}(\Pi_W^*).\)
Using the optimality of \(\hat\pi_W\) and the same pointwise comparison gives
\begin{align*}
W_{\cent}\bigl(r,\mathcal R_{r,r'}(\hat\pi_W)\bigr)
&\ge
\widetilde W^{\mathrm{block}}_{r'}(r,\hat\pi_W)\\
&\ge
\widetilde W^{\mathrm{block}}_{r'}(r,\bar\pi_W)\\
&=
\min_{v\in V}
g_{h,\lambda}\bigl(\cov_{r'}(v,\bar\pi_W)\bigr)\\
&\ge
\alpha_{r,r'}
\min_{v\in V}\cov_{\cent}(v,\Pi_W^*)\\
&=
\alpha_{r,r'}W_{\cent}^*(r).
\end{align*}
For constant \(r'\), the feasible region \(\mathcal Q_{r'}\) has polynomial size. The worst-case statement follows from the linear program \eqref{eq:block-worst-lp}. The average-case surrogate is an explicitly evaluable concave function over this polytope and can be maximized to any
prescribed additive accuracy. If its value is within \(\varepsilon\) of the surrogate optimum, the preceding chain loses at most \(\varepsilon\). Finally, \(F_{\cent}^*(r)\ge1\) for every nonempty instance, so the resulting value is at least \((\alpha_{r,r'}-\varepsilon)F_{\cent}^*(r)\).
\end{proof}
\begin{remark}
\label{rem:cent-block-hierarchy}
At the two endpoints, the construction recovers the familiar models. If \(r'=1\), then \(h=r\) and \(b=0\); the surrogate program is exactly the homogeneous problem, and \(\alpha_{r,1}=1-(1-1/r)^r=\alpha_r\). This factor is tight for the comparison with homogeneous repetition; see
Example~\ref{ex:cent-alpha-tight}. If \(r'=r\), then \(h=1\), \(b=0\), \(g_{1,0}(x)=x\), and the block program is the exact centralized problem, with \(\alpha_{r,r}=1\).
\end{remark}
The next lemma confirms that the guarantee improves monotonically with the block size, including at the values of \(r'\) where \(\lfloor r/r'\rfloor\) changes.
\begin{lemma}
\label{lem:alpha-increasing-rprime}
For fixed \(r\), the factor \(\alpha_{r,r'}\) is nondecreasing in \(r'\in[r]\).
\end{lemma}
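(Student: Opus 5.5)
The plan is to simplify the uncovered factor \(1-\alpha_{r,r'}\) and then study it as a function of a continuous variable. Put \(x:=r'/r\in(0,1]\). Since \(b=r-hr'\), we have \(1-\frac{b}{r}=\frac{hr'}{r}=hx\). Hence
\[
1-\alpha_{r,r'}=U(x):=h\,x\,(1-x)^{h},\qquad h=\left\lfloor \tfrac{1}{x}\right\rfloor .
\]
The map \(r'\mapsto x\) is increasing, so it suffices to show that \(U\) is nonincreasing on \((0,1]\). I would prove this for real \(x\), not only at the points \(r'/r\), since that handles the jumps of \(h\) cleanly.

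\textbf{Step 1 (monotone on each piece).} On an interval \(x\in\bigl(\frac{1}{h+1},\frac{1}{h}\bigr]\), the integer \(h\) is constant. Differentiating gives
\[
U'(x)=h(1-x)^{h-1}\bigl(1-(h+1)x\bigr).
\]
The last factor is \(\le 0\) because \(x>1/(h+1)\). Hence \(U\) is nonincreasing on each such piece.

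\textbf{Step 2 (continuity at breakpoints).} The only place where monotonicity could fail is at a breakpoint \(x=1/h\) with \(h\ge 2\). There the value of \(\lfloor 1/x\rfloor\) drops from \(h\) to \(h-1\) as \(x\) crosses \(1/h\) from the left to the right.
\begin{itemize}
\item From the left piece, the value at \(x=1/h\) is \(h\cdot\frac1h(1-\frac1h)^h=(1-\frac1h)^h\).
\item From the right piece, the limit is \((h-1)\frac1h(1-\frac1h)^{h-1}=(1-\frac1h)^h\).
\end{itemize}
These agree, so \(U\) is continuous on \((0,1]\). A continuous function that is nonincreasing on each of a family of consecutive intervals covering \((0,1]\) is nonincreasing on all of \((0,1]\). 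Consequently \(\alpha_{r,r'}=1-U(r'/r)\) is nondecreasing in \(r'\).

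\textbf{Main obstacle.} The only delicate point is Step 2, the behaviour when \(\lfloor r/r'\rfloor\) changes. The key identity \(1-b/r=hx\) makes the two one-sided expressions coincide exactly. I would double-check the endpoint conventions: at \(x=1/(h+1)\) the floor equals \(h+1\), so the pieces are half-open on the left and closed on the right, and they tile \((0,1]\). As a sanity check, the case \(h=1\) (\(x\in(1/2,1]\)) gives \(U=x(1-x)\), which decreases to \(0\) at \(x=1\), consistent with \(\alpha_{r,r}=1\).
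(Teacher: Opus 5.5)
Your proof is correct and is essentially the paper's argument: the paper also rewrites $1-\alpha_{r,r'}$ as $q(t)=\frac{h(t-1)^h}{t^{h+1}}$ with $t=r/r'=1/x$, which is exactly your $hx(1-x)^h$, then shows monotonicity on each interval where $h$ is constant (via the logarithmic derivative $\frac{h+1-t}{t(t-1)}\ge 0$) and checks that the one-sided values agree at the breakpoints. Differentiating directly in $x=r'/r$ rather than taking $\log q$ in $t$ is a cosmetic difference, and it has the small advantage of not needing to exclude $t=1$, where $q=0$ and the logarithm is undefined.
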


\begin{proof}
Put \(t:=r/r'\ge1\), let \(h:=\lfloor t\rfloor\), and write \(t=h+\theta\) with \(0\le\theta<1\). Since \(b/r=\theta/t\), equation~\eqref{eq:block-factor} gives
\[
1-\alpha_{r,r'}
=q(t)
:=\frac{h}{t}\left(1-\frac1t\right)^h
=\frac{h(t-1)^h}{t^{h+1}}.
\]
On every interval \([h,h+1)\) with \(m\ge1\), and for \(t>1\),
\[
\frac{d}{dt}\log q(t)
=\frac{h}{t-1}-\frac{h+1}{t}
=\frac{h+1-t}{t(t-1)}
\ge0.
\]
Thus \(q\) is nondecreasing while \(h\) is fixed. There is no downward jump when \(h\) changes, because
\[
\lim_{t\uparrow h+1}q(t)
=\frac{h}{h+1}\left(\frac{h}{h+1}\right)^h
=\left(\frac{h}{h+1}\right)^{h+1}
=q(h+1).
\]
Hence \(q\) is nondecreasing on \([1,\infty)\). Increasing \(r'\) decreases \(t=r/r'\), so \(1-\alpha_{r,r'}=q(t)\) cannot increase. Therefore \(\alpha_{r,r'}\) is nondecreasing in \(r'\).
\end{proof}

\subsection{Hardness of Approximation}
\label{subsec:cent-hardness}

We now show that the \(1-\nicefrac{1}{e}\) approximation threshold persists in the centralized model, even under full-support simplex constraints. The two proofs use the same general principle: completeness is preserved by replacing deterministic marginals with sufficiently small full-support perturbations, while soundness follows by viewing every centralized strategy as a distribution over deterministic robot configurations.

\begin{proposition}
\label{prop:cent-avg-inapprox}
For every \(\delta>0\), the centralized average-case monitoring problem admits no polynomial-time
\((1-\nicefrac{1}{e}+\delta)\)-approximation unless
\(\mathrm{P}=\mathrm{NP}\), even when
\(\mathcal P=\Delta_V^\eta\) for a sufficiently small rational
\(\eta>0\).
\end{proposition}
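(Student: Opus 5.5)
We reduce from \textsc{Maximum Domination}, as in the proof of Theorem~\ref{thm:inapp-f-het}. Suppose a polynomial-time \(\rho\)-approximation with \(\rho:=1-\frac1e+\delta\) exists for the centralized average-case problem. Given an instance \((G,r)\) of \textsc{Maximum Domination} with \(n=|V|\), we choose a rational \(\eta\) with \(0<\eta<1/n\) and \(rn^2\eta\le\delta/4\), and set \(\mathcal P:=\Delta_V^\eta\). The goal is to turn the algorithm's output into a set \(S\) with \(|S|\le r\) and \(|\N[S]|\ge(1-\frac1e+\frac\delta2)\max_{|S'|\le r}|\N[S']|\), contradicting~\cite{MIYANO}.

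\textbf{Completeness.} Let \(S^*=\{v_1,\dots,v_r\}\) be an optimal dominating placement, with repeats allowed if needed. Take the product strategy \(\pi=\bigotimes_i(\eta\mathbf 1+(1-n\eta)e_{v_i})\). It is feasible because every marginal lies in \(\Delta_V^\eta\). By the perturbation bound stated before Theorem~\ref{thm:inapp-f-het}, its value is at least \(|\N[S^*]|-rn^2\eta\). Hence \(F^*_{\cent}(r)\ge \OPT_0-rn^2\eta\), where \(\OPT_0:=\max_{|S|\le r}|\N[S]|\).

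\textbf{Soundness.} Every \(\pi\in\Delta_{V^r}\) is a distribution over deterministic configurations \(s\in V^r\), and \(F_{\cent}(r,\pi)=\mathbb E_{s\sim\pi}\bigl[|\N[\{s_1,\dots,s_r\}]|\bigr]\le\OPT_0\). This holds with no constraint on the marginals. So \(F^*_{\cent}(r)\le\OPT_0\), and the algorithm returns some \(\pi\) with \(F_{\cent}(r,\pi)\ge\rho(\OPT_0-rn^2\eta)\ge(\rho-\delta/4)\OPT_0\), using \(\OPT_0\ge1\).

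\textbf{Rounding: the main obstacle.} The returned \(\pi\) may be described only implicitly, or may have huge support, so we cannot simply enumerate its support. The first attempt is to assume the algorithm outputs \(\pi\) explicitly as a list of configurations with probabilities, which is polynomial length since the output is produced in polynomial time. Then some configuration \(s\) in its support satisfies \(|\N[\{s_i\}]|\ge F_{\cent}(r,\pi)\), and scanning the support finds it in polynomial time. This yields the desired \((1-\frac1e+\frac\delta2)\)-approximation for \textsc{Maximum Domination}.

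If instead one only insists on comparing values, the cleaner route is a gap formulation, as in Theorem~\ref{thm:inapp-het-worst}. Use the gap version of \textsc{Max \(k\)-Cover}~\cite{feige1998threshold} on the set–element graph from that proof (set-vertices forming a clique). Put \(r:=k\), with the number of elements large relative to the number of sets, so that set-vertices are negligible in the average. In the YES case the value is at least \(n-rn^2\eta\). In the NO case every configuration covers at most \(|\mathcal S|+\beta|U|\) vertices, and by the expectation argument above this bounds every centralized strategy. An approximation algorithm would therefore separate the two cases. To make the vertex count dominated by element-vertices, replicate each element-vertex into many copies with identical neighborhoods; this preserves both bounds. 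I would present this gap version as the main proof and the explicit-support rounding as a remark.
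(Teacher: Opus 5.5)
Your proof is correct. Its two core estimates are exactly the paper's: $\OPT_0-rn^2\eta\le F^*_{\cent}(r)\le\OPT_0$, obtained from perturbed product marginals (completeness) and from viewing every $\pi$ as a distribution over deterministic configurations (soundness). Where you differ is in how you turn these bounds into a contradiction.

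The paper invokes a gap version of \textsc{Maximum Domination} directly on the input instance $(G,r)$: it is hard to distinguish $\OPT_0\ge K$ from $\OPT_0\le\beta K$. It then chooses $\eta$ so that $\rho(K-rn^2\eta)>\beta K$. The algorithm therefore only has to separate values, so no rounding and no new graph are needed.

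Your explicit-support rounding is an optimization-to-optimization reduction in the style of Theorem~\ref{thm:inapp-f-het}. It needs only the plain inapproximability of \textsc{Maximum Domination}, but it depends on the algorithm returning $\pi$ as an explicit sparse list.

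Your preferred gap argument via \textsc{Max $k$-Cover} also works, with two details made precise. The NO-case bound becomes $|\mathcal S|+\beta c|U|$ for replication factor $c$. The replacement argument of Theorem~\ref{thm:inapp-het-worst} survives replication, because moving a robot from a copy of $b_e$ to a set-vertex containing $e$ covers all copies of $b_e$. A polynomial $c$ with $|\mathcal S|/(c|U|)$ small relative to $\rho-\beta$ then suffices.

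That route, however, essentially re-derives the gap hardness of \textsc{Maximum Domination} from Feige's theorem. Given the two-sided bound you already proved, applying the gap version of \textsc{Maximum Domination} to $G$ itself makes the set--element construction and the replication unnecessary. What your route buys is self-containment: it depends only on Feige's theorem, which is explicitly in gap form, rather than on a gap formulation attributed to \cite{MIYANO}.
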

\begin{proof}
It suffices to consider \(0<\delta<1/e\). Let
\(
\rho:=1-\frac1e+\delta
\)
and
\(
\beta:=1-\frac1e+\frac{\delta}{2}.
\)
We reduce from the gap version of \textsc{Maximum Domination}~\cite{MIYANO}. It is NP-hard to distinguish instances \((G,r)\) for which the maximum number of vertices dominated by at most \(r\) vertices is at least \(K\) from instances for which it is at most \(\beta K\).
Given such an instance, let \(n:=|V|\) and use the same graph \(G\) and the same number \(r\) of robots. Choose a rational number \(\eta\) satisfying
\(
0<\eta<\frac1n
\)
and
\(
rn^2\eta<\frac{\rho-\beta}{2\rho},
\)
and set \(\mathcal P:=\Delta_V^\eta\). Let \(\OPT_0\) denote the optimum value of \textsc{Maximum Domination} on \((G,r)\), and let \(\OPT_\eta:=F_{\cent}^*(r)\) denote the centralized average-case optimum under \(\Delta_V^\eta\).

\noindent We first compare these two values. For every centralized strategy \(\pi\), each configuration \(s=(s_1,\ldots,s_r)\) in the support of \(\pi\) monitors at most \(\OPT_0\) vertices. Therefore,
\[
F_{\cent}(r,\pi)
=
\mathbb E_{s\sim\pi}
\left[
  \left|\bigcup_{i=1}^r\N[s_i]\right|
\right]
\le \OPT_0,
\]
and hence \(\OPT_\eta\le\OPT_0\).
Conversely, let \(s_1,\ldots,s_r\) be an optimal deterministic solution to \textsc{Maximum Domination}, with repetitions if necessary, and define
\(
\mu_i:=\eta\mathbf 1+(1-n\eta)e_{s_i},
\)
for
\(
i\in[r].
\)
The product distribution with marginals \(\mu_1,\ldots,\mu_r\) is a feasible centralized strategy. Relative to the deterministic placement, the single-robot coverage probability of any vertex changes by at most \(n\eta\). Consequently, its monitoring probability under the \(r\)-robot product strategy changes by at most \(rn\eta\). Summing over all vertices gives
\(
\OPT_\eta
\ge
\OPT_0-rn^2\eta.
\)
Thus,
\(
\OPT_0-rn^2\eta
\le
\OPT_\eta
\le
\OPT_0.
\)
\noindent Suppose now that a polynomial-time \(\rho\)-approximation algorithm existed. In the YES case, it would return a solution of value at least
\[
\rho\OPT_\eta
\ge
\rho(\OPT_0-rn^2\eta)
\ge
\rho(K-rn^2\eta)
>
\beta K,
\]
where the strict inequality follows from the choice of \(\eta\) and the fact that \(K\ge1\). In the NO case, every centralized strategy has value at most
\(\OPT_0\le\beta K\). The algorithm would therefore distinguish the two cases of the gap \textsc{Maximum Domination} instance, contradicting \(\mathrm{P}\ne\mathrm{NP}\).
\end{proof}
We next show that allowing arbitrary correlations among robot locations does not improve the inapproximability threshold for the worst-case objective.
\begin{thm}
\label{thm:cent-worst-inapprox}
For every \(\delta>0\), the centralized worst-case monitoring problem admits no polynomial-time
\((1-\nicefrac{1}{e}+\delta)\)-approximation unless
\(\mathrm{P}=\mathrm{NP}\), even when
\(\mathcal P=\Delta_V^\eta\) for a sufficiently small rational
\(\eta>0\).
\end{thm}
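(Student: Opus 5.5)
The plan is to reuse the reduction from the proof of Theorem~\ref{thm:inapp-het-worst} essentially verbatim. The only change is that soundness must now hold for arbitrary joint distributions rather than product distributions.

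Fix \(0<\delta<1/e\), and set \(\rho:=1-\frac1e+\delta\) and \(\beta:=1-\frac1e+\frac{\delta}{2}\). We reduce from the gap version of \textsc{Max \(k\)-Cover}~\cite{feige1998threshold}, assuming every element lies in some set. The graph \(G\) is the same as before. It has a clique of set-vertices \(a_j\), and element-vertices \(b_e\) with \(a_j b_e\in E\) iff \(e\in S_j\). We take \(r:=k\) and \(\mathcal P:=\Delta_V^\eta\), where the rational \(\eta\) satisfies \(0<\eta<1/n\) and \(\rho(1-rn\eta)>\beta\).

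\textbf{Completeness.} In the YES case, place the \(k\) robots deterministically on the set-vertices of a covering subfamily. This monitors every element-vertex, and also every set-vertex because the set-vertices form a clique. Next perturb each marginal to \(\eta\mathbf 1+(1-n\eta)e_{a_{j_i}}\) and take the product distribution, which is a feasible centralized strategy. By the perturbation bound recalled before Theorem~\ref{thm:inapp-f-het}, every vertex's monitoring probability drops by at most \(rn\eta\). Hence \(W^*_{\cent}(r)\ge 1-rn\eta\).

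\textbf{Soundness.} This is the step requiring care, but the argument in Theorem~\ref{thm:inapp-het-worst} never actually used independence. Let \(\pi\) be any feasible joint distribution over \(V^r\), and note that marginal feasibility is irrelevant for this upper bound. Every deterministic configuration \(s\) monitors at most \(\beta|U|\) element-vertices in the NO case. Indeed, moving a robot from \(b_e\) to an adjacent \(a_j\) only enlarges the monitored set of element-vertices, so the configuration can be assumed to use only set-vertices, i.e.\ at most \(k\) sets. By linearity of expectation,
\[
\sum_{e\in U}\cov_{\cent}(b_e,\pi)=\mathbb E_{s\sim\pi}\bigl[\#\{e: b_e \text{ monitored in } s\}\bigr]\le\beta|U|.
\]
So some \(e\) has \(\cov_{\cent}(b_e,\pi)\le\beta\), and therefore \(W^*_{\cent}(r)\le\beta\).

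\textbf{Conclusion.} A polynomial-time \(\rho\)-approximation would return a value of at least \(\rho(1-rn\eta)>\beta\) on YES instances. On NO instances every strategy has value at most \(\beta\). The algorithm would thus decide gap \textsc{Max \(k\)-Cover}, which is impossible unless \(\mathrm P=\mathrm{NP}\).

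The only delicate point I anticipate is confirming that the correlation allowed in the centralized model does not break soundness. The averaging argument settles this, since it depends only on \(\pi\) being a distribution over deterministic configurations. A secondary point is confirming that \(\eta\) can be chosen with polynomial bit size; this holds because \(rn\eta<1-\beta/\rho\) is a fixed positive constant gap.
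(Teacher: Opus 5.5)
Your proposal is correct and follows the paper's own proof essentially step for step. Both reuse the \textsc{Max \(k\)-Cover} clique construction with the same choice of \(\eta\), obtain completeness from the perturbed product strategy giving \(W^*_{\cent}(r)\ge 1-rn\eta\), and obtain soundness by averaging over deterministic configurations in the support of an arbitrary \(\pi\), which never uses independence. Your remark on the polynomial bit size of \(\eta\) is a small addition that the paper leaves implicit.
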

\begin{proof}
The proof follows the same reduction as Theorem~\ref{thm:inapp-het-worst}; the pointwise soundness bound used there remains valid for arbitrary joint distributions. The complete proof is given in Appendix~\ref{appendix:section:centralized}.
\end{proof}
\subsection{Star-concavity} We next study how the centralized optimum evolves with the number of robots. As in the homogeneous case, both centralized objectives satisfy a decreasing-ratio property, or equivalently, $F_{\cent}^*(r)$ and $W_{\cent}^*(r)$ are star-concave in $r$. 

\begin{proposition}
\label{prop:cent-star-concavity}
Both $F_{\cent}^*(r)$ and $W_{\cent}^*(r)$ are star-concave in $r$, i.e.,
$
\frac{F_{\cent}^*(r+1)}{r+1}\le \frac{F_{\cent}^*(r)}{r}$ and 
$\frac{W_{\cent}^*(r+1)}{r+1}\le \frac{W_{\cent}^*(r)}{r}$.
\end{proposition}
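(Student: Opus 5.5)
The plan is to reduce both inequalities to Lemma~\ref{lem:cent-uniform-subsampling}, used with \(m=r+1\) and \(q=r\). Let \(\Pi^*\in\mathcal Q_{r+1}\) be an optimal centralized strategy for \(r+1\) robots, for either objective; it exists because both problems are linear programs over a compact polytope. Then set \(\pi:=\operatorname{Sub}_r(\Pi^*)\). The lemma gives two facts. First, \(\pi\in\mathcal Q_r\), since every marginal of \(\pi\) is a convex combination of the marginals \(\mu_j^{\Pi^*}\in\mathcal P\), and \(\mathcal P\) is convex. Second, for every \(v\in V\),
\[
\cov_{\cent}(v,\pi)\ \ge\ \frac{r}{r+1}\,\cov_{\cent}(v,\Pi^*).
\]
The reason is that a configuration covering \(v\) with \(c\ge1\) robots keeps coverage after deleting one uniformly random robot with probability \(1-\binom{r+1-c}{r}/\binom{r+1}{r}\ge 1-\frac{1}{r+1}\).

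With this pointwise inequality in hand, the two objectives follow directly.
\begin{itemize}
\item \textbf{Average case.} Take \(\Pi^*\) optimal for \(F_{\cent}^*(r+1)\) and sum over \(v\). This gives \(F_{\cent}(r,\pi)\ge\frac{r}{r+1}F_{\cent}^*(r+1)\), hence \(F_{\cent}^*(r)\ge \frac{r}{r+1}F_{\cent}^*(r+1)\).
\item \textbf{Worst case.} Take \(\Pi^*\) optimal for \(W_{\cent}^*(r+1)\) and take the minimum over \(v\) instead of the sum. The inequality holds vertex by vertex with the same constant, so the minimum is preserved. This gives \(W_{\cent}(r,\pi)\ge\frac{r}{r+1}W_{\cent}^*(r+1)\), hence \(W_{\cent}^*(r)\ge\frac{r}{r+1}W_{\cent}^*(r+1)\).
\end{itemize}
Dividing each inequality by \(r\) gives the claimed decreasing-ratio properties.

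The only delicate point is feasibility of the reduced strategy, and Lemma~\ref{lem:cent-uniform-subsampling} already handles it. This is also why the centralized worst case behaves differently from the heterogeneous one (Example~\ref{exp:concavity-het-worst}). In the heterogeneous proof of Proposition~\ref{prop:starconcavity-het-f}, the deleted robot had to be a single fixed index chosen by averaging over vertices. That averaging step breaks under a minimum over vertices.

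In the centralized model, the randomization over which robot to delete can instead be absorbed into the joint distribution itself. The random deletion then produces a single correlated strategy, not a mixture of strategies, so no averaging over vertices is needed. This is the key observation, and the rest is routine.
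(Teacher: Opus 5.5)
Your proof is correct and is essentially the paper's argument. Applied to an $(r+1)$-robot strategy, $\operatorname{Sub}_r$ is exactly the paper's ``delete a uniformly random robot'' distribution $\pi'$. The paper simply re-derives inline the two facts you import from Lemma~\ref{lem:cent-uniform-subsampling}: the marginals $\frac{r+1-i}{r+1}\mu_i+\frac{i}{r+1}\mu_{i+1}\in\mathcal P$ and the pointwise bound $\cov_{\cent}(v,\pi')\ge\frac{r}{r+1}\cov_{\cent}(v,\pi^{r+1})$. It then sums or takes the minimum over $v$, as you do.
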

\begin{proof}
Let \(\pi^{r+1}\in\Delta_{V^{r+1}}\), and assume that the marginal distribution \(\mu_i\) of each robot \(i\) belongs to \(\mathcal P\). Using \(\pi^{r+1}\), we construct a distribution \(\pi'\) over configurations of \(r\) positions. For \(s=(s_1,\ldots,s_{r+1})\in V^{r+1}\) and \(d\in[r+1]\), let
\(
\operatorname{Del}_d(s)
:=
(s_1,\ldots,s_{d-1},s_{d+1},\ldots,s_{r+1})
\in V^r
\)
denote the configuration obtained by deleting the \(d\)-th coordinate of \(s\). Define
\(
\pi'_{s'}
:=
\frac{1}{r+1}
\sum_{d=1}^{r+1}
\sum_{\substack{s\in V^{r+1}\\
\operatorname{Del}_d(s)=s'}}
\pi^{r+1}_s,
\)
for
\(
s'\in V^r.
\)
Equivalently, one samples \(s\sim\pi^{r+1}\), chooses an index \(d\in[r+1]\) uniformly at random, and deletes the \(d\)-th robot. Notice that different deletion indices are counted separately, even when they produce the same configuration \(s'\). The resulting masses sum to one, since
\[
\sum_{s'\in V^r}\pi'_{s'}
=
\frac{1}{r+1}
\sum_{d=1}^{r+1}
\sum_{s'\in V^r}
\sum_{\substack{s\in V^{r+1}\\
\operatorname{Del}_d(s)=s'}}
\pi^{r+1}_s
=
\frac{1}{r+1}
\sum_{d=1}^{r+1}
\sum_{s\in V^{r+1}}\pi^{r+1}_s
=1.
\]
For any vertex \(v\in V\), we can therefore write
\begin{align}
\cov_{\cent}(v,\pi')
&=
\frac{1}{r+1}
\sum_{s\in V^{r+1}}
\pi^{r+1}_s
\left|
\left\{
d\in[r+1]:
\operatorname{Del}_d(s)
\text{ covers }v
\right\}
\right|.
\label{eq:centsta}
\end{align}
If \(s\) does not cover \(v\), none of the resulting configurations covers \(v\). Suppose instead that \(s\) covers \(v\). If exactly one robot covers \(v\), then deleting any of the other \(r\) robots preserves coverage. If at least two robots cover \(v\), deleting any single robot preserves coverage. Consequently, every configuration \(s\) that covers \(v\) has at least \(r\) deletion indices for which \(\operatorname{Del}_d(s)\) still covers \(v\). It follows from
\eqref{eq:centsta} that
\begin{align}
\cov_{\cent}(v,\pi')
&\ge
\frac{r}{r+1}
\sum_{\substack{s\in V^{r+1}\\
\exists\,j\in[r+1]:s_j\in\N[v]}}
\pi^{r+1}_s
\nonumber\\
&=
\frac{r}{r+1}
\cov_{\cent}(v,\pi^{r+1}).
\label{eq:covc}
\end{align}
It remains to verify that every marginal distribution under \(\pi'\) belongs to \(\mathcal P\). Fix \(i\in[r]\) and \(v\in V\). If the deleted index satisfies \(d>i\), then the \(i\)-th coordinate of \(\operatorname{Del}_d(s)\) is \(s_i\). If \(d\le i\), then its \(i\)-th coordinate is \(s_{i+1}\). Therefore,
\begin{align}
\mu'_i(v)
&=
\frac{1}{r+1}
\left(
\sum_{d=i+1}^{r+1}
\sum_{\substack{s\in V^{r+1}\\s_i=v}}
\pi^{r+1}_s
+
\sum_{d=1}^{i}
\sum_{\substack{s\in V^{r+1}\\s_{i+1}=v}}
\pi^{r+1}_s
\right)
\nonumber\\
&=
\frac{r+1-i}{r+1}\mu_i(v)
+
\frac{i}{r+1}\mu_{i+1}(v).
\end{align}
Hence
\(
\mu'_i
=
\frac{r+1-i}{r+1}\mu_i
+
\frac{i}{r+1}\mu_{i+1}.
\)
Since \(\mathcal P\) is convex and
\(\mu_i,\mu_{i+1}\in\mathcal P\), we obtain
\(\mu'_i\in\mathcal P\). Thus, \(\pi'\) is a feasible centralized
strategy for \(r\) robots.
Now let \(\pi^{r+1}\) be optimal for the average-case objective.
Summing \eqref{eq:covc} over all vertices gives
\(
F_{\cent}^*(r)
\ge
F_{\cent}(r,\pi')
\ge
\frac{r}{r+1}F_{\cent}^*(r+1).
\)
Similarly, if \(\pi^{r+1}\) is optimal for the worst-case objective,
taking the minimum over \(v\) in \eqref{eq:covc} gives
\[
W_{\cent}^*(r)
\ge
W_{\cent}(r,\pi')
\ge
\frac{r}{r+1}W_{\cent}^*(r+1).
\]
Equivalently,
\[
\frac{F_{\cent}^*(r+1)}{r+1}
\le
\frac{F_{\cent}^*(r)}{r}
\quad
\text{and}
\quad
\frac{W_{\cent}^*(r+1)}{r+1}
\le
\frac{W_{\cent}^*(r)}{r}.\qedhere
\]
\end{proof}

\FloatBarrier

\section{Numerical Experiments and Illustrations}
\begin{figure}[t]
    \centering
    \begin{minipage}[t]{0.48\linewidth}
    \vspace{0pt}
        \centering
        \includegraphics[width=\linewidth]{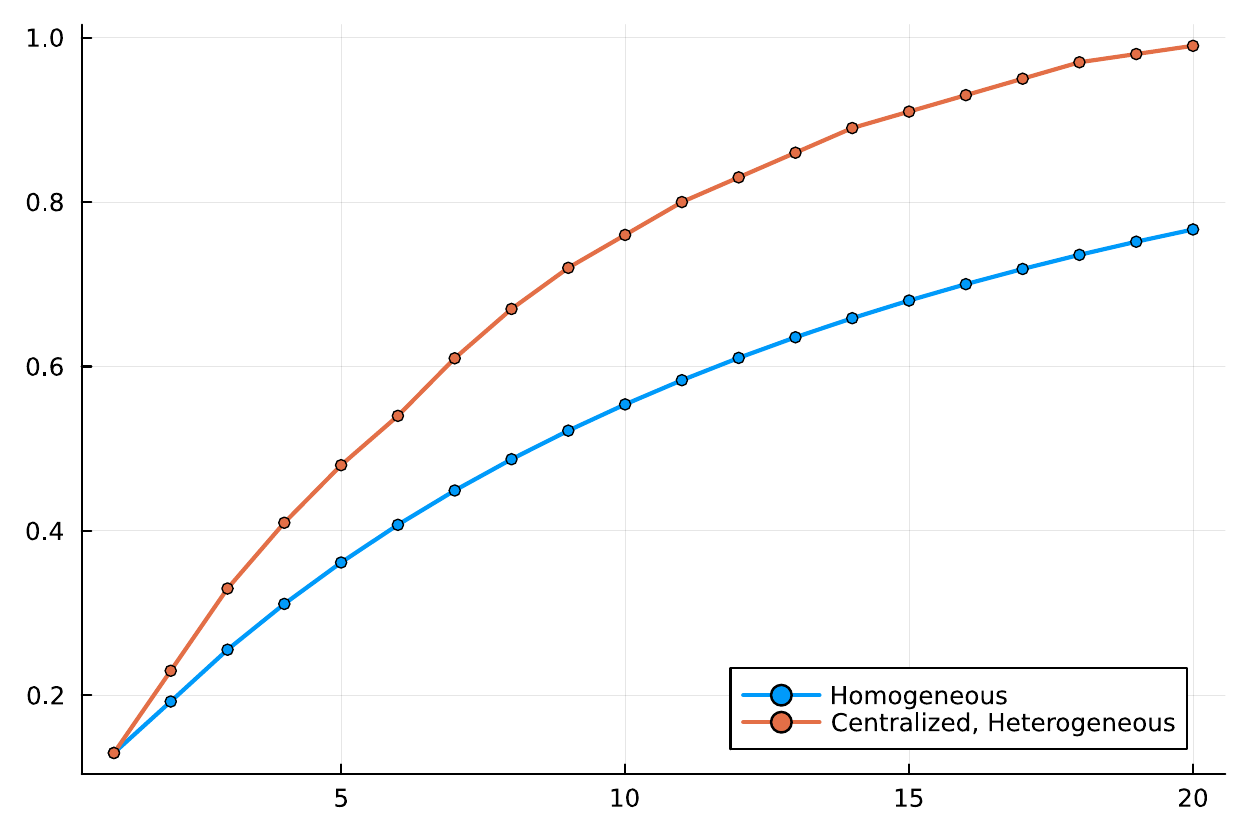}
        {\footnotesize (a) Average-case on \(G(100,0.05)\).}
        \label{fig:er-simplex-1}
    \end{minipage}
    \hfill
    \begin{minipage}[t]{0.48\linewidth}
    \vspace{0pt}
        \centering
        \includegraphics[width=\linewidth]{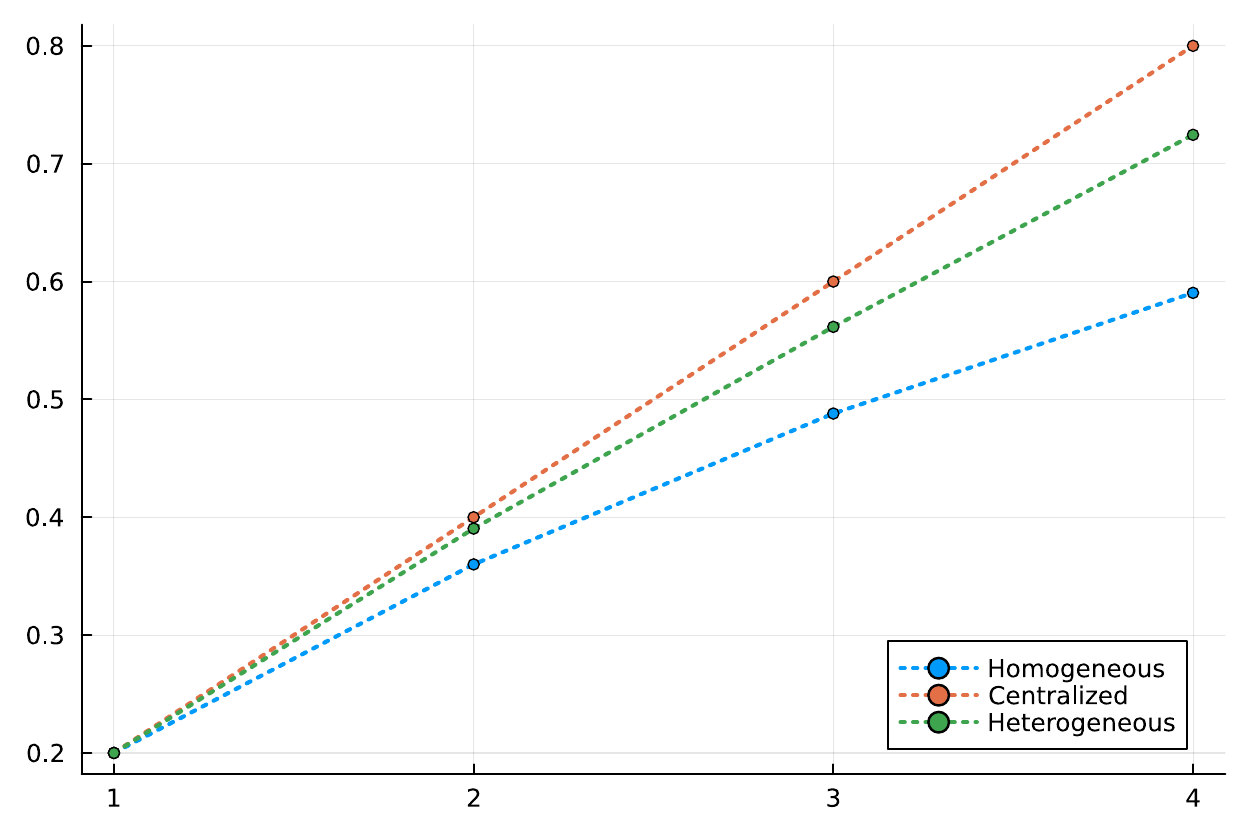}

        {\footnotesize (b) Worst-case on \(G(15,0.05)\).}
        \label{fig:er-simplex-2}
    \end{minipage}
    \vspace{0.5em}
    \begin{minipage}[t]{0.48\linewidth}
    \vspace{10pt}
        \centering
        \includegraphics[width=\linewidth]{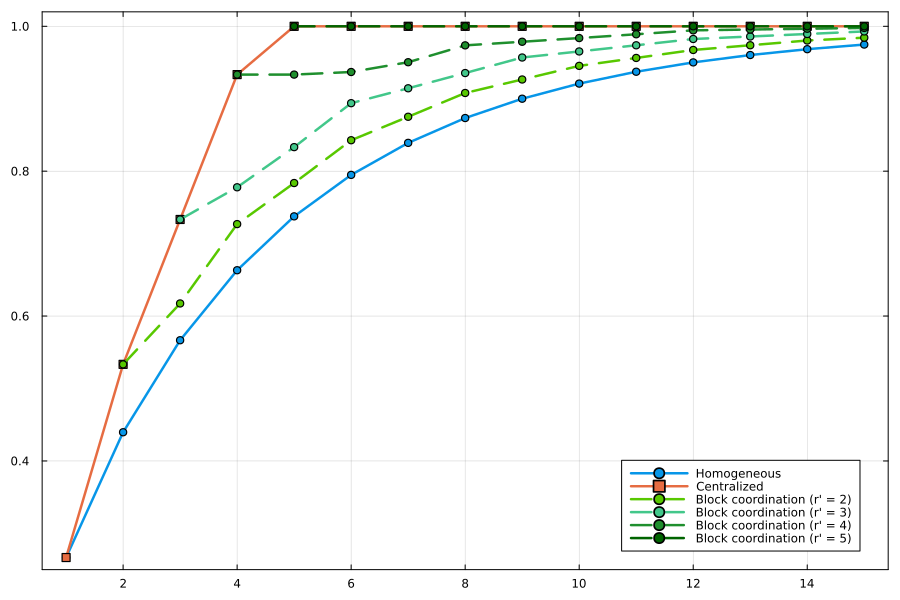}
        {\footnotesize (c) Average-case on \(G(15,0.05)\)}
        \label{fig:er-simplex-3}
    \end{minipage}
    \hfill
    \begin{minipage}[t]{0.48\linewidth}
    \vspace{10pt}
        \centering
        \includegraphics[width=\linewidth]{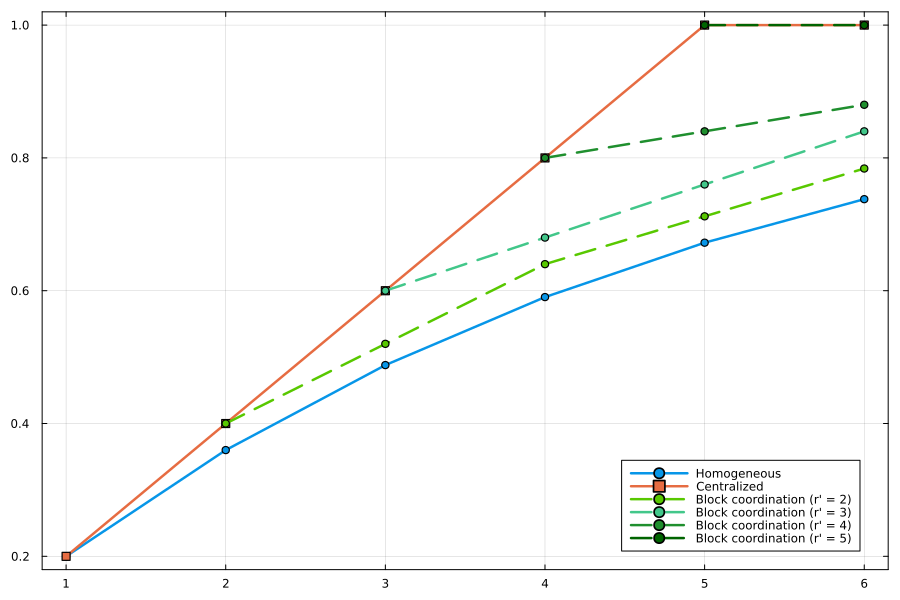}
        {\footnotesize (d) Worst-case on \(G(15,0.05)\).}
        \label{fig:er-simplex-4}
    \end{minipage}
    \caption{Monitoring performance on \text{Erd\H{o}s--R\'enyi} graphs \(G(n,p)\) with \(n\) vertices, where each edge is present independently with probability \(p\), conditioned on being connected, under simplex constraints, as a function of the number of robots \(r\). The x-axis shows \(r\), and the y-axis shows the optimal objective value.}
    \label{fig:er-simplex}
\end{figure}
We report numerical experiments for the unconstrained simplex baseline, on connected \text{Erd\H{o}s--R\'enyi} graphs \(G(n,p)\), interpreted as the limiting case of full-support simplex constraints \(\Delta_V^\eta\) as \(\eta\to 0\). Our goal is to compare the homogeneous, heterogeneous, centralized, and block-coordination models on representative graph instances, to quantify the gap between different models as the number of robots increases, and to assess the empirical performance of the block-coordination algorithm.

Throughout the experiments, average-case values are normalized by \(n\), so that all plotted quantities lie in \([0,1]\). Thus, the average-case curves represent the average monitoring probability per vertex, while the worst-case curves represent the monitoring probability of the least-covered vertex.

For the homogeneous model, we compute the average-case optimum by solving the convex formulation of Section~\ref{sec:hom} with MOSEK, and the worst-case optimum by solving the linear program of Section~\ref{sec:hom} with Gurobi. For the centralized model, the average-case optimum is computed through its equivalent Maximum Domination formulation under simplex constraints, whereas the worst-case optimum is obtained from the exact linear programming formulation over joint configurations.  For the heterogeneous model, the worst-case objective is evaluated by using the corresponding nonconvex quadratic formulations solved with Gurobi. Under simplex constraints, the heterogeneous and centralized average-case objectives match with Maximum Domination, and are therefore represented by a single curve shown in Figure~\ref{fig:er-simplex}(a), in contrast to the worst-case setting shown in Figure~\ref{fig:er-simplex}(b). Moreover, both figures illustrate the approximation guarantees, as the gap between the curves is consistent with the theoretical bounds. The average-case curves, shown in Figure~\ref{fig:er-simplex}(a), exhibit the diminishing-returns property as the number of robots increases.

For the block-coordinated algorithm, we solve the corresponding $r'$-robot problem for different block sizes and repeat the solution independently across groups. The resulting performance is shown in Figures~\ref{fig:er-simplex}(c) and \ref{fig:er-simplex}(d). Each value of $r'$ induces a separate curve, forming a continuum between the homogeneous case ($r'=1$) and the centralized solution ($r'=r$). As established theoretically, the curves interpolate smoothly between these two extremes.

\bibliographystyle{alpha}
\bibliography{lipics/ref.bib}

\appendix
\section{Appendix}
\subsection{Details of Section~\ref{sec:hom} (Independent Homogeneous Monitoring)}
\label{appendix:section:hom}

\begin{example}
\label{ex:f-hom-is-not-concave}
Consider \(P_4:v_3-v_2-v_1-v_4\). Fix a sufficiently small \(0<\eta<1/4\), and let \(\Delta_V^\eta\) be the line segment joining the two full-support distributions
\[
\mu^A_\eta=(\eta,1-3\eta,\eta,\eta)
\quad\text{and}\quad
\mu^B_\eta=(\eta,\eta,1/4,3/4-2\eta),
\]
written in the order \((v_1,v_2,v_3,v_4)\). Thus every \(\mu\in\Delta_V^\eta\) can be written as \(\mu^\eta_t=t\mu^A_\eta+(1-t)\mu^B_\eta\), for \(t\in[0,1]\), and every coordinate of \(\mu^\eta_t\) is strictly positive. Let \(F_{\mathrm{hom},\Delta_V^\eta}^*(r):=\max_{\mu\in\Delta_V^\eta}F_{\hom}(r,\mu)\).

To see that \(F_{\mathrm{hom},\Delta_V^\eta}^*(r)\) is not concave in \(r\), consider first the limiting values obtained as \(\eta\to0\). In the limit, the line segment is parameterized by \(\mu_t=(0,t,\frac{1-t}{4},\frac{3(1-t)}{4})\), and the closed-neighborhood masses are \(\cov(v_1,\mu_t)=\frac{3+t}{4}\), \(\cov(v_2,\mu_t)=\cov(v_3,\mu_t)=\frac{1+3t}{4}\), and \(\cov(v_4,\mu_t)=\frac{3(1-t)}{4}\). Hence \(F_{\hom}(r,\mu_t)=\sum_{v\in V}(1-(1-\cov(v,\mu_t))^r)\). For \(r=1\), this gives \(F_{\hom}(1,\mu_t)=2+t\), so the limiting optimum is \(3\). For \(r=2\), one obtains \(F_{\hom}(2,\mu_t)=-\frac74t^2+2t+\frac{11}{4}\), whose maximum over \([0,1]\) is attained at \(t=4/7\), giving the limiting value \(93/28\). For \(r=3\), one obtains \(F_{\hom}(3,\mu_t)=\frac{7}{16}t^3-3t^2+\frac{39}{16}t+\frac{25}{8}\), whose maximum over \([0,1]\) is attained at \(t=(16-\sqrt{165})/7\), giving the limiting value \((-687+165\sqrt{165})/392\). Numerically, these values are \(3\), \(93/28\approx3.3214\), and approximately \(3.6542\), respectively. Therefore the limiting values satisfy
\[
\frac{93}{28}
<
\frac12\left(3+\frac{-687+165\sqrt{165}}{392}\right).
\]
The objective is continuous in \((\eta,t)\), and the maximum over \(t\in[0,1]\) is therefore continuous as a function of \(\eta\). Since the above inequality is strict at \(\eta=0\), it remains true for all sufficiently small positive \(\eta\). Hence \(F_{\mathrm{hom},\Delta_V^\eta}^*(r)\) is not concave in \(r\), even though every distribution in \(\Delta_V^\eta\) has full support.
\end{example}

\begin{proof}[Proof of Proposition~\ref{thm:orbit-symmetric}]
For each \(\varphi\in\mathrm{Aut}(G)\) and \(u\in V\), define the relabeled distribution \(\mu^\varphi\) by \(\mu^\varphi(u):=\mu(\varphi^{-1}(u))\). Since \(\varphi\) preserves adjacency, it preserves closed neighborhoods, and therefore \(p_v(\mu^\varphi)=p_{\varphi^{-1}(v)}(\mu)\) for all \(v\in V\). It follows that both homogeneous objectives are invariant under relabeling: \(F_{\hhom}(\mu^\varphi)=F_{\hhom}(\mu)\) and \(W_{\hhom}(\mu^\varphi)=W_{\hhom}(\mu)\).

Now define the symmetrized distribution \(\bar\mu:=\frac{1}{|\mathrm{Aut}(G)|}\sum_{\varphi\in\mathrm{Aut}(G)}\mu^\varphi\). Clearly, \(\bar\mu\in\Delta_V\). Moreover, if \(\mu\) satisfies a uniform lower bound \(\mu(v)\ge\eta\) for all \(v\), then so does \(\bar\mu\). For the average-case objective, the map \(\mu\mapsto p_v(\mu)\) is linear and \(\phi_r\) is concave on \([0,1]\). Hence \(F_{\hhom}\) is concave in \(\mu\), and Jensen's inequality yields
\[
F_{\hhom}(\bar\mu)
\ge
\frac{1}{|\mathrm{Aut}(G)|}\sum_{\varphi\in\mathrm{Aut}(G)}F_{\hhom}(\mu^\varphi)
=
F_{\hhom}(\mu).
\]
For the worst-case objective, fix \(v\in V\). Again by linearity of \(p_v\) and concavity of \(\phi_r\),
\[
\phi_r(p_v(\bar\mu))
\ge
\frac{1}{|\mathrm{Aut}(G)|}\sum_{\varphi\in\mathrm{Aut}(G)}\phi_r(p_v(\mu^\varphi)).
\]
The right-hand side is an average of terms of the form \(\phi_r(p_w(\mu))\), and is therefore at least \(\min_{w\in V}\phi_r(p_w(\mu))=W_{\hhom}(\mu)\). Since this holds for every \(v\), taking the minimum over \(v\) gives \(W_{\hhom}(\bar\mu)\ge W_{\hhom}(\mu)\). Thus symmetrization does not decrease either objective.

It remains to show that \(\bar\mu\) is constant on each orbit. If \(u,u'\in O_i\), then there exists \(\psi\in\mathrm{Aut}(G)\) such that \(\psi(u)=u'\). Since \(\bar\mu\) is invariant under every automorphism, \(\bar\mu(u')=\bar\mu(\psi(u))=\bar\mu(u)\). Hence \(\bar\mu\) is constant on each orbit. Therefore, starting from any optimal solution and symmetrizing it, we obtain an optimal solution that is constant on every orbit.

Now fix orbits \(O_i,O_j\). If \(v,v'\in O_i\), then there exists \(\psi\in\mathrm{Aut}(G)\) with \(\psi(v)=v'\). Because automorphisms preserve adjacency and preserve each orbit set-wise, \(\psi\) induces a bijection between \(\N[v]\cap O_j\) and \(\N[v']\cap O_j\). Hence \(|\N[v]\cap O_j|=|\N[v']\cap O_j|\), so \(m_{ij}\) is well-defined. Finally, if \(\mu\) is orbit-symmetric with values \(x_1,\dots,x_k\), then for any \(v\in O_i\),
\[
p_v(\mu)
=
\sum_{j=1}^k \sum_{u\in \N[v]\cap O_j}\mu(u)
=
\sum_{j=1}^k |\N[v]\cap O_j|x_j
=
\sum_{j=1}^k m_{ij}x_j
=
p_i(x).
\]
Thus every vertex in the same orbit has the same closed-neighborhood mass, and the reduced formulas follow.
\end{proof}

\begin{proposition}[Complete bipartite graphs]
\label{prop:Kab}
Let \(G=K_{a,b}\) with \(a\neq b\). Under the simplex baseline, \(F_{\hhom}^*(r)=1+\max(a,b)\) if \(r=1\), and
\[
F_{\hhom}^*(r)
=
a+b-
\frac{(a-1)^r(b-1)^r}
{\left(a(b-1)^{\frac{r}{r-1}}+b(a-1)^{\frac{r}{r-1}}\right)^{r-1}}
\]
if \(r>1\). Under full-support simplex constraints \(\mu(v)\ge\eta\) for all \(v\), the corresponding optimum converges to the displayed value as \(\eta\to0\).
\end{proposition}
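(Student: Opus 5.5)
The plan is to reduce the problem to a two-variable convex program using Proposition~\ref{thm:orbit-symmetric}, and then solve that program in closed form. Since $a\neq b$, no automorphism of $K_{a,b}$ can exchange the two sides. Within each side, however, all permutations are automorphisms. Hence the vertex-orbits are exactly the two sides $A$ (of size $a$) and $B$ (of size $b$).

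By Proposition~\ref{thm:orbit-symmetric}, we may take $\mu\equiv x$ on $A$ and $\mu\equiv y$ on $B$, with $x,y\ge0$ and $ax+by=1$. For $v\in A$ we have $\N[v]=\{v\}\cup B$, so $p_A=x+by$. Symmetrically, $p_B=y+ax$. Using the normalization, these become $p_A=1-(a-1)x$ and $p_B=1-(b-1)y$. Hence
\[
F_{\hhom}(x,y)=a+b-a\bigl((a-1)x\bigr)^r-b\bigl((b-1)y\bigr)^r ,
\]
and the problem becomes minimizing $C(x,y):=a(a-1)^r x^r+b(b-1)^r y^r$ subject to $ax+by=1$ and $x,y\ge0$.

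\emph{Case $r=1$.} The cost $C$ is linear along the segment, so the minimum is attained at an endpoint. The endpoint $x=1/a$ gives cost $a-1$, and the endpoint $y=1/b$ gives cost $b-1$. Therefore $F^*_{\hhom}(1)=a+b-\min(a-1,b-1)=1+\max(a,b)$.

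\emph{Case $r>1$.} Write $s:=r/(r-1)$. If $a=1$ (or $b=1$), all mass can be placed on that singleton side at zero cost. The displayed formula then also gives $a+b$, since its numerator vanishes, so this degenerate case is handled separately and agrees.

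Otherwise $a,b\ge2$, and $C$ is strictly convex, so the KKT conditions characterize the minimizer. The Lagrange condition $r\,a(a-1)^r x^{r-1}=\lambda a$ gives $x\propto (a-1)^{-s}$, and similarly $y\propto (b-1)^{-s}$. Both are positive, so the minimizer is interior. Normalizing gives $x=c(a-1)^{-s}$ and $y=c(b-1)^{-s}$, where $c=\bigl(a(a-1)^{-s}+b(b-1)^{-s}\bigr)^{-1}$.

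Substituting, and using $r-rs=-s$, gives $C=c^r\bigl(a(a-1)^{-s}+b(b-1)^{-s}\bigr)=c^{r-1}$. Multiplying the numerator and denominator by $\bigl((a-1)(b-1)\bigr)^{s(r-1)}=\bigl((a-1)(b-1)\bigr)^r$ turns this into
\[
C=\frac{(a-1)^r(b-1)^r}{\bigl(a(b-1)^{s}+b(a-1)^{s}\bigr)^{r-1}},
\]
which is the claimed formula after subtracting from $a+b$.

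\emph{Limit $\eta\to0$.} I would use the perturbation bound recalled before Theorem~\ref{thm:inapp-f-het}. Every $\mu\in\Delta_V^\eta$ has the form $\eta\mathbf 1+(1-n\eta)\bar\mu$ with $\bar\mu\in\Delta_V$, and this correspondence is a bijection onto $\Delta_V$. Replacing $\mu$ by $\bar\mu$ changes $F_{\hhom}(r,\cdot)$ by at most $rn^2\eta$. Hence the optimum over $\Delta_V^\eta$ differs from $F^*_{\hhom}(r)$ by at most $rn^2\eta$, which tends to $0$.

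The main obstacle is bookkeeping rather than conceptual. The key points to check are the exponent algebra $r(1-s)=-s$ and the rescaling that yields the displayed form, together with the boundary cases $a=1$ or $b=1$, where the Lagrange expression degenerates.
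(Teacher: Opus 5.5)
Your proposal is correct and follows essentially the paper's route: you reduce to the two orbits $A,B$ via Proposition~\ref{thm:orbit-symmetric} and then minimize the resulting strictly convex separable cost on a segment. The paper parametrizes by the total mass $t=ax$ on $A$ and solves $f'(t)=0$, which is exactly your Lagrange condition, and the final algebra coincides. The differences are minor: for $\eta\to0$ you use the uniform perturbation bound $rn^2\eta$ instead of the paper's feasible interval $[a\eta,1-b\eta]$ for $t$, and you treat $a=1$ or $b=1$ explicitly, where the paper's first-order derivation degenerates (its ratio $t^*/(1-t^*)$ divides by $(a-1)^r=0$ when $a=1$).
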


\begin{proof}
Let \((A,B)\) be the bipartition of \(K_{a,b}\), with \(|A|=a\) and \(|B|=b\). Since \(a\neq b\), there are two vertex orbits, namely \(A\) and \(B\). By Proposition~\ref{thm:orbit-symmetric}, it is enough to consider orbit-symmetric distributions. Let \(t\) be the total mass assigned to \(A\), so that \(1-t\) is the total mass assigned to \(B\). Then the closed-neighborhood masses are \(p_A=1-t+t/a\) and \(p_B=t+(1-t)/b\). Consequently, \(F_{\hom}(t)=a\,\phi_r(1-t+t/a)+b\,\phi_r(t+(1-t)/b)\).

If \(r=1\), then \(\phi_1(p)=p\), and \(F_{\hom}(t)=1+a+(b-a)t\). Hence the maximum over \(t\in[0,1]\) is attained at \(t=1\) if \(b>a\), and at \(t=0\) if \(a>b\). Thus the limiting value is \(F_{\hom}^*(K_{a,b})=1+\max(a,b)\). Under the full-support constraints \(\mu(v)\ge\eta\), the feasible interval for \(t\) is \([a\eta,1-b\eta]\), so the optimum converges to the same value as \(\eta\to0\).

Assume now that \(r>1\). Using \(\phi_r(p)=1-(1-p)^r\), we obtain
\[
\begin{aligned}
F_{\hom}(t)
&=
a+b-a\left(\frac{a-1}{a}t\right)^r-b\left(\frac{b-1}{b}(1-t)\right)^r\\
&=
a+b-\frac{(a-1)^r}{a^{r-1}}t^r-\frac{(b-1)^r}{b^{r-1}}(1-t)^r.
\end{aligned}
\]
Thus maximizing \(F_{\hom}(t)\) is equivalent to minimizing \(f(t):=\frac{(a-1)^r}{a^{r-1}}t^r+\frac{(b-1)^r}{b^{r-1}}(1-t)^r\), for \(0\le t\le1\). Since
\[
f''(t)
=
r(r-1)\left(
\frac{(a-1)^r}{a^{r-1}}t^{r-2}
+
\frac{(b-1)^r}{b^{r-1}}(1-t)^{r-2}
\right)>0,
\]
the function \(f\) is strictly convex on \([0,1]\). Hence it has a unique minimizer, characterized by \(f'(t)=0\), that is, \(\frac{(a-1)^r}{a^{r-1}}t^{r-1}=\frac{(b-1)^r}{b^{r-1}}(1-t)^{r-1}\). Equivalently,
\[
\frac{t}{1-t}
=
\left(
\frac{(b-1)^r/b^{r-1}}{(a-1)^r/a^{r-1}}
\right)^{1/(r-1)}
=
\frac{a(b-1)^{\frac{r}{r-1}}}{b(a-1)^{\frac{r}{r-1}}}.
\]
Therefore \(t^*=\frac{a(b-1)^{\frac{r}{r-1}}}{a(b-1)^{\frac{r}{r-1}}+b(a-1)^{\frac{r}{r-1}}}\). Substituting \(t=t^*\) into \(f(t)\) gives
\[
f(t^*)
=
\frac{(a-1)^r(b-1)^r}
{\left(a(b-1)^{\frac{r}{r-1}}+b(a-1)^{\frac{r}{r-1}}\right)^{r-1}}.
\]
This proves the displayed formula in the simplex baseline. Under the full-support constraints, the feasible interval for \(t\) is \([a\eta,1-b\eta]\), which converges to \([0,1]\) as \(\eta\to0\); therefore the full-support optimum converges to the same value.
\end{proof}

The limiting formula in Proposition~\ref{prop:Kab} also has the expected symmetric specialization when \(a=b\): in this case, the graph is vertex-transitive, the uniform distribution is optimal, and \(F_{\hom}^*(r)=2a\,\phi_r((a+1)/(2a))\).

\subsection{Details of Section~\ref{sec:het} (Independent Heterogeneous Monitoring)}
\label{appendix:section:het}

\begin{example}[Tightness of Theorem~\ref{thm:het-avg-alpha}]
\label{ex:tight-connected}
Fix \(r\ge2\), \(M\ge2\), and \(0<\eta<1/(rM)\). Construct a connected graph \(G_{r,M}=(V,E)\) as follows. For each \(j\in[r]\), let \(C_j\) be a clique of size \(M\), and choose a distinguished vertex \(c_j\in C_j\). Add the edges \(\{c_j,c_{j+1}\}\) for \(j=1,\dots,r-1\). Thus the cliques \(C_1,\dots,C_r\) are linked by a path through the distinguished vertices, \(V=\bigcup_{j=1}^r C_j\), and \(|V|=rM\).

We consider the full-support simplex constraint \(\Delta_V^\eta:=\{\mu\in\Delta_V:\mu(v)\ge\eta \text{ for all }v\in V\}\). For the heterogeneous problem, let \(\mu_j=\eta\mathbf 1+(1-|V|\eta)e_{c_j}\) for each \(j\in[r]\). Each \(\mu_j\) has full support. The corresponding limiting deterministic placement at \(c_1,\dots,c_r\) covers every vertex with probability \(1\), and replacing it by the full-support tuple changes each vertex monitoring probability by at most \(r|V|\eta\). Therefore \(F_{\het}^*(r)\ge |V|(1-r|V|\eta)=rM(1-r^2M\eta)\).

We now bound the homogeneous optimum. Let \(\mu\in \Delta_V^\eta\) be any feasible homogeneous distribution, and for each \(j\in[r]\), define \(x_j:=\mu(C_j)\). Then \(x_j\ge0\) and \(\sum_{j=1}^r x_j=1\). For every non-distinguished vertex \(v\in C_j\setminus\{c_j\}\), we have \(\N[v]=C_j\), and therefore its single-robot coverage mass is \(\cov(v,\mu)=x_j\). Hence its \(r\)-robot coverage probability is \(\phi_r(x_j)\), where \(\phi_r(x):=1-(1-x)^r\). Summing over all non-distinguished vertices gives
\[
\sum_{j=1}^r\sum_{v\in C_j\setminus\{c_j\}}\phi_r(x_j)
=
(M-1)\sum_{j=1}^r\phi_r(x_j).
\]
Since \(\phi_r\) is concave on \([0,1]\), Jensen's inequality yields \(\frac1r\sum_{j=1}^r\phi_r(x_j)\le \phi_r(\frac1r\sum_{j=1}^r x_j)=\phi_r(1/r)\). Therefore \((M-1)\sum_{j=1}^r\phi_r(x_j)\le(M-1)r\phi_r(1/r)\). The remaining \(r\) distinguished vertices contribute at most \(r\), since each coverage probability is at most \(1\). It follows that \(F_{\hom}^*(r)\le(M-1)r\phi_r(1/r)+r\).

Combining the previous bounds, we obtain
\[
\frac{F_{\hom}^*(r)}{F_{\het}^*(r)}
\le
\frac{(M-1)r\alpha_r+r}{rM(1-r^2M\eta)}
=
\frac{\alpha_r+\frac{1-\alpha_r}{M}}{1-r^2M\eta}.
\]
Consequently, for every fixed \(r\) and every \(\varepsilon>0\), choosing \(M\) sufficiently large and then choosing \(\eta>0\) sufficiently small gives a connected graph \(G_{r,M}\) such that \(F_{\hom}^*(r)/F_{\het}^*(r)\le\alpha_r+\varepsilon\). Thus the approximation factor \(\alpha_r=1-(1-\frac1r)^r\) is tight on connected graphs up to an arbitrarily small additive error under full-support simplex constraints.
\end{example}

\begin{figure}[t]
\centering
\begin{tikzpicture}[
    scale=1,
    every node/.style={font=\small},
    distinguished/.style={circle, fill=black, inner sep=1.8pt},
    ordinary/.style={circle, draw, fill=white, inner sep=1.4pt}
]

\node[distinguished,label=below:$c_1$] (c1) at (0,-0.72) {};
\node[ordinary] (a11) at (-0.85,-0.10) {};
\node[ordinary] (a12) at (-0.52,0.78) {};
\node[ordinary] (a13) at (0.52,0.78) {};
\node[ordinary] (a14) at (0.85,-0.10) {};
\node at (0,0.18) (d1) {$\cdots$};

\draw[gray]
(c1)--(a11) (c1)--(a12) (c1)--(a13) (c1)--(a14)
(a11)--(a12) (a11)--(a13) (a11)--(a14)
(a12)--(a13) (a12)--(a14)
(a13)--(a14)
(a12)--(d1) (a13)--(d1) (c1)--(d1);

\node at (0,1.25) {$C_1$};

\node[distinguished,label=below:$c_2$] (c2) at (4.2,-0.72) {};
\node[ordinary] (a21) at (3.35,-0.10) {};
\node[ordinary] (a22) at (3.68,0.78) {};
\node[ordinary] (a23) at (4.72,0.78) {};
\node[ordinary] (a24) at (5.05,-0.10) {};
\node at (4.2,0.18) (d2) {$\cdots$};

\draw[gray]
(c2)--(a21) (c2)--(a22) (c2)--(a23) (c2)--(a24)
(a21)--(a22) (a21)--(a23) (a21)--(a24)
(a22)--(a23) (a22)--(a24)
(a23)--(a24)
(a22)--(d2) (a23)--(d2) (c2)--(d2);

\node at (4.2,1.25) {$C_2$};

\node at (7.2,0) {$\cdots$};

\node[distinguished,label=below:$c_r$] (cr) at (10.4,-0.72) {};
\node[ordinary] (ar1) at (9.55,-0.10) {};
\node[ordinary] (ar2) at (9.88,0.78) {};
\node[ordinary] (ar3) at (10.92,0.78) {};
\node[ordinary] (ar4) at (11.25,-0.10) {};
\node at (10.4,0.18) (dr) {$\cdots$};

\draw[gray]
(cr)--(ar1) (cr)--(ar2) (cr)--(ar3) (cr)--(ar4)
(ar1)--(ar2) (ar1)--(ar3) (ar1)--(ar4)
(ar2)--(ar3) (ar2)--(ar4)
(ar3)--(ar4)
(ar2)--(dr) (ar3)--(dr) (cr)--(dr);

\node at (10.4,1.25) {$C_r$};

\draw[thick] (c1)--(c2);
\draw[thick] (c2)--(6.1,-0.72);
\draw[thick] (8.3,-0.72)--(cr);

\end{tikzpicture}
\caption{Graph \(G_{r,M}\): the cliques \(C_1,\dots,C_r\) are linked in a path through the vertices \(c_1,\dots,c_r\).}
\label{fig:tight-connected-example}
\end{figure}

\begin{example}[Tightness of Corollary~\ref{cor:het-worst-alpha}]
\label{ex:tight-worst}
We consider the same graph \(G_{r,M}\) as in Example~\ref{ex:tight-connected}, under the full-support simplex constraint \(\Delta_V^\eta=\{\mu\in\Delta_V:\mu(v)\ge\eta \text{ for all }v\in V\}\). In the heterogeneous model, using the full-support distributions \(\mu_j=\eta\mathbf 1+(1-|V|\eta)e_{c_j}\) gives \(W_{\het}^*(r)\ge1-r|V|\eta\).

Let \(\mu\in \Delta_V^\eta\) be any homogeneous solution, and define \(x_j:=\mu(C_j)\). Since \(\sum_{j=1}^r x_j=1\), there exists \(j\) with \(x_j\le1/r\). For every vertex \(v\in C_j\setminus\{c_j\}\), we have \(\N[v]=C_j\), and hence \(v\) is covered with probability at most \(1-(1-1/r)^r=\alpha_r\). Therefore \(W_{\hom}^*(r)\le\alpha_r\). It follows that \(W_{\hom}^*(r)/W_{\het}^*(r)\le\alpha_r/(1-r|V|\eta)\). For every fixed \(r\) and every \(\varepsilon>0\), choosing \(\eta>0\) sufficiently small gives \(W_{\hom}^*(r)/W_{\het}^*(r)\le\alpha_r+\varepsilon\). Thus the factor is tight up to an arbitrarily small additive error under full-support simplex constraints.
\end{example}

\begin{example}
\label{exp:concavity-het-worst}
The decreasing-ratio property does not hold in general for the heterogeneous worst-case objective under full-support constraints. Consider the tree on the vertex set \(\{a,b,c,x,y,z\}\) with edge set \(\{\{x,a\},\{a,b\},\{b,c\},\{b,y\},\{c,z\}\}\). Fix \(0<\eta<1/6\), and let \(\Delta_V^\eta\) be the polytope of full-support distributions of the form \(\mu=\eta\mathbf 1+(1-6\eta)\bar\mu\), where \(\bar\mu\) is a distribution on \(\{a,b,c\}\). Equivalently, every feasible \(\mu\in \Delta_V^\eta\) assigns mass exactly \(\eta\) to each of \(x,y,z\), and the remaining mass is distributed over \(a,b,c\) with each of these vertices also receiving at least \(\eta\).

We prove the claim by comparing the full-support instance with its limit as \(\eta\to0\). In the limiting instance, for \(r=3\), placing the three robots at \(a,b,c\) covers every vertex with probability \(1\). Hence the limiting value is \(W_{\het,0}^*(3)=1\). For \(r=2\), let \(\bar\mu_1,\bar\mu_2\) be any two limiting feasible distributions on \(\{a,b,c\}\), and write \(p_u:=\bar\mu_1(u)\) and \(q_u:=\bar\mu_2(u)\) for \(u\in\{a,b,c\}\). The three leaves are covered only from their adjacent support vertices, so their coverage probabilities are \(h_u:=1-(1-p_u)(1-q_u)\), for \(u\in\{a,b,c\}\). Since \(\sum_u h_u=\sum_u(p_u+q_u-p_uq_u)=2-\sum_u p_uq_u\le2\), the minimum of the three values \(h_a,h_b,h_c\) cannot reach \(2/3\). Indeed, if all three were at least \(2/3\), then we would have \(h_a=h_b=h_c=2/3\) and \(\sum_u p_uq_u=0\). Thus \(p_uq_u=0\) for every \(u\), and \(h_u=2/3\) would imply \(p_u+q_u=2/3\) for every \(u\). Hence each \(p_u\) would belong to \(\{0,2/3\}\), contradicting \(p_a+p_b+p_c=1\). Therefore \(W_{\het,0}^*(2)<2/3\).

Consequently \(W_{\het,0}^*(2)/2<1/3=W_{\het,0}^*(3)/3\). The optimal values for \(\Delta_V^\eta\) are continuous in \(\eta\), because \(\Delta_V^\eta\) is an affine perturbation of the limiting feasible set and the objective is continuous. Since the inequality is strict in the limit, it persists for all sufficiently small positive \(\eta\). Hence the decreasing-ratio property fails for the heterogeneous worst-case objective under full-support constraints.
\end{example}

\subsection{Details of Section~\ref{sec:centralized} (Centralized Monitoring)}
\label{appendix:section:centralized}

\begin{proof}[Proof of Theorem~\ref{thm:cent-worst-inapprox}]
It suffices to consider \(0<\delta<1/e\). Let
\(
\rho:=1-\frac1e+\delta
\)
and
\(
\beta:=1-\frac1e+\frac{\delta}{2}.
\)
We use the same reduction from the gap version of
\textsc{Max \(k\)-Cover} as in
Theorem~\ref{thm:inapp-het-worst}. Recall that, in the constructed graph, every deterministic configuration of \(k\) robots monitors at most \(\beta|U|\) element-vertices in the NO case.

Set \(r:=k\), let \(n:=|V|\), and choose a rational number \(\eta\) such that
\(
0<\eta<\frac1n
\)
and
\(
\rho(1-rn\eta)>\beta.
\)
Finally, set \(\mathcal P:=\Delta_V^\eta\).

In the YES case, \(k\) sets cover all elements of \(U\). Place the robots on the corresponding set-vertices and replace each deterministic marginal \(e_{a_{j_i}}\) with
\(
\mu_i:=\eta\mathbf 1+(1-n\eta)e_{a_{j_i}}.
\)
The resulting product distribution is a feasible centralized strategy. The original deterministic configuration monitors every element-vertex and, because the set-vertices form a clique, every set-vertex. The full-support perturbation decreases the monitoring probability of any vertex by at most \(rn\eta\). Therefore,
\(
W_{\cent}^*(r)\ge1-rn\eta.
\)

Consider now the NO case, and let \(\pi\in\Delta_{V^r}\) be any feasible centralized strategy. For each \(e\in U\), define
\(
q_e
:=
\Pr_{s\sim\pi}
\bigl[b_e\text{ is monitored in }s\bigr].
\)
Since every deterministic configuration in the support of \(\pi\) monitors at most \(\beta|U|\) element-vertices,
\[
\sum_{e\in U}q_e
=
\mathbb E_{s\sim\pi}
\bigl[
  \#\{\text{element-vertices monitored in }s\}
\bigr]
\le
\beta|U|.
\]
Hence some \(e\in U\) satisfies \(q_e\le\beta\), and therefore
\(
W_{\cent}(r,\pi)\le q_e\le\beta.
\)
Since \(\pi\) was arbitrary, \(W_{\cent}^*(r)\le\beta\) in the NO case.

If a polynomial-time \(\rho\)-approximation algorithm existed, it would return a solution of value at least \(\rho(1-rn\eta)>\beta\) in the YES case, whereas every feasible solution would have value at most \(\beta\) in the NO case. This would distinguish the YES and NO cases of the gap \textsc{Max \(k\)-Cover} instance, contradicting \(\mathrm{P}\ne\mathrm{NP}\).
\end{proof}

\begin{example}
\label{ex:cent-alpha-tight}
For every integer \(r\ge1\), let \(n:=3r\), and let \(G=C_n\) be the cycle on \(V=\{1,2,\dots,n\}\), with indices taken modulo \(n\). Fix \(0<\eta<1/n\), and consider the full-support simplex constraint \(\Delta_V^\eta=\{\mu\in\Delta_V:\mu(v)\ge\eta \text{ for all }v\in V\}\). Define \(M:=\{2,5,8,\dots,3r-1\}\), that is, every third vertex around the cycle.

For the centralized solution, replace the deterministic placement on \(M\) by full-support marginals \(\mu_m=\eta\mathbf 1+(1-n\eta)e_m\), for \(m\in M\). The limiting deterministic placement has closed neighborhoods \(\N[2],\N[5],\dots,\N[3r-1]\), which form a partition of \(V\) into \(r\) disjoint blocks of size \(3\), and hence covers every vertex with probability \(1\). The full-support perturbation changes each vertex monitoring probability by at most \(rn\eta\). Therefore \(F_{\cent}^*(r)\ge n(1-rn\eta)\) and \(W_{\cent}^*(r)\ge1-rn\eta\).

\begin{figure}[h]
\centering
\begin{tikzpicture}[
    scale=1,
    every node/.style={font=\small},
    ordinary/.style={circle, draw, fill=white, inner sep=1.4pt},
    marked/.style={circle, fill=black, inner sep=1.8pt}
]

\node[ordinary,label={}] (v1) at (0,0) {};
\node[marked,label={}]   (v2) at (1,0) {};
\node[ordinary,label={}] (v3) at (2,0) {};
\draw[thick] (v1)--(v2)--(v3);
\draw[thick, rounded corners=6pt] (-0.5,-0.45) rectangle (2.5,0.45);
\node at (1,0.8) {$\{1,2,3\}$};

\node[ordinary,label={}] (v4) at (4,0) {};
\node[marked,label={}]   (v5) at (5,0) {};
\node[ordinary,label={}] (v6) at (6,0) {};
\draw[thick] (v4)--(v5)--(v6);
\draw[thick, rounded corners=6pt] (3.5,-0.45) rectangle (6.5,0.45);
\node at (5,0.8) {$\{4,5,6\}$};

\node at (8,0) {$\cdots$};

\node[ordinary,label={}] (vr1) at (10,0) {};
\node[marked,label={}]   (vr2) at (11,0) {};
\node[ordinary,label={}] (vr3) at (12,0) {};
\draw[thick] (vr1)--(vr2)--(vr3);
\draw[thick, rounded corners=6pt] (9.5,-0.45) rectangle (12.5,0.45);
\node at (11,0.8) {$\{3r-2,3r-1,3r\}$};

\draw[thick] (v3)--(v4);
\draw[thick] (v6)--(7.3,0);
\draw[thick] (8.7,0)--(vr1);
\draw[thick, bend left=20] (vr3) to (v1);

\end{tikzpicture}
\caption{Cycle \(C_{3r}\) grouped into blocks of three consecutive vertices. The black vertices are the elements of \(M\). Their closed neighborhoods are the blocks, which partition the cycle.}
\label{fig:cent-alpha-tight}
\end{figure}

Since \(C_n\) is vertex-transitive, the uniform distribution is a maximizer for both homogeneous objectives. The uniform distribution is feasible for \(\Delta_V^\eta\) because \(\eta<1/n\), and symmetrization preserves the lower bounds. The closed neighborhood of each vertex has size \(3\), so \(F_{\hom}^*(r)=n\phi_r(3/n)=n\phi_r(1/r)=\alpha_r n\), and similarly \(W_{\hom}^*(r)=\phi_r(3/n)=\phi_r(1/r)=\alpha_r\). Combining these bounds gives
\[
\frac{F_{\hom}^*(r)}{F_{\cent}^*(r)}
\le
\frac{\alpha_r}{1-rn\eta},
\qquad
\frac{W_{\hom}^*(r)}{W_{\cent}^*(r)}
\le
\frac{\alpha_r}{1-rn\eta}.
\]
For every \(\varepsilon>0\), choosing \(\eta>0\) sufficiently small gives both ratios at most \(\alpha_r+\varepsilon\). Thus the \(\alpha_r\) comparison is tight up to an arbitrarily small additive error under full-support simplex constraints.
\end{example}

\end{document}